\newtheorem{theorem}{\bf{Theorem}}
\newtheorem{assump}{\bf{Assumption}}
\newtheorem{lemma}{\bf{Lemma}}
\newtheorem{remark}{\bf{Remark}}
\newlength{\halfpagewidth}
\def\BibTeX{{\rm B\kern-.05em{\sc i\kern-.025em b}\kern-.08em
T\kern-.1667em\lower.7ex\hbox{E}\kern-.125emX}}
\begin{document}

\renewcommand{\figurename}{Fig.}

\title{  Delay-Compensated Distributed PDE Control of  Traffic  with Connected/Automated Vehicles}
\author{Jie~Qi,
~\IEEEmembership{Member,~IEEE,}  Shurong Mo,
~Miroslav~Krstic~\IEEEmembership{~Fellow,~IEEE} 
\thanks{Jie Qi is with the College of Information Science and Technology,  Engineering Research Center of Digitized Textile and Fashion Technology Ministry of Education, Donghua University, Shanghai, China, 201620
(e-mail: jieqi@dhu.edu.cn). }
\thanks{Shurong Mo is  with the College of Information Science and Technology, Donghua University, Shanghai, China, 201620 (e-mail: shurong\_mo@mail.dhu.edu.cn).}
\thanks{M. Krstic is with
the Department of Mechanical Aerospace Engineering,
University of California, San Diego, CA 92093-0411, USA 
(e-mail: krstic@ucsd.edu)}            
\thanks{This work was partly supported by the National Natural Science Foundation
of China (62173084, 61773112) and State Key Laboratory of Synthetical Automation for Process Industries. }
}

\maketitle

\begin{abstract}
We develop an input  delay-compensating design for stabilization of   an  Aw-Rascle-Zhang (ARZ) traffic model in congested regime  which is governed by  a  $2\times 2$  first-order hyperbolic nonlinear PDE. The traffic flow consists  of both adaptive cruise control-equipped (ACC-equipped) and manually-driven vehicles.                  
The
control input is the  time gap  of ACC-equipped
and connected vehicles, which is subject to  delays  resulting from communication lag.   For the linearized system, a  novel three-branch bakcstepping transformation with explicit kernel functions  is introduced  to compensate the input delay. 
The transformation is proved  to be bounded, continuous and invertible,  with explicit inverse transformation derived.   Based on the transformation, we obtain the explicit predictor-feedback controller.   
We prove exponential stability of the closed-loop  system with the delay compensator in $L_2$ norm. 
The performance improvement of the closed-loop system under the proposed controller is illustrated in simulation. 

\end{abstract}
\begin{IEEEkeywords}
Delayed distributed input, PDE backstepping, Traffic flow, Predictor-feedback, First-order hyperbolic PDE, Adaptive cruise control (ACC)
\end{IEEEkeywords}

\section{Introduction}\label{intrduce}
\IEEEPARstart{T}{raffic} congestion  has become a severe worldwide social issue. Stop-and-go traffic is  a common phenomenon in   congested highway traffic\cite{treiber2014traffic}, which results from a small perturbation,  such as a delay in  a driver's response,  propagating backward in traffic flow\cite{chen2001causes,ngoduy2013instability}. The stop-and-go  oscillation in  traffic flow leads to poorer driving experience, higher fuel consumption and a high accident risk.
One promising way to reduce the  oscillation in the congested regime is to  develop control design tools that exploit the capabilities of automated and connected vehicles, such as  manipulation of the time gap setting of  ACC-equipped and connected vehicles \cite{diakaki2015overview,9018188}.

PDE-based   models have established a realistic description of the traffic dynamics \cite{delis2015macroscopic,delle2017traffic,aw2000resurrection,kolb2017capacity,stern2018dissipation}
by capturing the temporal and spatial dynamics of the traffic density and the traffic speed along the considered highway stretch. 
Boundary control \cite{karafyllis2018feedback,yu2018traffic,yu2019traffic,zhang2017necessary,zhang2019pi,yu2018stabilization}, and in-domain manipulation \cite{darbha1999intelligent}, \cite{yi2006macroscopic} are both developed to stabilize  the  traffic flow.  Traffic state estimations are also considered in different situations  \cite{yu2019traffic}, \cite{bekiaris2017highway}, \cite{fountoulakis2017highway}. Due to the information transmission or (and)  reaction time of drivers \cite{burger2019derivation}, there are usually time delays in the traffic flow control process, see, e.g.,  \cite{jin2014dynamics} for delays in feedback.  Therefore, it is necessary to study the traffic flow control involving delays. However, few papers on   traffic control consider  delay compensation. 

We consider an  Aw-Rascle-Zhang (ARZ) traffic model in congested regime comprising   manually-driven vehicles and  ACC vehicles,  which are subject to input delays. The model is composed of  $2\times 2$ first-order hyperbolic PDEs whose states are traffic density and  traffic speed. In order to eliminate the stop-and-go waves, a  distributed actuation using the  time-gap of the  ACC vehicles is employed for control. If  the actuator delay is present and sufficiently large, the state feedback control proposed in \cite{9018188} may become destabilizing.

Relevant advances have been achieved towards the controlling of diffusion-driven distributed parameter systems with delays. 
Early predictor-based boundary controller is  developed using the PDE backstepping method in  \cite{krstic2009control}, which stabilizes  an unstable reaction-diffusion PDE with arbitrarily long input delay. 
The aforementioned method has been extended to a 3-D formation control problem to compensate for the effect of potential input delays \cite{qi2019control}. 
For state delays in  an unstable reaction-diffusion PDE, a boundary feedback has been developed using  backstepping in \cite{hashimoto2016stabilization}.    
Recently, a delay-compensator  was designed for   an unstable reaction-diffusion PDE via distributed actuation in \cite{qi2019stabilization}. 
Alternatively, series control design approaches based on Lyapunov-Krasovskii functions have been proposed in   \cite{selivanov2018delayed} and \cite{selivanov2019delayed}, respectively. 
A boundary feedback to compensate a constant input delay  for an unstable reaction-diffusion PDE has been developed in \cite{prieur2018feedback}, using spectral reduced-order models which approximate the infinite-dimensional system by a finite-dimensional one.
A similar method is used for in-domain stabilization of reaction-diffusion PDEs  with time-and spatially-varying delays  in \cite{lhachemi2021robustness}.   

Most studies on  delay-compensator design are for parabolic PDEs. 
There are fewer results on hyperbolic PDEs with delays. An example of   first-order hyperbolic PDE is given in \cite{sano2019boundary}, where a backstepping boundary control  is designed for compensating the input delay. State delay and measurement delay is addressed in \cite{qi2021output}.
In the context of robustness analysis, a delay-robust boundary feedback has been proposed for a  $2\times 2$ linear hyperbolic PDEs  in \cite{auriol2018delay}. 
Both in \cite{bekiaris2018compensation} and  \cite{yu2020bilateral}, the authors  introduce  an equivalent delay system representation to first-order hyperbolic PDEs, which transforms the coupled PDE-ODE systems to ODE systems with input delays. 

The overall challenge addressed in this work is the design of an in-domain  delay-compensator for a $2\times 2$  hyperbolic PDE by employing the  PDE backstepping method while dealing with a dynamic boundary condition which results in a $2\times 2$  hyperbolic PDE-ODE cascade system. The usual Volterra integral transformation cannot be applied directly to  control design because the resulting kernel equation is unsolvable. Therefore, we  propose a three-branch affine  Volterra transformation  which contains  the state of the ODE, namely, the traffic speed at the outlet boundary. The transformation with explicit kernel functions has a different form in each  of the three intervals. Although there are three intervals, the transformation is proved to be continuous in  its domain. Further, we derive the explicit inverse transformation which is   bounded and continuous too.       Based on the transformation, we obtain an explicit delay-compensator,  composed of the feedback of the states and the historical actuator state.  The compensator  stabilizes the traffic flow with input delay via manipulation of the time gap of the ACC-equipped vehicles in-domain.         
We prove the closed-loop system is   $L_2$ exponentially stable, by establishing the $L_2$ stability of the target system and the norm equivalence between the target system and the original system based on the fact that the transformation is invertible.

The structure of the paper is as follows. 
In Section \ref{control}, we introduce the model. In Section \ref{section-control}, we design the delay-compensator using the backstepping method.   Section \ref{stable} presents the proof of the $L_2$ norm exponential stability of the close-loop system. 
The effectiveness of the proposed delay-compensated controller is illustrated with numerical simulations in Section \ref{simulation}. 
The conclusion is given in Section \ref{conclusion}.

\textbf{Notation:} Throughout the paper, we adopt the following notation to define $L_2$-norm for $f(\cdot)\in L_2(0,L)$, $g(\cdot,\cdot)\in L_2((0,L)\times(0,D))$ :
\begin{align*}
        &\rVert f\rVert^2_{L_2}=\int_{0}^{L}|f(x)|^2\mathrm{d}x, ~~ \rVert g\rVert^2_{L_2}=\int_{0}^{L}\int_0^D|g(x,s)|^2\mathrm{d}s\mathrm{d}x.
\end{align*}

\section{Model Description}\label{control}
\subsection{ARZ Traffic model with mixed vehicles}
\label{control-general}

We consider the  ARZ traffic model of highway  introduced in  \cite{9018188} but an input delay which acts on  adaptive cruise control-equipped (ACC-equipped)  vehicles is addressed.  
The state variables of the model  are the traffic density $\breve{\rho}(x,t)$ and the traffic speed $\breve{v}(x,t),$ both defined in domain $(x,t)\in[0, L]\times\mathbb{R}^+ $ where $t$ is time,  $x$ is the spatial variable denoting the position on the concerned highway. Constant $L>0 $ denotes the length of the concerned highway stretch. 
Define $\breve{v}(x,t) \in (0,v_{\rm{f}}] $ with $v_{\rm{f}}$ being free-flow speed.  We consider a mixed traffic, consisting of both  manual and ACC-equipped vehicles with the  percentage of ACC-equipped vehicles with
respect to total vehicles being $\alpha$. Let   $\breve h_{\mathrm{acc}}(x,t)$ denote the  time-gap of the   ACC-equipped vehicle at $x$ from its leading vehicle, which is the control input because a vehicle with ACC   can  automatically adjust its  speed to maintain a desired  distance (or, say, a time-gap) from vehicles ahead.
Due to the lag of information transmission from the control center to each individual ACC vehicle, there often exists input delay. 
Expressed with equations, the traffic flow control system we consider is:   
\begin{align}
    \breve{\rho}_t(x,t)=&-\breve{\rho}_x(x,t)\breve{v}(x,t)-\breve{\rho}(x,t)\breve{v}_x(x,t),\label{traffic_1}\\
    \breve{v}_t(x,t)=&-\breve{\rho}(x,t)\frac{\partial V_{\rm{mix}}(\breve{\rho}(x,t),\breve{h}_{\rm{acc}}(x,t-D))}{\partial \breve{\rho}}\breve{v}_x(x,t)\nonumber\\
    &-\breve{v}(x,t)\breve{v}_x(x,t)\nonumber\\
    &+\frac{V_{\rm{mix}}(\breve{\rho}(x,t),\breve{h}_{\rm{acc}}(x,t-D))-\breve{v}(x,t)}{\tau_{\rm{mix}}(\alpha)},\label{traffic_2}\\
    \breve{\rho}(0,t)    =&q_{\rm{in}}/\breve{v}(0,t),\label{traffic_3}\\
    \breve{v}_t(L,t)=&\frac{V_{\rm{mix}}(\breve{\rho}(L,t),\breve{h}_{\rm{acc}}(L,t-D))-\breve{v}(L,t)}{\tau_{\rm{mix}}(\alpha)},\label{traffic_4}
\end{align}
where $D$ is the delay on the domain-wide actuated time gap input, and 
\begin{align}
\tau_{\rm{mix}}(\alpha) = &\frac{1}{\frac{\alpha}{\tau_{\rm{acc}}}+\frac{1-\alpha}{\tau_{\rm{m}}}}, ~0\leq \alpha \leq 1 
\end{align}  
is  time constant for  a mixture traffic  which depends on both  time constant $\tau_{\rm{acc}}$ of ACC vehicles and time constant  $\tau_{\rm{m}}$ of manual vehicles. 
\begin{figure}[h]
        \centering
        \includegraphics[width=0.6\linewidth]{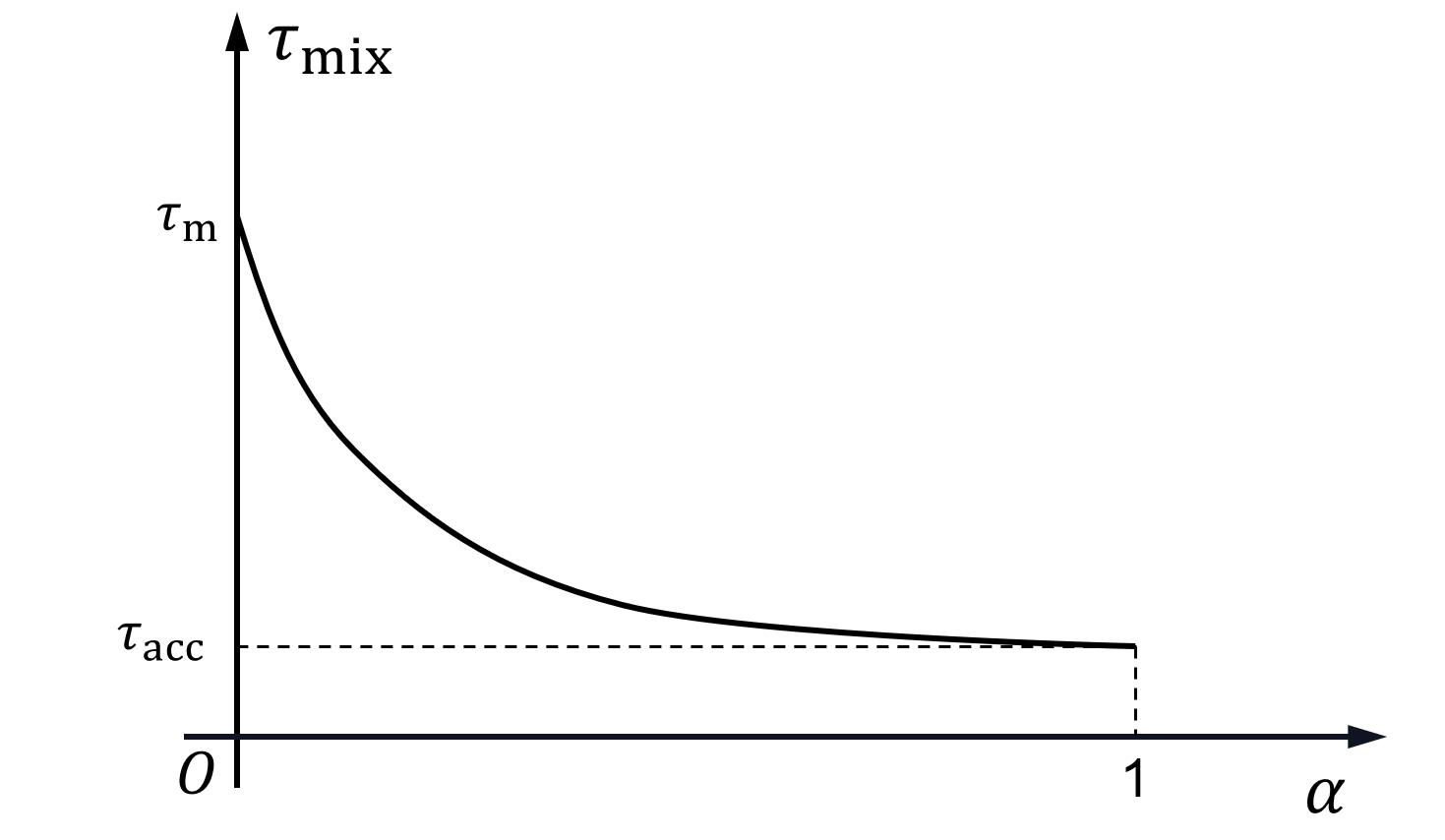}
        \caption{$\tau_{\rm mix}$ varies with $\alpha$.}
        \label{fig:tau_mix}
\end{figure}  
$ \tau_{\rm{mix}}$ is also a function of   $\alpha$, the  percentage of ACC vehicles with
respect to total vehicles. Fig.  \ref{fig:tau_mix} shows the relations of these two parameters.
The equilibrium speed  profile of the mixed flow $V_{\rm{mix}}$ is expressed as 
\begin{align}
V_{\rm{mix}}(\breve{\rho},\breve{h}_{\rm{acc}})=&\frac{1}{\breve{h}_{\rm{mix}}(\breve{h}_{\rm{acc}})}\left (\frac{1}{\breve{\rho}}-l \right ) , \label{v_mix}
\end{align}
and the  mixed time gap is defined as 
\begin{align}
\breve{h}_{\rm{mix}}(\breve{h}_{\rm{acc}})=&\frac{\alpha+(1-\alpha)\frac{\tau_{\rm{acc}}}{\tau_{\rm{m}}}}
{\alpha+(1-\alpha)\frac{\tau_{\rm{acc}}}{\tau_{\rm{m}}}\frac{\breve{h}_{\rm{acc}}}{\breve h_{\rm{m}}}}\breve{h}_{\rm{acc}}.
\label{hmix}
\end{align}
In the above model, $l > 0$ denotes the average effective vehicle length,  $q_{\rm{in}} > 0$ is a constant external inflow,  
and  $\breve h_{\rm{m}} > 0 $ is the time gap of manual vehicles. 

Equation \eqref{traffic_1} means that  the traffic flow observe  the mass conservation law \cite{karafyllis2018feedback}.       Equation \eqref{traffic_2} is a momentum equation inspired by the speed dynamics of ARZ model \cite{zhang2002non} for     ACC-equipped and  manual mixed flow, where $\alpha\in[0,1]$ is the percentage of ACC vehicles with respect to total vehicles. In  \eqref{traffic_2},  
$V_{\mathrm{mix}}(\breve{\rho},\breve{h}_{\rm{acc}})=Q(\breve{\rho},\breve{h}_{\rm{acc}})/\breve{\rho}$ is the equilibrium speed  profile of  a mixed flow of ACC vehicles and manual vehicles, where $Q(\breve \rho, \breve h_{\rm{acc}})$ is the traffic flow given by the fundamental diagram  shown in Fig. \ref{fig:fundamental}.    
Define $\breve \rho_c$ as the lowest density value of  the  mixed time gap $\breve h_{\rm{mix}}$, for which the traffic is congested. 
Let $\breve{h}_{\rm{min}}$ and $\breve{h}_{\rm{max}}$ be the minimum and maximum possible time gap, namely, $ \breve{h}_{\rm{min}} \leq \min\{\breve{h}_{\rm{acc}},\breve{h}_{\rm{m}}\} $ and $ \breve{h}_{\rm{max}} \ge \max\{\breve{h}_{\rm{acc}},\breve{h}_{\rm{m}}\}$.  Define  $\breve{\rho}_{\rm{min}}$ and $\breve{\rho}_{\rm{max}}$ as the lowest  density values of  congested traffic that correspond to minimum and maximum possible time gaps $\breve{h}_{\rm{min}}$ and $\breve{h}_{\rm{max}}$, respectively.  From  Fig. \ref{fig:fundamental}.
we  find  $Q_{\breve h_{\rm{min}}}(\breve \rho_{\rm{min}})=v_{\rm{f}} \breve \rho_{\rm{min}} $ is the maximal flow at  given  time gap $\breve h_{\rm{mix}}\in  [\breve h_{\rm{min}}, 
~\breve h_{\rm{max}}]$. If  $\breve \rho \geq \breve{\rho}_{\rm{min}}$, implying that the traffic is in congested state, we have
\begin{align}
Q_{\breve h_{\rm{mix}}}(\breve \rho)=(1-l\breve{\rho})\frac{v_{\rm{f}}}{1/\breve \rho_{\rm{c}}-l}.
\end{align}
Combined with \eqref{v_mix}, we get
\begin{align}
Q_{\breve h_{\rm{mix}}}(\breve \rho,\breve h_{\rm{acc}})=\breve \rho V_{\rm{mix}}=(1-l\breve{\rho})&\frac{1}{\breve{h}_{\rm{mix}}(\breve{h}_{\rm{acc}})},
\end{align}
which gives
\begin{align}
\breve{h}_{\rm{mix}}(\breve{h}_{\rm{acc}})=\frac{1/\breve \rho_{\rm{c}}-l}{v_{\rm{f}}}.
\end{align}

\begin{figure}[h]
\centering
\includegraphics[width=0.7\linewidth]{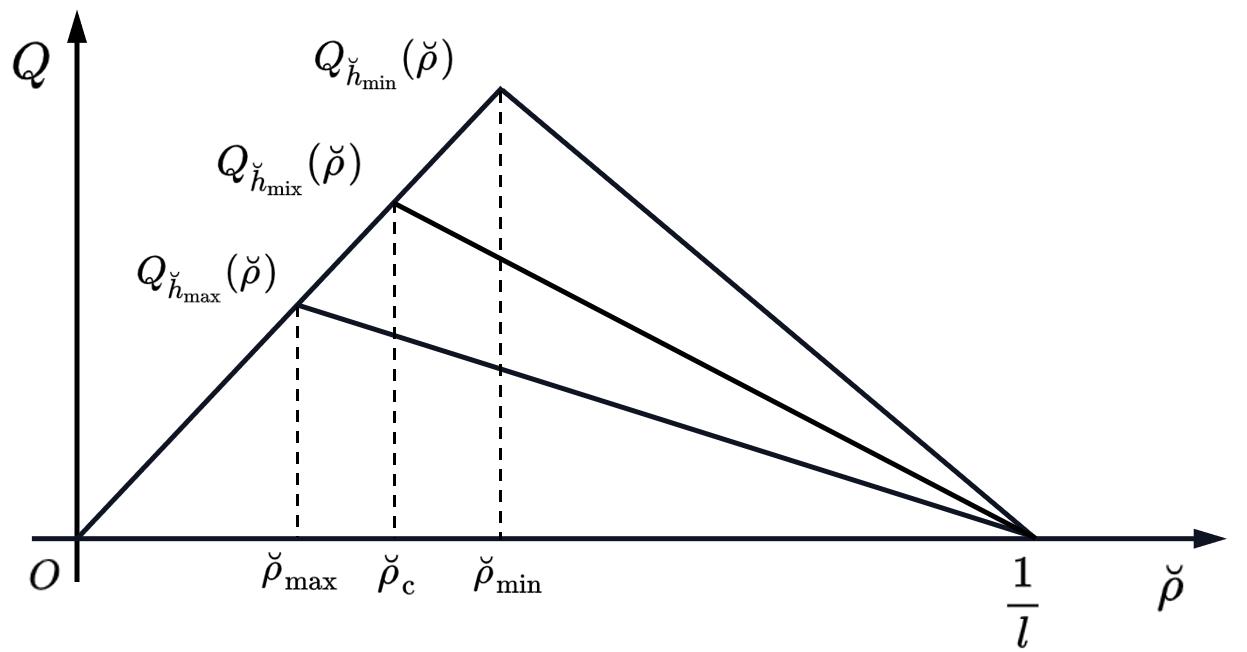}
\caption{Fundamental diagrams for $\breve{h}_{\rm{mix}}\in[\breve{h}_{\rm{min}},\breve{h}_{\rm{max}}]$.}
\label{fig:fundamental}
\end{figure}

In other words,  for each input $\breve h_{\rm{mix}}$, there is a corresponding $\breve \rho_{\rm{c}}$, such as $\breve  \rho_{\rm{min}}=\frac{1}{\breve h_{\rm{min}}v_{\rm{f}}+l}$, which guarantees  $0 < V_{\rm{mix}}(\breve{\rho},\breve{h}_{\rm{acc}}) \leq v_{\rm{f}}$. Fig. \ref{fig:fundamental} shows that possible flow $Q_{\breve h{\rm{mix}}}(\breve \rho)$ for every $\breve h_{\rm{mix}} \in [\breve h_{\rm{min}}, ~\breve h_{\rm{max}}]$ lies between    $Q_{\breve h{\rm{max}}}(\breve \rho)$  and  $Q_{\breve h{\rm{min}}}(\breve \rho)$. In  congested  regime, we define the feasible set of the state and input variables:   $\Phi = \{(\breve{v},\breve{\rho},
~\breve{h}_{\rm{acc}})\in\mathbb{R}^3,~0\leq\breve{v}\leq v_{\rm{f}},
~\breve \rho_{\rm{min}}\leq \breve{\rho}\leq 1/l,
~\breve{h}_{\rm{min}}\le\breve{h}_{\rm{acc}}\le\breve{h}_{\rm{max}}\}$, for all $\alpha\in[0,1]$.  

Using an analysis method similar to the one employed  in \cite{zhang2002non}, one can find that system  \eqref{traffic_1}-\eqref{traffic_4} is anisotropic.

\subsection{Linearization of the ARZ model}        
Consider the  same equilibria of system \eqref{traffic_1}-\eqref{traffic_4} as in \cite{9018188}, dictated by a constant inflow  $ q_{\rm{in}} $ and a constant, steady-state time gap  $ \bar{h}_{\rm{acc}} $ for ACC vehicles, which results in the following  steady-state traffic speed and  density:
\begin{align}\label{steady-state}
\bar{v}=\frac{l}{(1/q_{\rm{in}})-\bar{h}_{\rm{mix}}},~~~\bar{\rho}=\frac{1}{l+\bar h_{\rm{mix}}\bar v}, 
\end{align}
with mixed time gap 
\begin{align}\label{equ-hmix}
\bar{h} _{\rm{mix}}
=&\frac{\alpha+(1-\alpha)
\frac{\tau_{\rm{acc}}}{\tau_{\rm{m}}}}{\alpha+(1-\alpha)
\frac{\tau_{\rm{acc}}}{\tau_{\rm{m}}}\frac{\bar{h}_{\rm{acc}}}
{\breve {h}_{\rm{m}}}}\bar{h}_{\rm{acc}}.
\end{align}
We define the error variables 
\begin{align*}
\rho(x,t)&=\breve{\rho}(x,t)-\bar{\rho}, \\
v(x,t)&=\breve{v}(x,t)-\bar{v},\\  
{h}_{\rm{acc}}(x,t)&=\breve{h}_{\rm{acc}}(x,t)-\bar{h}_{\rm{acc}}.
\end{align*} 
Linearizing (\ref{traffic_1})-(\ref{traffic_4}) around the equilibrium \eqref{steady-state} and \eqref{equ-hmix}, we get  
\begin{align}
\label{equ-zt}
\rho_t(x,t)&=-\bar v\rho_{x}(x,t)-\bar{\rho}v_x(x,t), \\
\label{equ-vt}
v_t(x,t)&=\frac{l}{\bar{h}_{\rm{mix}}}v_{x}(x,t)-\frac{1}{\bar \rho^2 \tau_{\rm{mix}}\bar h_{\rm{mix}}}\rho(x,t)-\frac{1}{\tau_{\rm{mix}}}v(x,t) \nonumber\\
&~~~-\frac{\alpha(1-l\bar\rho)}{\tau_{\rm{acc}}\bar h_{\rm{acc}}^2\bar\rho}h_{\rm{acc}}(x,t-D),\\
\label{equ-bnd-zt}
\rho(0,t)&=-\frac{\bar\rho}{\bar v}v(0,t),\\ 
\label{equ-bnd-vt}
v_t(L,t)&=-\frac{1}{\bar \rho^2 \tau_{\rm{mix}}\bar h_{\rm{mix}}}\rho(L,t)-\frac{1}{\tau_{\rm{mix}}}v(L,t) \nonumber\\
&-\frac{\alpha(1-l\bar\rho)}{\tau_{\rm{acc}}\bar h_{\rm{acc}}^2\bar\rho}h_{\rm{acc}}(L,t-D).
\end{align}      
Introducing a change of variable 
\begin{align}\label{var-change}
{z}(x,t)=\mathrm{e}^{\frac{x}{\bar{v}\tau_{\rm{mix}}}}({\rho}(x,t)+\bar{h}_{\rm{mix}}\bar{\rho}^2{v}(x,t)), \end{align}
and denoting the  input  by  $u(x,t)=h_{\rm{acc}}(x,t)$, 
we obtain a  $2\times2$  first-order hyperbolic linear PDE system in a diagonal form  
\begin{align}
\label{equ-z0}
z_t(x,t)&=-c_1z_{x}(x,t)-\mathrm{e}^{c_2 x}c_3u(x,t-D),\\
\label{equ-v0}
v_t(x,t)&=c_4 v_{x}(x,t)-c_5\mathrm{e}^{-c_2x}z(x,t)-c_6u(x,t-D),\\
\label{equ-bnd-z0}
z(0,t)&= -c_7v(0,t),\\ 
\label{equ-bnd-v0}
v_t(L,t)&=-c_5\mathrm{e}^{-c_2L}z(L,t)-c_6u(L,t-D),
\\\label{equ-initial}
z(x,0)&=z_0(x),
~~v(x,0)=v_0(x),\\\label{equ-ctrlmemory}
u(x,s-D)&=\vartheta_0(x,s),  ~~~~s\in[0,D],
\end{align}
where  
$c_1=\bar v$,
$c_2=\frac{1}{\tau_{\rm{mix}}\bar v}$, 
$c_3=\frac{\alpha\bar h_{\rm{mix}}\bar\rho^2}{\tau_{\rm{acc}}\bar h_{\rm{acc}}^2}((1/\bar \rho)-l)$, 
$c_4=\frac{l}{\bar{h}_{\rm{mix}}}$,
$c_5=\frac{1}{\bar \rho^2 \tau_{\rm{mix}}\bar h_{\rm{mix}}}$,
$c_6=\frac{\alpha}{\tau_{\rm{acc}}\bar h_{\rm{acc}}^2} ((1/\bar \rho)-l)$, 
$c_7=\frac{l\bar \rho^2}{\bar v} $, and one can easily find the equivalence relation of the coefficients: $ c_2c_4=c_5c_7 $ and  $c_1c_2=\frac{c_3c_5}{c_6}$. The   initial conditions is defined in   \eqref{equ-initial}.     The initial actuator state, i.e., the control memory in $[0,D],$ is denoted by $\vartheta_0(x,s)\in L_2([0,L]\times [0,D])$ in \eqref{equ-ctrlmemory}.

Before  we proceed, we  make the following assumption on the coefficients:

\begin{assump}\label{assump-1}
Assume $(c_{1}+c_4)D <L$, which gives $ (c_{1}+c_4)s<L$  for all $ 0\le s \le D $.
\end{assump}
\begin{remark}
\rm{The assumption is  reasonable for the traffic application, because  the length  of the concerned highway stretch $L$ is usually far greater than the other parameters, such that   $(c_{1}+c_4)D$ (delay $D$ times  the sum of  steady speed $c_1=\bar v$  and vehicle length $l$ over mixed time gap $\bar h_{\rm{mix}}$,  $c_4=\frac{l}{\bar{h}_{\rm{mix}}}$) much less than $L$.} 
\end{remark}
Our goal is to find a control $u(x,t)$ that exponentially stabilizes the linearized system \eqref{equ-z0}-\eqref{equ-ctrlmemory} with input delay. In the next section, we present the control design.

\section{Predictor-Feedback Control Design}\label{section-control}
 Before we apply the PDE backstepping approach to the linearized model \eqref{equ-z0}-\eqref{equ-ctrlmemory} with input delay, we first introduce a 2D transport PDE representation of the delay on the 1D-distributed input: 
\begin{align}
\label{equ-z}
z_t(x,t)=&-c_1z_{x}(x,t)-\mathrm{e}^{{c}_2x}c_3\psi(x,0,t),\\
\label{equ-v}
v_t(x,t)=&c_4v_{x}(x,t)-c_{5}\mathrm{e}^{-{c}_2x}z(x,t)-c_6\psi(x,0,t),\\
\label{equ-bnd-z}
z(0 ,t)=&-{c}_7v(0,t),\\ 
\label{equ-bnd-v}
v_{t}(L,t)=&\dot -c_5\mathrm{e}^{-c_2L}z(L,t)-c_6\psi(L,0,t),\\
\label{equ-psi}
\psi_t(x,s,t)=&\psi_s(x,s,t),\\ 
\label{equ-bnd-psi}
\psi(x,D,t)=& u(x,t),\\
\psi(x,s,0)=&\vartheta_0(x,s).\label{equ-initial-psi}
\end{align}
From the last three equations, we have 
\begin{align}\label{equ-r-psi-u}
\psi(x,s,t)=\begin{cases}u(x,t+s-D) & s+t > D \\
\vartheta(x,t+s) & s+t \leq D \\
\end{cases}.
\end{align}
       
\subsection{Backstepping transformation} \label{Backstepping}
To design a stabilizing controller for the PDE-ODE system   \eqref{equ-z}-\eqref{equ-bnd-psi} one has to understand first both its open-loop structure and its actuation structure. Physically speaking, there are three transport processes. Two of the transport processes are in 1D and one is in 2D. One 1D transport is in the $x$ direction at speed $c_1$ in \eqref{equ-z}. The other 1D transport is in the $-x$ direction at speed $c_4$ in \eqref{equ-v}. The two 1D transports create a $(z,v)$ PDE feedback loop, which may be unstable. The 2D transport is in the $-s$ direction at unity speed (and stagnant in the $x$ direction) in \eqref{equ-psi}.

The $z(x,t)$ term in \eqref{equ-v} creates a potentially destabilizing feedback loop but is matched by the actuated term $\psi(x,0,t)$. Likewise, the $z(L,t)$ term in \eqref{equ-bnd-v} creates a potentially destabilizing feedback loop but is matched by the actuated term $\psi(L,0,t)$. These two observations motivate the choice of a target system as 
\begin{align}
    \label{equ-z1}
    {z}_t(x,t)=&-c_1{z}_{x}(x,t)+c_1c_2z(x,t)-c_3\mathrm{e}^{c_2x }
    \beta(x,0,t)\nonumber\\
    &-\frac{kc_3}{c_6}\mathrm{e}^{c_2x }v(x,t),\\
    \label{equ-v1}
    {v}_t(x,t)=&c_4 {v}_{x}(x,t)-c_{6}\beta(x,0,t)-kv(x,t),\\
    \label{equ-bnd-z1}
    {z}(0 ,t)=&-c_7v(0,t),\\ 
    \label{equ-bnd-v1}
    v_t(L,t)=&-kv(L,t)-c_6\beta(L,0,t),\\\label{equ-beta}
    \beta_t(x,s,t)=&\beta_s(x,s,t),\\ 
    \label{equ-bnd-beta}
    \beta(x,D,t)=&0,
\end{align}
where $k>0$  is a free parameter which can be used to  set the desired rate of stability.   

\begin{remark}\label{PDE-ODE}
\rm{The dynamic equation  \eqref{equ-bnd-v} and \eqref{equ-bnd-v1} on the boundary are  not  standard  boundary conditions, which  implies that the  hyperbolic PDE \eqref{equ-z}-\eqref{equ-bnd-psi} and \eqref{equ-z1}--\eqref{equ-bnd-beta} are both preceded by  an ODE whose state is $v(L,t)$. Introducing an additional one-dimensional state $X(t)\in \mathbb{R}$ and  $Y(t)\in \mathbb{R}$  for \eqref{equ-bnd-v} and \eqref{equ-bnd-v1},
respectively,  one can rewrite   \eqref{equ-bnd-v} as:
\begin{align}\label{equ-X}
\dot X=&-c_5\mathrm{e}^{-c_2L}z(L,t)-c_6\psi(L,0,t),
\\
v(L,t)=&X(t).
\end{align}
and \eqref{equ-bnd-v1} as:
\begin{align}\label{equ-Y}
\dot Y=&-kY(L,t)-c_6\beta(L,0,t),
\\
v(L,t)=&Y(t).
\end{align}}
\end{remark} 
Since the additional ODEs are relatively simple, we directly use the boundary values $v(L,t)$ and $v_t(L,t)$ in the following computation for notational brevity.
Introduce the following transformation
\begin{align}
    &\beta(x, s, t)= \psi(x, s, t)+\int_{0}^{L} \gamma(x, s, y) {z}(y, t) d y\nonumber\\
    &+\int_{0}^{L} \eta(x, s, y) {v}(y, t) d y+\mathfrak{r}(x,s){v}(L,t) \nonumber\\
    &+\int_{0}^{L} \int_{0}^{s}G(x,s,y,r) \psi(y, r, t) d r d y,\label{trans-beta1}
\end{align}
where   $\gamma(\cdot,\cdot,\cdot)$, $\eta(\cdot,\cdot,\cdot)$, $G(x,s,y,r)$ and $\mathfrak{r}(\cdot,\cdot)$  are kernel functions defined on $\mathcal{T}_1=\{[0,L]\times [0,D]\times [0,L]\}$,  $\mathcal{T}_{2}=\{(x,s,y,r)|[0, L] \times[0, D] \times[0, L] \times[0, s]\}$ and  $\mathcal{T}_{3}=\{[0, L] \times[0, D]\}$, respectively. 
Mapping   original system \eqref{equ-z}--\eqref{equ-initial-psi} to  target system \eqref{equ-z1}--\eqref{equ-bnd-beta} by transformation \eqref{trans-beta1}, one can get the following equations of $\mathfrak{r}$:
\begin{align}
\mathfrak{r}_s(x,s)=&c_4\eta(x,s,L),\label{equ-r}\\
        \mathfrak{r}(x,0)=&0,
\end{align}
which gives
\begin{align}
\mathfrak{r}(x,s)=&\int_{0}^{s}c_4\eta(x,\theta,L)d\theta,\label{kernel-r}
\end{align}
and equations of $\gamma$ and $\eta$, respectively:
\begin{align}
\gamma_{s}(x,s,y)-c_1\gamma_{y}(x,s,y)
    =&-c_{5}\mathrm{e}^{-{c}_2y}\eta(x,s,y),
    \label{equ-gamma}\\
    \eta_s(x,s,y)+c_4\eta_y(x,s,y)=&~0,\label{equ-eta}\\
    \gamma(x,s,L)=&-\frac{c_{5}}{c_1}\mathrm{e}^{-{c}_2L}\mathfrak{r}(x,s),
    \label{bnd-gamma1}\\
    \gamma(x,0,y)=&-\frac{c_5}{c_6}\mathrm{e}^{-{c}_2x}\delta(x-y),
    \label{bnd-gamma2}\\
    \eta(x,s,0)=&-\frac{c_1c_7}{c_4}\gamma(x,s,0),\label{bnd-eta1}\\
    \eta(x,0,y)=&\frac{k}{c_6}\delta(x-y).
    \label{bnd-eta2}
\end{align}
\begin{figure}[h]
    \centering
    \includegraphics[width=0.8\linewidth]{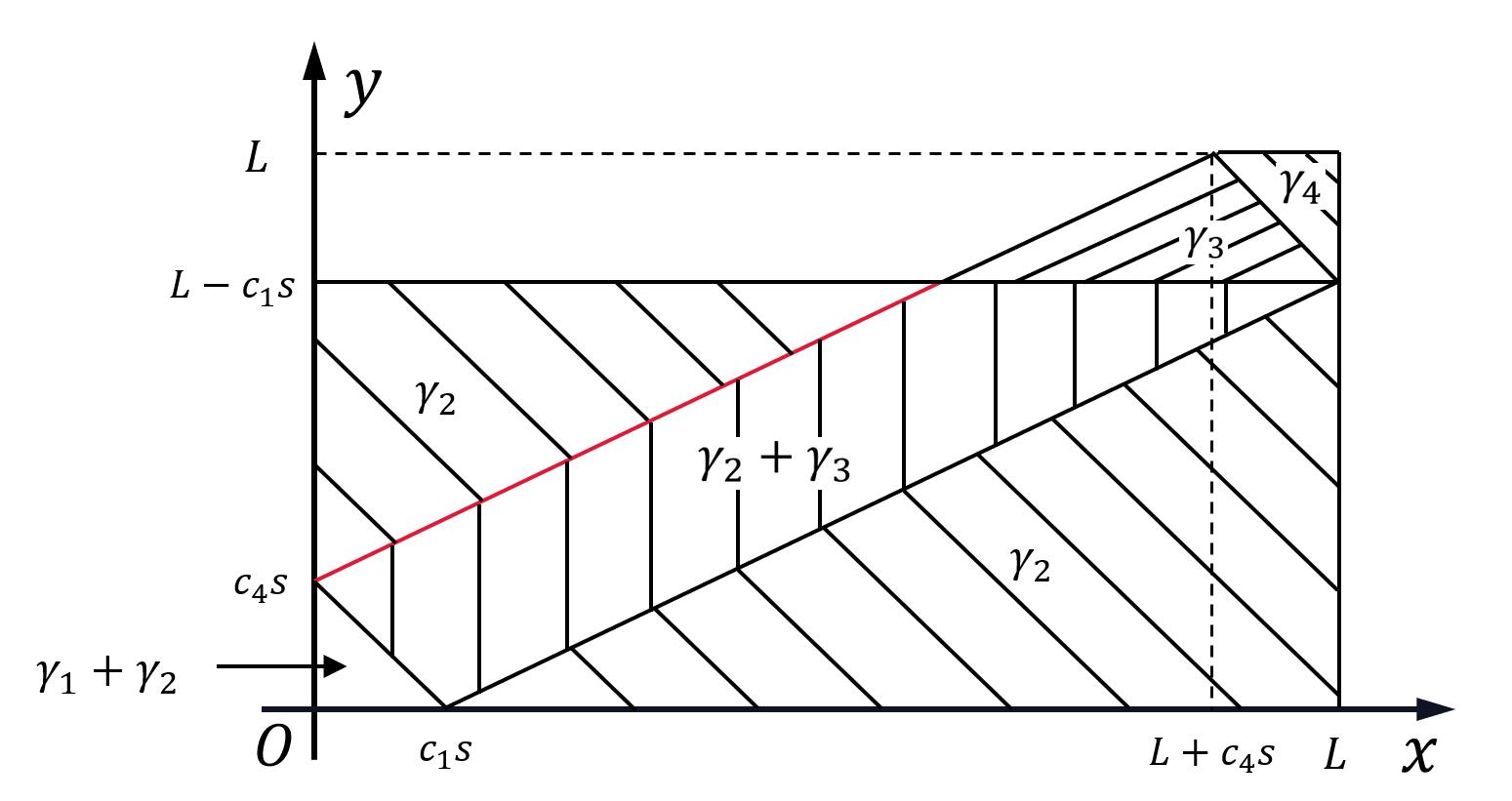}
    \caption{The regions of  kernel $\gamma(x,s,y)$.}
    \label{fig:gamma}
\end{figure} 
Under  Assumption \ref{assump-1}, we solve  kernel $\gamma(\cdot,\cdot,\cdot)$ and $\eta(\cdot,\cdot,\cdot)$  by using the characteristic line method and the successive approximations method (more details please see   Appendix \ref{kernel-solution}) and get: 
\begin{subequations}\label{kernel-gamma}
        \begin{align}
                \gamma&(x,s,y)=
                \gamma_1(x,s,y)+\gamma_2(x,s,y), \\
                &~~\mathrm{if}~0\le y\le c_4(s-\frac{x}{c_1});\nonumber\\
                \gamma&(x,s,y)=
                \gamma_2(x,s,y),\\
                &~~\mathrm{if}~ x+c_4s < y \leq L-c_1s ~\mathrm{or} ~0 \leq y \leq x-c_1s;\nonumber \\
                \gamma&(x,s,y)=
                \gamma_3(x,s,y),\\ 
                &~~\mathrm{if}~ L-c_1s < y \leq \min\{x+c_4s,\frac{c_1+c_4}{c_4}L-\frac{c_1}{c_4}x-c_1s\};\nonumber\\
                \gamma&(x,s,y)=
                \gamma_2(x,s,y)+\gamma_3(x,s,y),\\
                &~~\mathrm{if}~ \max\{c_4(s-\frac{x}{c_1}),x-c_1s\} < y \nonumber\\
                &~~~~~~~~~\leq \min\{x+c_4s,L-c_1s\}; \nonumber\\
                \gamma&(x,s,y)=
                \gamma_4(x,s,y),\\
                &~~\mathrm{if}~ \frac{c_1+c_4}{c_4}L-\frac{c_1}{c_4}x-c_1s\le y\le L; \nonumber\\
                \gamma&(x,s,y)=
                0, ~~\mathrm{otherwise};
        \end{align}
\end{subequations}
with 
\begin{align}
        &\gamma_1(x,s,y)=-\frac{c_5(k+c_5c_7)}{c_6(c_1+c_4)}\mathrm{e}^{-c_2(x+y)},\\
        &\gamma_2(x,s,y)=-\frac{c_5}{c_6}\mathrm{e}^{-{c}_2x}\delta(x-y-c_1s),\\
        &\gamma_3(x,s,y)=-\frac{kc_5}{c_6(c_1+c_4)}\mathrm{e}^{-\frac{c_1c_2}{c_1+c_4}x-\frac{c_2c_4}{c_1+c_4}(y+c_1s)},\\
        &\gamma_4(x,s,y)=-\frac{kc_5}{c_1c_6}\mathrm{e}^{-c_2L}.
\end{align}
\begin{figure}[h]
 \centering
 \subfigure[]{
  \begin{minipage}[ht]{0.6\linewidth}
   \centering
   \includegraphics[width=1.0\textwidth]{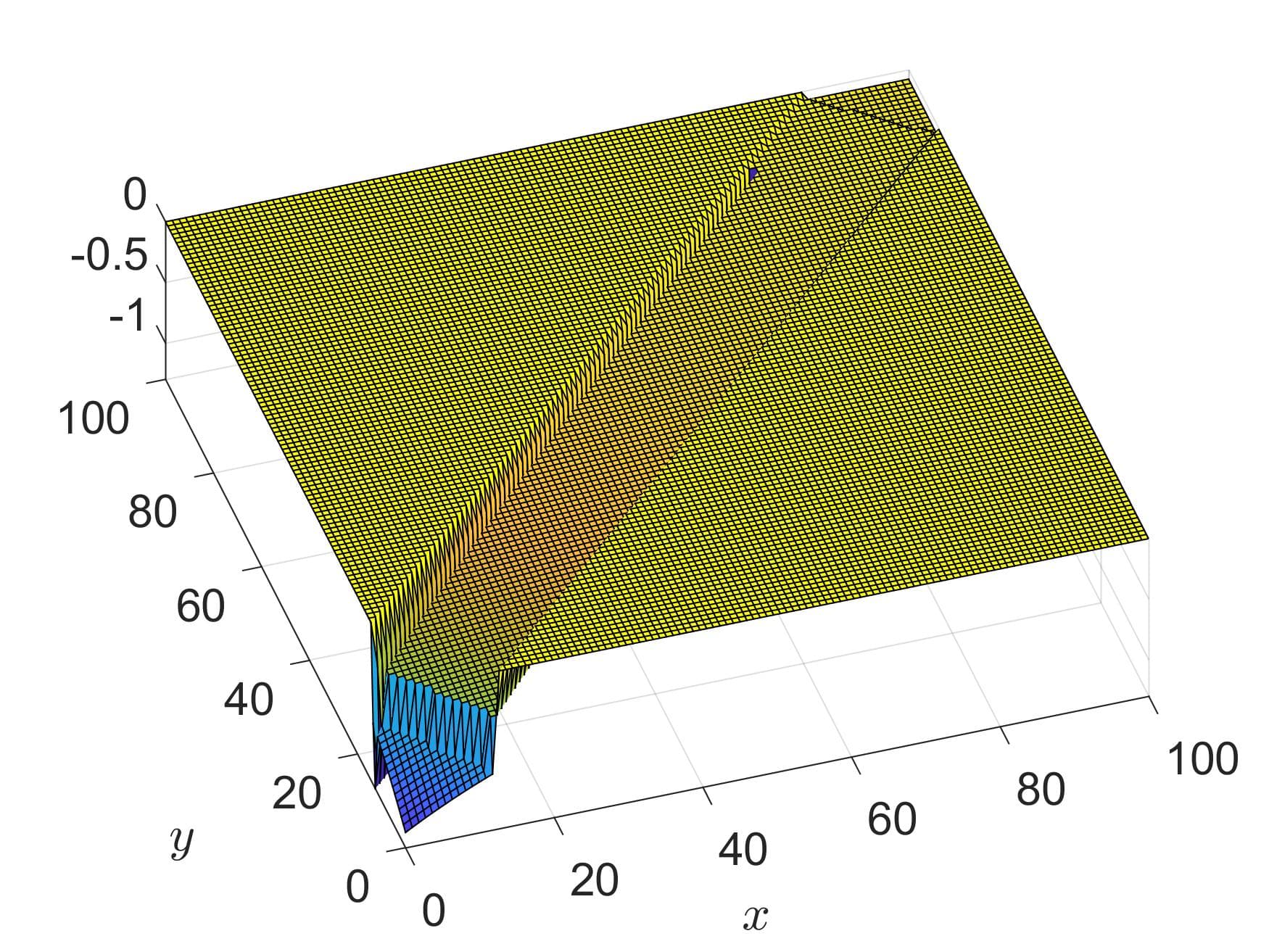}
  \end{minipage}
 }
 \subfigure[]{
  \begin{minipage}[ht]{0.6\linewidth}
   \centering
   \includegraphics[width=1.0\linewidth]{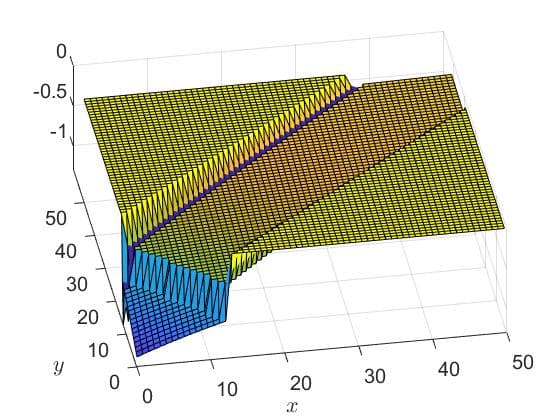}
  \end{minipage}
 }
 \subfigure[]{
  \begin{minipage}[ht]{0.6\linewidth}
   \centering
   \includegraphics[width=1.0\linewidth]{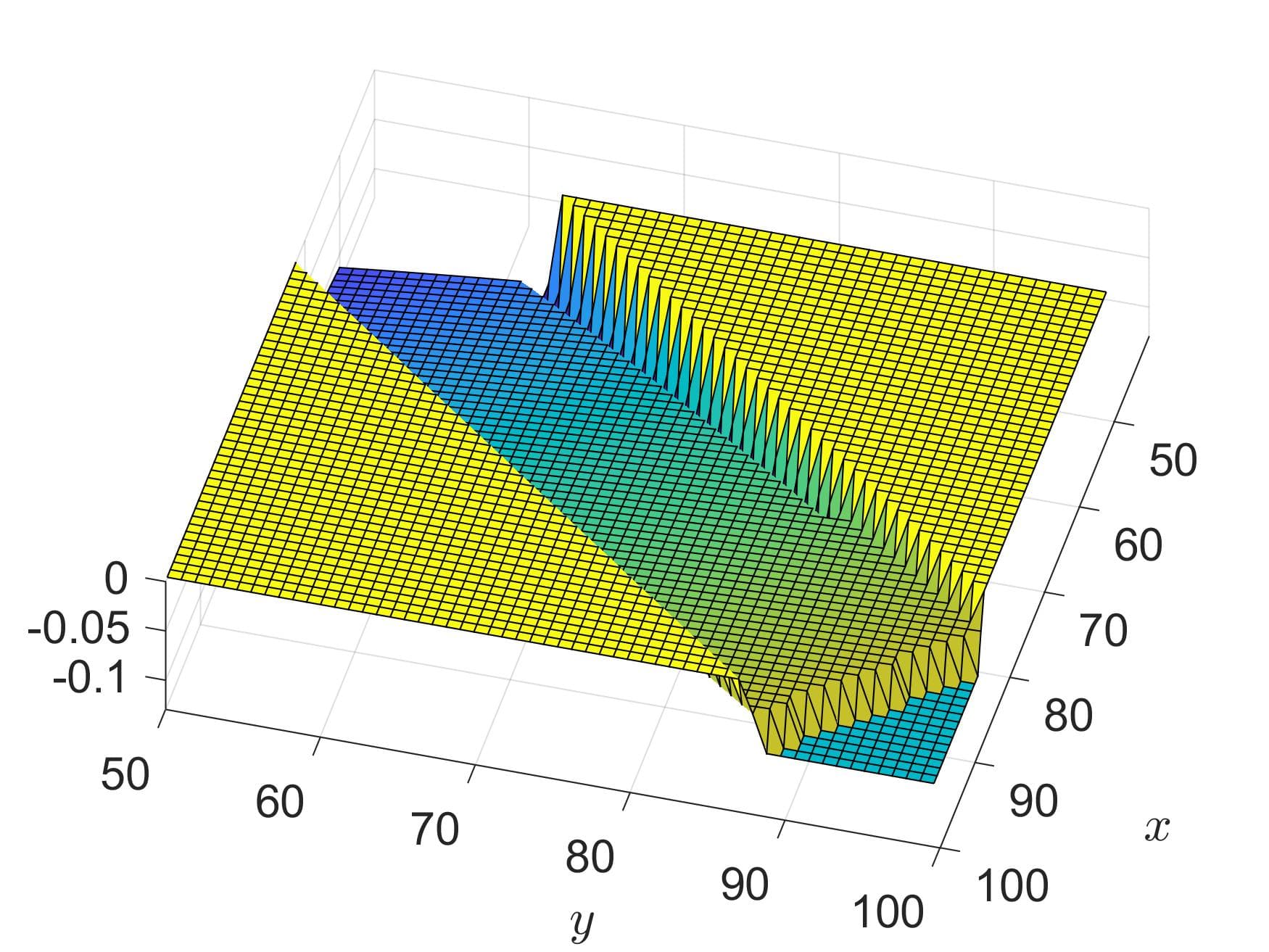}
  \end{minipage}
 } 
 \caption{(a) The kernel function $\gamma(x,s,y)$ with $s=4$ and in domain  $(x,y)\in ([0,100]\times[0,100])$, (b) the partially zoomed (a) in domain $(x,y)\in ([0,50]\times[0,50])$; and (c) the partially zoomed (a) in domain $(x,y)\in ([50,100]\times[50,100])$.}
 \label{fig:1.1}
\end{figure}

Fig. \ref{fig:gamma} shows all the regions that kernel $\gamma(\cdot, s,\cdot)$ takes different value under a given $s$, in which the red line ($y=x-c_1s$) displays where the pulse appears, $\gamma_2\neq 0$.
The  kernel function $\gamma(x,s,y)$ with $s=4$ under  parameters $L=100$, $c_1=3.1048$, $c_2=0.0287$, $c_3=0.0023$, $c_4=3.5981$, 
$c_5=5.5671$, $c_6=0.1438$, $c_7=0.0186$, $k=0.1$, is shown in Fig. \ref{fig:1.1}, where we truncate unlimited pulse of Dirac Delta function  for clear displaying the kernel function. 
Similarly, we get
\begin{subequations}\label{kernel-eta}
        \begin{align}
                \eta&(x,s,y)=
                \eta_1(x,s,y)+\eta_2(x,s,y),\\
                &~~\mathrm{if}~0\le y\le c_4s-\frac{c_4}{c_1}x;\nonumber\\
                \eta&(x,s,y)=\eta_2(x,s,y),\\
                &~~\mathrm{if}~ \max\{c_4s-\frac{c_4}{c_1}x,0\}<y\le c_4s;\nonumber\\
                \eta&(x,s,y)=\eta_3(x,s,y),\\ 
                &~~\mathrm{if}~ c_4s<y\leq L;\nonumber \\
                \eta&(x,s,y)=0, ~\mathrm{otherwise};
        \end{align}
\end{subequations}
with
\begin{align}
    &\eta_1(x,s,y)=\frac{c_1c_5c_7(k+c_5c_7)}{c_4c_6(c_1+c_4)}\mathrm{e}^{-c_2x},\\
    &\eta_2(x,s,y)=\frac{c_1c_5c_7}{c_4c_6}\mathrm{e}^{-c_2x}\delta(x-c_1s+\frac{c_1}{c_4}y),\\
    &\eta_3(x,s,y)=\frac{k}{c_6}\delta(x-y+c_4s).
\end{align}
The equation of $G(\cdot,\cdot,\cdot,\cdot)$ depends on kernel $\gamma(\cdot,\cdot,\cdot)$ and $\eta(\cdot,\cdot,\cdot)$ as follows:

\begin{align}        
        G_s(x,s,y,r)=&-G_r(x,s,y,r)=~0,\label{equ-G}\\
        G(x,s,y,0)=&-c_3\mathrm{e}^{-{c}_2y}\gamma(x,s,y)-c_6\eta(x,s,y)\nonumber\\
        &-c_6\delta(L-y)\mathfrak{r}(x,s),\label{bnd-G}
\end{align}

which is solved 
\begin{align}
                G(x,s,y,r)=&-c_3\mathrm{e}^{-{c}_2y}\gamma(x,s-r,y)-c_6\eta(x,s-r,y)\nonumber\\
        &-c_6\delta(L-y)\mathfrak{r}(x,s-r).\label{kernel-G}       
\end{align}
Substitute the kernel functions \eqref{kernel-gamma}, \eqref{kernel-eta}, \eqref{kernel-r} and \eqref{kernel-G} into the transformation \eqref{trans-beta1}, which gives an explicit backstepping transformation as follows:     
\begin{align}\label{trans-beta}
\beta(x,s,t)=
\begin{cases}
&T_1[\psi(t)](x,s)+Z_1[z(t)](x,s)\\
&~+Y_1[v(t)](x,s),~\mathrm{if}~ 0\le x \le c_1s ,\\
&T_2[\psi(t)](x,s)+Z_2[z(t)](x,s)\\
&~+Y_2[v(t)](x,s), ~\mathrm{if}~ c_1s < x \leq L-c_4s,\\
&T_3[\psi(t)](x,s)+Z_3[z(t)](x,s)\\
&~-\frac{k}{c_6}v(L,t),~\mathrm{if}~ L-c_4s < x \leq L,\\
\end{cases}
\end{align}
where the operators on  state $\psi(\cdot,\cdot, t)$   are
 \begin{align}
T_1&[\psi(t)](x,s)=\psi(x,s,t)\nonumber\\
&-\int_{0}^{\frac{x}{c_1}}c_1c_2\mathrm{e}^{-c_1c_2\tau}\psi(x-c_{1}\tau,s-\tau,t)d\tau\nonumber\\
&+\int_{\frac{x}{c_1}}^sc_5c_7\mathrm{e}^{-c_2x}\psi(c_{4}(\tau-\frac{x}{c_1}),s-\tau,t)d\tau\nonumber\\
&-\int_{0}^s \int_{\max\{x-c_1 \tau,c_{4}(\tau-\frac{x}{c_1})\}}^{x+c_{4}\tau} \mathfrak{g}(x,y,\tau)
\psi(y,s-\tau,t)dyd\tau,\label{T1}\nonumber\\
&+\int_{0}^s k\psi(x+c_4\tau,s-\tau,t)d\tau\\
T_2&[\psi(t)](x,s)=\psi(x,s,t)\nonumber\\
&-\int_0^s c_1c_2\mathrm{e}^{-c_1c_2\tau} \psi(x-c_1\tau,s-\tau,t)d\tau\nonumber\\
&+\int_0^s k \psi(x+c_4 \tau,s-\tau,t)d\tau\nonumber\\
&-\int_0^s \int_{x-c_1 \tau}^{x+c_4 \tau}\mathfrak{g}(x,y,\tau)\psi(y,s-\tau,t)dyd\tau,\label{T2}\\
T_3&[\psi(t)](x,s)=\psi(x,s,t)\nonumber\\
&-\int_0^s c_1c_2\mathrm{e}^{-c_1c_2 \tau} \psi(x-c_1 \tau,s-\tau,t)r\tau\nonumber\\
&+\int_0^{\frac{L-x}{c_4}} k \psi(x+c_4 \tau,s-\tau,t)d\tau\nonumber\\
&+\int_{\frac{L-x}{c_4}}^s k \psi(L,s-\tau,t)d\tau\nonumber\\
&-\int_0^s \int_{x-c_1 \tau}^{\min\{x+c_4 \tau,\mathfrak{c}(\tau)\}}\mathfrak{g}(x,y,\tau) \psi(y,s-\tau,t)dyd\tau\nonumber\\
&-\int_{\frac{L-x}{c_4}}^s \int_{\mathfrak{c}(\tau)}^{L} kc_2\mathrm{e}^{-c_2L} \psi(y,s-\tau,t)dyd\tau,\label{T3}
\end{align}
with 
\begin{align}
\mathfrak{g}(x,y,\tau)&=\frac{kc_1c_2}{c_1+c_4}
\mathrm{e}^{-\frac{c_1c_2}{c_1+c_4}(x-y+c_4\tau)},\\ 
\mathfrak{c}(\tau)&=\frac{c_1+c_4}{c_4}L-\frac{c_1}{c_4}x-c_1 \tau.
\end{align}
The operators on  state $z(\cdot, t)$   are 
\begin{align}
Z_1&[z(t)](x,s)=\int_{0}^{c_4(s-\frac{x}{c_1})}\frac{c_5(k+c_5c_7)}{c_6(c_1+c_4)}\mathrm{e}^{-c_2(x+y)}z(y,t)dy\nonumber \\&+\int_{c_4(s-\frac{x}{c_1})}^{x+c_4s}\mathfrak{k}(x,s)z(y,t)dy,\label{Z1}\\ Z_2&[z(t)](x,s)=\frac{c_5}{c_6}\mathrm{e}^{-c_2x}z(x-c_1s,t)\nonumber\\
&+\int_{x-c_1s}^{x+c_4s}\mathfrak{k}(x,s)z(y,t)dy,\label{Z2}\\
Z_3&[z(t)](x,s)=\frac{c_5}{c_6}\mathrm{e}^{-c_2x}z(x-c_1s,t) \label{Z3}\\
&+\int_{\mathfrak{c}(s)}^{L} \frac{kc_5}{c_1c_6}\mathrm{e}^{-c_2L} z(y,t)dy
+\int_{x-c_1s}^{\mathfrak{c}(s)}\mathfrak{k}(x,s)z(y,t)dy\nonumber
\end{align}
with $\mathfrak{k}(x,s)=\frac{kc_5}{c_6(c_1+c_4)}\mathrm{e}^{-\frac{c_1c_2}{c_1+c_4}x-\frac{c_2c_4(y+c_1s)}{c_1+c_4}}$
. The operators on  state $v(\cdot, t)$  are 
\begin{align}
&Y_1[v(t)](x,s)=-\frac{c_5c_7}{c_6}\mathrm{e}^{-c_2x}v(c_{4}(s-\frac{x}{c_1}) ,t)\nonumber\\
&~~~~-\frac{k}{c_6}v(x+c_4s,t)\nonumber \\
&~~~~-\int_0^{c_4(s-\frac{x}{c_1})}\frac{c_1c_5c_7(k+c_5c_7)}{c_4c_6(c_1+c_4)}\mathrm{e}^{-c_2x} v(y,t)dy,\label{Y1}\\
&Y_2[v(t)](x,s)=-\frac{k}{c_6}v(x+c_4s,t).\label{Y2}
\end{align}        

\begin{lemma}\label{trans-beta-th}
\rm{    If Assumption \ref{assump-1} holds, the transformation \eqref{trans-beta} is bounded and  continuous in $x\in [0,L]$, which 
transforms the original system \eqref{equ-z}-\eqref{equ-bnd-psi} into the  target system \eqref{equ-z1}-\eqref{equ-bnd-beta}.}      
\end{lemma}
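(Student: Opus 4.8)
The plan is to split the proof into the algebraic \emph{intertwining} claim (that \eqref{trans-beta1} maps one system to the other) and the \emph{analytic} claim (boundedness and continuity of the explicit form \eqref{trans-beta}). For the intertwining I would differentiate \eqref{trans-beta1} along the flow: compute $\beta_t$ by substituting the open-loop dynamics \eqref{equ-z}, \eqref{equ-v} and the transport equation \eqref{equ-psi}, and compute $\beta_s$ by differentiating the kernels and the inner limit of the double integral. After integrating by parts in $y$ (to transfer the $x$-derivatives off $z$ and $v$ onto $\gamma$ and $\eta$) and in $r$ (using $G_s=-G_r$ from \eqref{equ-G}), I would collect the boundary contributions at $y\in\{0,L\}$ and $r\in\{0,s\}$. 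Demanding that the result reduce to the target equation $\beta_t=\beta_s$ reproduces exactly the kernel relations \eqref{equ-r}--\eqref{bnd-eta2} and \eqref{equ-G}--\eqref{bnd-G}; since the explicit kernels \eqref{kernel-gamma}, \eqref{kernel-eta}, \eqref{kernel-r}, \eqref{kernel-G} solve these (Appendix~\ref{kernel-solution}) and $\beta(x,D,t)=0$ determines the controller through \eqref{equ-bnd-psi}, the map sends \eqref{equ-z}--\eqref{equ-bnd-psi} to \eqref{equ-z1}--\eqref{equ-bnd-beta}. This step is the reverse of the kernel derivation and is bookkeeping-heavy but mechanical.

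For boundedness I would read off \eqref{trans-beta} branch by branch. Each operator $T_i$, $Z_i$, $Y_i$ is a sum of (i) integrals of $z$, $v$, or $\psi$ against the continuous weights $\mathfrak{g}$, $\mathfrak{k}$ and the bounded exponentials, taken over subdomains of the characteristic triangle, and (ii) finitely many point terms $z(x-c_1 s,t)$, $v(c_4(s-\tfrac{x}{c_1}),t)$, $v(x+c_4 s,t)$, $v(L,t)$ produced by the Dirac masses in $\gamma_2$, $\eta_2$, $\eta_3$ and in $\mathfrak{r}$. Assumption~\ref{assump-1} guarantees $(c_1+c_4)s<L$, so every integration limit stays inside $[0,L]$, the three regions tile the domain and the characteristics do not fold back; the kernels are then uniformly bounded on the compact parameter set, Cauchy--Schwarz bounds each integral term by a constant multiple of $\lVert z(t)\rVert_{L_2}$, $\lVert v(t)\rVert_{L_2}$ or $\lVert \psi(\cdot,\cdot,t)\rVert_{L_2}$, and the shifted point terms are again $L_2$ functions of $x$ while $v(L,t)$ is the bounded ODE state of Remark~\ref{PDE-ODE}. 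Hence $\beta(\cdot,s,t)$ depends boundedly on the states.

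The hard part will be continuity across the two interface lines $x=c_1 s$ and $x=L-c_4 s$, where \eqref{trans-beta} changes branch precisely because a Dirac support is crossing an integration limit there. At $x=c_1 s$ the mass of $\gamma_2$ sits at $y=x-c_1 s$, which leaves $[0,L]$ as $x$ decreases through $c_1 s$; likewise the mass of $\eta_2$ sits at $y=c_4(s-\tfrac{x}{c_1})$, which leaves $[0,L]$ as $x$ increases through $c_1 s$. I would therefore take one-sided limits of $T_i+Z_i+Y_i$ at the interface. The regular integral parts match automatically, since the limits $c_4(s-\tfrac{x}{c_1})$ and $x-c_1 s$ both tend to $0$; the only candidate jump is the combination of the point term $\tfrac{c_5}{c_6}\mathrm{e}^{-c_2 x}z(0,t)$ carried by $Z_2$ (absent from $Z_1$) and the point term $-\tfrac{c_5 c_7}{c_6}\mathrm{e}^{-c_2 x}v(0,t)$ carried by $Y_1$ (absent from $Y_2$). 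Their sum is $\tfrac{c_5}{c_6}\mathrm{e}^{-c_2 x}\bigl(z(0,t)+c_7 v(0,t)\bigr)$, which vanishes identically by the inlet boundary condition \eqref{equ-bnd-z}, so the branches agree across $x=c_1 s$. An entirely analogous computation handles $x=L-c_4 s$: there $v(x+c_4 s,t)\to v(L,t)$, the $z$-integral limit $\mathfrak{c}(s)$ merges with $L$, and the isolated $-\tfrac{k}{c_6}v(L,t)$ term in the third branch absorbs the $Y_2$ point term, the reconciliation now using the boundary/ODE relation \eqref{equ-bnd-v} at $x=L$.

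In short, the difficulty is not in any single estimate but in the discrete bookkeeping of which Dirac-induced point term is switched on in which region, and in exhibiting the exact cancellations forced by the two boundary conditions; once these cancellations are displayed, boundedness follows from routine Cauchy--Schwarz estimates and the intertwining follows from the same manipulation that produced the kernel equations.
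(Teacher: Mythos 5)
Your proposal is correct and follows essentially the same route as the paper's proof: continuity is checked by taking one-sided limits at the two interfaces (with the jump at $x=c_1s$ cancelled exactly by the inlet condition \eqref{equ-bnd-z}, which is precisely the paper's computation), boundedness is read off from the explicit kernels, and the intertwining is established by differentiating the transformation in $t$ and $s$ and matching against the kernel relations, which is the paper's Appendix computation organized through the kernel equations rather than by brute-force substitution of the explicit kernels. One minor correction: at the second interface $x=L-c_4s$ no boundary relation is needed at all --- once $\mathfrak{c}(s)$ merges with $L$ and $v(x+c_4s,t)$ merges with $v(L,t)$, the two branches coincide identically ($T_2=T_3$, $Z_2=Z_3$, and $Y_2$ reduces to the explicit $-\frac{k}{c_6}v(L,t)$ term of the third branch), so the dynamic condition \eqref{equ-bnd-v} plays no role in the continuity argument; it enters only in the intertwining step, where the relation $\psi(x,0,t)=\beta(x,0,t)-\frac{c_5}{c_6}\mathrm{e}^{-c_2x}z(x,t)+\frac{k}{c_6}v(x,t)$ converts \eqref{equ-bnd-v} into \eqref{equ-bnd-v1}.
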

The  proof of Lemma  \ref{trans-beta-th} is given in  Appendix \ref{tran-proof}.
\remark
\rm{The transformation of the plant \eqref{equ-z}-\eqref{equ-bnd-psi} to the target system \eqref{equ-z1}-\eqref{equ-bnd-beta} clearly has to be a 2D backstepping transformation due to the 2D nature of the $\psi$-system and the $\beta$-system. 
But it is not just the dimensionality that shall make this 2D backstepping transformation complex. 
The reason for its complexity is that the 1D PDE dynamics of $(z,v)$ evolve perpendicular to the direction of the  2D transport dynamics of $\psi$ through which backstepping is performed. 
The transverse nature of the $(z,v)$-transport relative to the $\beta$-transport, in both the downstream and upstream direction, will make the backstepping transformation $\psi \mapsto \beta$ very complex. }

\subsection{Delay-compensated control}        
The control is obtained by substituting $s=D$ into transformation \eqref{trans-beta}, applying the boundary conditions \eqref{equ-bnd-psi} and \eqref{equ-bnd-beta}, and using the relation \eqref{equ-r-psi-u}, we have if $t>D$,
\begin{subequations}\label{control-u}
        \begin{align}
                u(x,t)=
                &U_1[u](x,t)-Z_1[z(t)](x,D)\nonumber\\
                &-Y_1[v(t)](x,D),\\
                &~~\mathrm{if}~ 0\le x \le c_1D ,\nonumber\\
                u(x,t)=&U_2[u](x,t)-Z_2[z(t)](x,D)\nonumber\\
                &-Y_2[v(t)](x,D),\\
                &~~\mathrm{if}~ c_1D < x \leq L-c_4D,\nonumber \\
                u(x,t)=&U_3[u](x,t)-Z_3[z(t)](x,DL,\tau)d\tau\nonumber\\
                &-\frac{k}{c_6}v(L,t),\\
                &~~\mathrm{if}~ L-c_4D < x \leq L,\nonumber
        \end{align}
\end{subequations}
where 
\begin{align}
&U_1[u](x,t)=\int^{t}_{t-\frac{x}{c_1}}c_1c_2\mathrm{e}^{-c_1c_2(t-\tau)}u(x-c_{1}(t-\tau),\tau)d\tau\nonumber\\
&~~~-\int^{t-\frac{x}{c_1}}_{t-D}c_5c_7\mathrm{e}^{-c_2x}u(c_{4}(t-\tau-\frac{x}{c_1}),\tau)d\tau\nonumber\\
&~~~-\int^{t}_{t-D} ku(x+c_4(t-\tau),\tau)d\tau\nonumber\\
&~~~+\int^{t}_{t-D} \int_{\max\{x-c_1(t-\tau),c_{4}(t-\tau-\frac{x}{c_1})\}}^{x+c_{4}(t-\tau)} \frac{kc_1c_2}{c_1+c_4}\nonumber\\
&~~~\times\mathrm{e}^{-\frac{c_1c_2}{c_1+c_4}(x-y+c_4(t-\tau))} u(y,\tau)dyd\tau,\\
&U_2[u](x,t)=\int_{t-D}^{t} c_1c_2\mathrm{e}^{-c_1c_2(t-\tau)} u(x-c_1(t-\tau),\tau)d\tau\nonumber\\
&~~~-\int_{t-D}^{t} k u(x+c_4(t-\tau),\tau)d\tau\nonumber\\
&~~~+\int_{t-D}^{t} \int_{x-c_1(t-\tau)}^{x+c_4(t-\tau)}\frac{kc_1c_2}{c_1+c_4}\mathrm{e}^{-\frac{c_1c_2}{c_1+c_4}(x-y+c_4(t-\tau))} \nonumber\\
&~~~\times u(y,\tau)dyd\tau,\\
&U_3[u](x,t)=\int_{t-D}^{t} c_1c_2\mathrm{e}^{-c_1c_2(t-\tau)} u(x-c_1(t-\tau),\tau)d\tau\nonumber\\
&~~~-\int^{t}_{t-\frac{L-x}{c_4}} k u(x+c_4(t-\tau),\tau)d\tau\nonumber\\
&~~~+\int_{t-D}^{t} \int_{x-c_1(t-\tau)}^{\min\{x+c_4(t-\tau),\frac{c_1+c_4}{c_4}L-\frac{c_1}{c_4}x-c_1(t-\tau)\}}\frac{kc_1c_2}{c_1+c_4}\nonumber\\
&~~~\times\mathrm{e}^{-\frac{c_1c_2}{c_1+c_4}(x-y+c_4(t-\tau))} u(y,\tau)dyd\tau\nonumber\\
&~~~+\int^{t-\frac{L-x}{c_4}}_{t-D} \int_{\frac{c_1+c_4}{c_4}L-\frac{c_1}{c_4}x-c_1(t-\tau)}^{L} kc_2\mathrm{e}^{-c_2L} u(y,\tau)dyd\tau.
\end{align}
It is because of the aforementioned transverse motion of the $(z,v)$-transport relative to the $\psi$-transport that the control law in \eqref{control-u} is given in three distinct forms:
from the inlet to $c_1D$, from $L-c_4D$ to the outlet, and in between. 
The control \eqref{control-u} is for the linearized and diagonalized  system   \eqref{equ-z}-\eqref{equ-bnd-psi}, and the control law for system \eqref{traffic_1}-\eqref{traffic_4} around the equilibrium \eqref{steady-state} is also required, so we rewrite it as follows: in the case of $t>D$,
\begin{subequations}\label{control-hacc}
        \begin{align}
                \breve h_{\rm{acc}}&(x,t)=
                \bar h_{\rm{acc}}+U_1[\breve h_{\rm{acc}}-\bar h_{\rm{acc}}]-\mathrm{e}^{\frac{c_2}{c_1}x}Z_1[\breve \rho-\bar \rho](x,D)\nonumber\\
                &-\bar h_{\rm{mix}}\bar \rho^2Z_1[\breve v-\bar v](x,D)-Y_1[\breve v-\bar v](x,D),\\
                &~~~\mathrm{if}~ 0\le x \le c_1D ,\nonumber\\
                \breve h_{\rm{acc}}&(x,t)=
                \bar h_{\rm{acc}}+U_2[\breve h_{\rm{acc}}-\bar h_{\rm{acc}}]
                -\mathrm{e}^{\frac{c_2}{c_1}x}Z_2[\breve \rho-\bar \rho](x,D)\nonumber\\
                &~~~-\bar h_{\rm{mix}}\bar \rho^2Z_2[\breve v-\bar v](x,D)
                -Y_2[\breve v-\bar v](x,D),\\
                &~~~\mathrm{if}~ c_1D < x \leq L-c_4D, \nonumber\\
                \breve h_{\rm{acc}}&(x,t)=
                \bar h_{\rm{acc}}+U_3[\breve h_{\rm{acc}}-\bar h_{\rm{acc}}]
                -\mathrm{e}^{\frac{c_2}{c_1}x}Z_3[\breve \rho-\bar \rho](x,D)\nonumber\\
                &~~~-\bar h_{\rm{mix}}\bar \rho^2Z_3[\breve v-\bar v](x,D)
                -\frac{k}{c_6} (\breve v(L,t)-\bar v)\nonumber\\
                &~~~-\int^{t-\frac{L-x}{c_4}}_{t-D} k (\breve h_{\rm{acc}}(L,\tau)-\bar h_{\rm{acc}})d\tau,\\
                &~~~\mathrm{if}~ L-c_4D < x \leq L, \nonumber
        \end{align}       
\end{subequations}
Since the transformation  \eqref{trans-beta}  is continuous in $x$, both the  control \eqref{control-u} for the linearized error system
\eqref{equ-z}-\eqref{equ-bnd-psi} and the control \eqref{control-hacc} for the original system \eqref{traffic_1}-\eqref{traffic_4}  around the equilibrium \eqref{steady-state} are   continuous  in $x$. The control \eqref{control-hacc} is composed of the feedback of the states and the historical actuator state, which  is divided into three parts upon the spatial variable $x$. It implies the ACC vehicles at different position of the highway will apply different time gap  strategies.   Due to the length  of the concerned highway stretch $L$ being  far greater than other parameters (Assumption  \ref{assump-1}), 
one can find that  the first and the last sections are  much shorter than the second section. Hence, most ACC   vehicles  on the highway adopt the second part of control law when they enter the middle  interval $c_1D < x \leq L-c_4D$.   In order to control the traffic flow in the  interval  near  the exit of the highway, the feedback of the flow speed $\breve v(L,t)$ at the exit is also required due to the dynamic boundary condition.

\section{Stability Analysis}\label{stable}
In this section, we  analyze the stability of the closed-loop system. First, we state the main result concerning  exponentially stability.         \begin{theorem}\label{th-hacc}
\rm{  Consider the closed-loop system consisting of plant \eqref{equ-z}
-\eqref{equ-initial-psi} with control law \eqref{control-u}, if the initial conditions $z(\cdot, 0)\in H_1[0,L]$, $v(\cdot, 0) \in H_1[0,L]$, and $\psi(\cdot,\cdot,0))\in  L_2(0,L)\times H_{1}[0,D]$ are compatible, then 
the equilibrium $(z(\cdot, \cdot),v(\cdot, \cdot),\psi(\cdot,\cdot,\cdot))\equiv0$
is exponentially stable in the $L_2$ sense, i.e., there exist  positive constants $\vartheta$ and $M$ such that the following holds for all $t> 0$,                 
\begin{align}
    &\rVert z\rVert^2_{L_2}+\rVert v\rVert^2_{L_2}+| v(L,t)|^2_{}+\rVert \psi\rVert^2_{L_2}+\rVert \psi(L,\cdot,t)\rVert^2_{L_2}\nonumber \\
    \leq& M\mathrm{e}^{-\vartheta t}\left( \rVert z(\cdot, 0)\rVert^2_{L_2}+| v(\cdot, 0)|^2_{}+\rVert v(L, 0)\rVert^2_{L_2}\right.\nonumber \\
    &~\left.+\rVert \psi(\cdot, \cdot, 0)\rVert^2_{L_2}+\rVert \psi(L, \cdot, 0)\rVert^2_{L_2}\right).
\end{align}}
\end{theorem}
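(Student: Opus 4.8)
The plan is to prove Theorem~\ref{th-hacc} in the two stages that the backstepping construction dictates: (i) establish $L_2$ exponential stability of the target system \eqref{equ-z1}--\eqref{equ-bnd-beta}, and (ii) transfer that estimate back to the plant \eqref{equ-z}--\eqref{equ-initial-psi} through the norm equivalence supplied by the boundedness and invertibility of the transformation \eqref{trans-beta} (Lemma~\ref{trans-beta-th} together with its explicit bounded inverse). Before building a Lyapunov functional I would record the structural fact that the $\beta$-subsystem \eqref{equ-beta}--\eqref{equ-bnd-beta} is a pure transport toward decreasing $s$ with homogeneous inflow $\beta(x,D,t)=0$, so $\beta\equiv 0$ for $t\ge D$; this finite-time emptying both decouples the $(z,v)$ dynamics for large $t$ and suggests that a weighted $L_2$ functional will yield a clean exponential rate. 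The resulting coupling is a cascade $\beta\to(v,v(L))\to z$, since the backstepping has removed the destabilizing $z$-feedback from the $v$ and $v(L)$ equations, leaving only the matched traces $\beta(\cdot,0,t)$ and $\beta(L,0,t)$ there and the inflow feedback $z(0,t)=-c_7v(0,t)$ in the $z$-equation.

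For stage (i) I would analyse the functional
\begin{align*}
V(t)=&\tfrac{A}{2}\!\int_0^L\! e^{-\delta x}z^2\,dx+\tfrac{B}{2}\!\int_0^L\! e^{\delta x}v^2\,dx+\tfrac{C}{2}v(L,t)^2\\
&+\tfrac{E}{2}\!\int_0^L\!\!\int_0^D\! e^{\sigma s}\beta^2\,ds\,dx+\tfrac{F}{2}\!\int_0^D\! e^{\sigma s}\beta(L,s,t)^2\,ds,
\end{align*}
whose five terms are in one-to-one correspondence with the five norms on the left-hand side of the theorem. Differentiating along \eqref{equ-z1}--\eqref{equ-bnd-beta} and integrating by parts, the transport terms generate the interior dissipation $-(\tfrac{c_1\delta}{2}-c_1c_2)\!\int e^{-\delta x}z^2$ and $-(\tfrac{c_4\delta}{2}+k)\!\int e^{\delta x}v^2$, and the $\beta$-weights produce $\sigma$-proportional interior dissipation plus the favourable boundary reserves $-\tfrac{E}{2}\!\int_0^L\beta(x,0,t)^2\,dx$ and $-\tfrac{F}{2}\beta(L,0,t)^2$. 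The destabilizing reaction $+c_1c_2z$ is therefore absorbed precisely when the spatial weight satisfies $\delta>2c_2$, which also keeps the factors $e^{(c_2-\delta)x}$ bounded for the cross-term estimates.

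The remaining pieces are boundary traces and matched-actuation couplings, which I would dominate in a fixed order of constant selection. Integration by parts leaves a positive $v(0,t)^2$ term from the $z$-equation (via $z(0,t)=-c_7v(0,t)$) and a negative one from the $v$-equation, balanced by $Bc_4\ge Ac_1c_7^2$; a positive $v(L,t)^2$ trace from the $v$-equation, balanced against $-Ckv(L,t)^2$ by taking $C$ large; and the cross terms $-c_3e^{(c_2-\delta)x}z\beta(x,0,t)$, $-c_6e^{\delta x}v\beta(x,0,t)$ and $-c_6v(L,t)\beta(L,0,t)$, which I would split by Young's inequality into the interior $z,v$ dissipation and into the $\beta$-trace reserves by taking $E$ and $F$ large enough. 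The correct cascade is: fix $\delta>2c_2$ and $\sigma>0$; normalize $A$; then choose $B$, then $C$, and finally $E$ and $F$ large (so that $E$ dominates the $B$-dependent interior cross terms and $F$ the $C$-dependent boundary cross term). This yields $\dot V\le-\mu V$ for some $\mu>0$, hence $V(t)\le V(0)e^{-\mu t}$ and exponential decay of the target-system $L_2$ norm.

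Stage (ii) is then routine: Lemma~\ref{trans-beta-th} bounds the target norm by the plant norm, and the explicit inverse of \eqref{trans-beta} bounds the plant norm by the target norm, so the two $L_2$ norms are equivalent up to fixed constants; composing these with $V(t)\le V(0)e^{-\mu t}$ gives the stated estimate with $\vartheta=\mu$ and $M$ the product of the equivalence constants. The $H_1$/compatibility hypotheses on the initial data serve only to furnish strong solutions that legitimize the differentiation and integration by parts, after which the $L_2$ bound extends by density. I expect the main obstacle to lie entirely in the bookkeeping of stage (i): the transverse coupling exchanges the reaction $+c_1c_2z$ and the boundary traces at $x=0$ and $x=L$ among the $z$, $v$, $v(L)$ and $\beta$-trace components, so the delicate point is to verify that a single ordered choice of $\delta,\sigma,A,B,C,E,F$ renders every indefinite term simultaneously dominated — in particular that the pointwise trace $\beta(L,0,t)^2$, which the interior $\beta$-integral does not control, is handled by the dedicated $x=L$ slice term weighted by $F$.
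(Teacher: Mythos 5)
Your proposal is correct and follows essentially the same route as the paper: exponential stability of the target system via a weighted five-term Lyapunov functional (matching the paper's $V_0$ in \eqref{stable-v0} up to trivial renaming of constants and the spatial weight on the interior $\beta$-term), with the same ordered selection of the weight and multiplier constants to absorb the reaction term, the boundary traces at $x=0$ and $x=L$, and the matched $\beta(\cdot,0,t)$ and $\beta(L,0,t)$ couplings, followed by norm equivalence from the bounded direct and inverse transformations (Lemmas \ref{trans-beta-th}--\ref{th-norm-V}). The only additions — the observation that $\beta\equiv 0$ for $t\ge D$ and the explicit remark on the role of the $H_1$/compatibility assumptions — are correct but not needed, and do not change the argument.
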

The proof of the theorem consists of two steps. The first step is to prove the exponential stability of the target system in $L_2$ sense, and the second step is to  prove the transformation is invertible by obtaining the explicit inverse transformation, which will establish the norm equivalence between the original system and the target system. 
\subsection{The stability of the target system}
Before proceeding, we first define  two Lyapunov functions for the original system \eqref{equ-z}-\eqref{equ-bnd-psi}  and the  target system \eqref{equ-z1}-\eqref{equ-bnd-beta}, respectively
\begin{align}\label{equ-V1}
V_1(t) & = \rVert z\rVert^2_{L_2}+\rVert v\rVert^2_{L_2}+ |v(L,t)|^{2}+\rVert \psi\rVert^2_{L_2}+\rVert \psi(L,\cdot,t)\rVert^2_{L_2},\\
V_2(t)&= \rVert z\rVert^2_{L_2}+\rVert v\rVert^2_{L_2}+ |v(L,t)|^{2}+\rVert \beta\rVert^2_{L_2}+\rVert \beta(L,\cdot,t)\rVert^2_{L_2}\label{equ-V2}.
\end{align}
\begin{lemma}\label{lemma-V0}
\rm{     
        Consider system  (\ref{equ-z1})-(\ref{equ-bnd-beta}). If the initial conditions $ z(\cdot, 0)\in H_1[0,L]$, $v(\cdot, 0) \in H_1[0,L]$, and $\beta(\cdot,\cdot,0))\in  L_2(0,L)\times H_{1}[0,D]$ are compatibles, then the equilibrium  $(z(\cdot, \cdot),v(\cdot, \cdot),\beta(\cdot,\cdot,\cdot))\equiv0$ of  (\ref{equ-z1})-(\ref{equ-bnd-beta}) is exponentially stable in the $L_2$ sense, i.e., there exist  positive constants $\theta$ and $N$ such that the following holds for all $t>0$,
\begin{align}
                &\rVert z\rVert^2_{L_2}+\rVert v\rVert^2_{L_2}+ |v(L,t)|^{2}+\rVert \beta\rVert^2_{L_2}+\rVert \beta(L,\cdot,t)\rVert^2_{L_2}\nonumber \\
                &~~~\leq M\mathrm{e}^{-\theta t}( \rVert z(\cdot, 0)\rVert^2_{L_2}+\rVert v(\cdot, 0)\rVert^2_{L_2}+ |v(L,0)|^{2}\nonumber \\
                &~~~~~~~+\rVert \beta(\cdot, \cdot, 0)\rVert^2_{L_2}+\rVert \beta(L, \cdot, 0)\rVert^2_{L_2}).\label{target-iequ}
\end{align}
}
\end{lemma}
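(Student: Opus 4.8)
The plan is to build a single exponentially weighted Lyapunov functional for the cascade target system \eqref{equ-z1}--\eqref{equ-bnd-beta} and show it decays at a geometric rate. The structure I exploit is that $\beta$ is driven only by the homogeneous boundary condition \eqref{equ-bnd-beta}, that the $v$-dynamics \eqref{equ-v1}, \eqref{equ-bnd-v1} are damped at rate $k$ and are \emph{decoupled} from $z$, and that the $z$-dynamics \eqref{equ-z1} form a $+x$ transport carrying only a bounded reaction term $c_1c_2z$ together with sources in $v$ and $\beta$. I therefore take
\begin{align*}
V ={}& A\int_0^L e^{-ax}z^2\,dx + B\int_0^L e^{bx}v^2\,dx + P\,v(L,t)^2\\
&{}+ R\int_0^L\!\!\int_0^D e^{cs}\beta^2\,ds\,dx + Q\int_0^D e^{cs}\beta(L,s,t)^2\,ds,
\end{align*}
with positive weights $a,b,c$ and positive coefficients $A,B,P,Q,R$ to be fixed. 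Since $[0,L]$ and $[0,D]$ are bounded, each exponential weight is bounded above and below by positive constants, so $V$ is equivalent to the $L_2$ energy on the left of \eqref{target-iequ}; it therefore suffices to establish $\dot V\le-\theta V$ for some $\theta>0$.

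Next I would differentiate $V$ along solutions, integrating by parts in $x$ for the $z,v$ terms and in $s$ for the $\beta$ terms. The two $\beta$-integrals furnish the key negativity: using $\beta_t=\beta_s$ and $\beta(x,D,t)=0$,
\[
\frac{d}{dt}\int_0^D e^{cs}\beta^2\,ds = -\beta(x,0,t)^2 - c\int_0^D e^{cs}\beta^2\,ds,
\]
and likewise for the trace $\beta(L,\cdot,t)$. These yield the damping $-cR$ and $-cQ$ on the $\beta$-norms, together with the crucial sign-definite reserves $-R\int_0^L\beta(x,0,t)^2dx$ and $-Q\beta(L,0,t)^2$, which will absorb every coupling into $\beta(\cdot,0,t)$ via Young's inequality. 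The $x$-integrations by parts produce the damping $-Ac_1a\int e^{-ax}z^2$, $-B(2k+c_4b)\int e^{bx}v^2$, $-2kP\,v(L,t)^2$, the anti-damping $+2Ac_1c_2\int e^{-ax}z^2$ from the reaction term, and the indefinite boundary terms $+Ac_1z(0,t)^2=Ac_1c_7^2v(0,t)^2$, $-Bc_4v(0,t)^2$ and $+Bc_4e^{bL}v(L,t)^2$.

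The heart of the argument is then to fix the constants in a non-circular order so that all indefinite terms are dominated. I would first choose $a>2c_2$, making $Ac_1(2c_2-a)\int e^{-ax}z^2<0$ and killing the anti-damping, and take any $b>0$. I would then fix the Young's parameters and choose $A$ small relative to $B$, so that $Bc_4v(0,t)^2$ beats $Ac_1c_7^2v(0,t)^2$ and the $z$--$v$ coupling through $\tfrac{kc_3}{c_6}e^{c_2x}v$ splits into a $z^2$-piece absorbed by the $z$-damping (using $e^{(c_2-a)x}\le e^{c_2L}e^{-ax}$) and a $v^2$-piece absorbed by the $v$-damping (using $e^{(c_2-a)x}\le e^{bx}$). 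Next I would take $P$ large so that $2kP\,v(L,t)^2$ dominates $Bc_4e^{bL}v(L,t)^2$, then $R$ large so that $R\int\beta(x,0,t)^2dx$ dominates the $\beta(x,0,t)$-couplings in \eqref{equ-z1}, \eqref{equ-v1}, and finally $Q$ large so that $Q\beta(L,0,t)^2$ dominates the $-2c_6P\,v(L,t)\beta(L,0,t)$ coupling from \eqref{equ-bnd-v1}. Since $R$ and $Q$ enter only the sign-definite $\beta$-functionals, they create no new indefinite terms, so the selection closes. This gives $\dot V\le-\theta V$ with $\theta=\min\{c_1(a-2c_2),\,2k+c_4b,\,2k,\,c\}$ up to the Young's losses, hence $V(t)\le V(0)e^{-\theta t}$, and the norm equivalence converts this into \eqref{target-iequ} with $N=\overline m/\underline m$.

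The main obstacle I anticipate is precisely this bookkeeping of indefinite terms. The reaction term $c_1c_2z$ forces the \emph{decreasing} weight $e^{-ax}$ on $z$, while the feedback $z(0,t)=-c_7v(0,t)$ couples the $z$- and $v$-inflows, so the weight exponents and the coefficient ordering must be chosen jointly to simultaneously beat the anti-damping, the $v(0,t)$ and $v(L,t)$ boundary terms, and the $\beta(\cdot,0,t)$ couplings. Verifying that this can be done without circular dependence — which rests on the decoupling of $v$ from $z$ in \eqref{equ-v1} and on the sign-definite $\beta$-functionals supplying the $\beta(\cdot,0,t)^2$ reserves — is the crux of the proof.
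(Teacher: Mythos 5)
Your proposal follows essentially the same route as the paper's proof: the paper's Lyapunov functional \eqref{stable-v0} is exactly your $V$ (with a single weight $\sigma$ in place of your $a,b,c$ and coefficients $1,k_2,k_3,k_4,k_5$ in place of your $A,B,P,R,Q$), and the paper likewise integrates by parts, absorbs the indefinite couplings through $v(0,t)$, $v(L,t)$, $\beta(\cdot,0,t)$ and $\beta(L,0,t)$ via Young's inequality, and fixes the constants in the same non-circular order \eqref{range-parameter} ($\sigma$ first, then the Young parameter, then $k_2,k_3,k_4,k_5$) before converting $\dot V_0\le-\vartheta V_0$ into \eqref{target-iequ} by equivalence of the weighted and unweighted energies. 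Your plan is correct and matches the paper's argument step for step, including the exploitation of the cascade structure (no $z$ in \eqref{equ-v1}) and of the sign-definite $\beta(\cdot,0,t)^2$ reserves coming from the homogeneous condition \eqref{equ-bnd-beta}.
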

\begin{proof}        
First, define a Lyapunov function for  target system \eqref{equ-z1}-\eqref{equ-bnd-beta}
\begin{align}
V_0(t) =&  \int_{0}^{L}\mathrm e^{-\sigma x}{z}^2(x,t)dx+k_2\int_{0}^{L}\mathrm e^{\sigma x}{v}^2(x,t)dx\nonumber 
\\&+k_3v^2(L,t)+k_4\int_{0}^{L}\int_{0}^{D}\mathrm e^{\sigma (s+x)}\beta ^2(x,s,t)dsdx
\nonumber \\
&+k_5\int_{0}^{D}\mathrm{e}^{\sigma s}\beta ^2(L,s,t)ds,\label{stable-v0}
\end{align}
where  $k_i$, $\sigma>0$, $i=2,3,4,5$, whose ranges  will be determined later.

Differentiating \eqref{stable-v0} with respect to time, we get
\begin{align}
\dot{V_0} (t)={I}(t)+{II}(t)+{III}(t),
\end{align}
where
\begin{align}
    {I}(t)=&-2 \int_{0}^{L}c_1\mathrm e^{-\sigma x}z(x,t){z}_{x}(x,t) dx\nonumber\\
    &2 \int_{0}^{L}c_1c_2\mathrm e^{-\sigma x}z^2(x,t)dx\nonumber\\
    &-2 \int_{0}^{L}c_3\mathrm e^{(c_2-\sigma) x}z(x,t)\beta(x,0,t) dx\nonumber\\
    &-2 \int_{0}^{L}\frac{kc_3}{c_6}\mathrm e^{(c_2-\sigma) x}z(x,t)v(x,t)dx,\\
    {II}(t)=&2k_2\int_{0}^{L}c_4e^{\sigma x}v(x,t)v_{x}(x,t)dx\nonumber\\
    &-2k_2\int_{0}^{L}k\mathrm e^{\sigma x}{v}^2(x,t)dx\nonumber\\
    &-2k_2\int_{0}^{L}c_6\mathrm e^{\sigma x}v(x,t)\beta(x,0,t)dx\nonumber\\
    &-k_3c_6v(L,t)\beta(L,0,t)-2k_3kv^2(L,t),\label{equ-II}\\
    {III}(t)=&-k_4\sigma\int_{0}^{L}\int_{0}^{D}\mathrm e^{\sigma(x+s)}\beta^2(x,s,t)dx\nonumber\\
    &-k_4\int_{0}^{L}\mathrm e^{\sigma x}\beta^2(x,0,t)dx\nonumber \\
    &-k_5\sigma\int_{0}^{D}\mathrm{e}^{\sigma s}\beta^2(L,s,t)ds-k_5\beta^2(L,0,t).\label{equ-III}
\end{align}
By using  Cauchy-Schwarz Inequality, Young's Inequality and letting  $ E=\sup_{x\in[0,L]} \{ \mathrm e^{c_2x} \} $, we get
\begin{align}
&{I}(t)\le
\frac{2Ekc_3}{c_6}\int_{0}^{L}\mathrm e^{\sigma x}v^2dx+ c_1c_7^2v^2(0,t)\nonumber\\
&~~- \left(c_1\sigma-2c_1c_2-\frac{Ec_3}{2}-\frac{Ekc_3}{2c_6}\right)\int_{0}^{L}\mathrm e^{-\sigma x}z^2dx\nonumber\\
&~~+2Ec_3\int_{0}^{L}\mathrm e^{\sigma x}\beta^2(x,0,t)dx,\label{equ-I}\\
&{II}(t)\le
-k_2(c_4\sigma -\frac{c_6}{2}+2k)\int_{0}^{L}\mathrm e^{\sigma x}v^2dx-k_2c_4v^2(0,t)\nonumber\\
&~~+2k_2c_6\int_{0}^{L}\mathrm e^{\sigma x}\beta ^2(x,0,t)dx
+\frac{p}{2}k_3c_6\beta^2(L,0,t)\nonumber\\
&~~-(2k_3k-\frac{k_3c_6}{2p}-k_2c_4\mathrm{e}^{\sigma L})v^2(L,t),
\end{align}
where  $\int_{0}^{L}\mathrm e^{-\sigma x}\beta^2(x,0,t)dx\leq \int_{0}^{L}\mathrm e^{\sigma x}\beta^2(x,0,t)dx$ is used for getting \eqref{equ-I} and $p>0$ is a free parameter of Young's Inequality.
Thus, we get  
\begin{align}
&\dot{V_0}(t)\le
-\left[k_2(c_4\sigma-\frac{c_6}{2}+2k)-\frac{2Ekc_3}{c_6}\right]\int_{0}^{L}\mathrm e^{\sigma x}v^2dx\nonumber\\
&~~- \left(c_1\sigma-2c_1c_2-\frac{Ec_3}{2}-\frac{Ekc_3}{2c_6}\right)\int_{0}^{L}\mathrm e^{-\sigma x}z^2dx\nonumber\\
&~~-k_4\sigma\int_{0}^{L}\int_{0}^{D}\mathrm e^{\sigma(x+s)}\beta^2(x,s,t)dx\nonumber\\
&~~-\left(2k_3k-\frac{k_3c_6}{2p}-k_2c_4\mathrm{e}^{\sigma L}\right)v^2(L,t)\nonumber \\&~~-\left(k_4-2Ec_3-2k_2c_6\right)\int_0^L\beta^2(x,0,t)\nonumber\\
&~~-(k_2c_4- c_1c_7^2)v^2(0,t)-k_5\sigma\int_{0}^{D}\mathrm{e}^{\sigma s}\beta ^2(L,s,t)ds\nonumber\\
&~~-(k_5-\frac{p}{2}k_3c_6)\beta^2(L,0,t).
\end{align}
Choose the  parameters as: 
\begin{align}
        \label{range-parameter}
        \begin{cases}
                \sigma&>2c_2+\frac{Ec_3}{2c_1}+\frac{Ekc_3}{2c_1c_6} , \\
               p &> \frac{c_6}{4k},\\
k_2&>\max\{\frac{2E kc_3}{c_6(c_4\sigma -\frac{c_6}{2}+2k)},
 \frac{ c_1c_7^2}{c_4}\},
 \\ k_3&>\frac{k_2c_4\mathrm{e}^{\sigma L}}{2k-\frac{c_6}{2p}}, \\ k_4&>2E c_3+2k_2c_6,\\
                k_5&>\frac{2k_3c_6}{p},
        \end{cases}
\end{align}such that        
\begin{align*}
&\dot{V_0}(t)\le
-\left[k_2(c_4\sigma -\frac{c_6}{2}+2k)-\frac{2E kc_3}{c_6}\right]\int_{0}^{L}
\mathrm e^{\sigma x}v^2dx\nonumber\\
&~~- \left(c_1\sigma-2c_1c_2-\frac{Ec_3}{2}-\frac{Ekc_3}{2c_6}\right)
\int_{0}^{L}\mathrm e^{-\sigma x}z^2dx\nonumber\\
&~~-k_4\sigma\int_{0}^{L}\int_{0}^{D}\mathrm e^{\sigma(x+s)}
\beta^2(x,s,t)dx\nonumber\\
&~~-\left[k_3(2k-\frac{c_6}{2p})-k_2c_4\mathrm{e}^{\sigma L}\right]
v^2(L,t)\nonumber\\
&~~-k_5\sigma\int_{0}^{D}\mathrm{e}^{\sigma s}\beta ^2(L,s,t)ds\leq -\vartheta V(t) ,
\end{align*} 
where 
\begin{align*}
\vartheta=&\min \{k_2(c_4\sigma -\frac{c_6}{2}+2k)-\frac{2E kc_3}{c_6},\\
& k_4\sigma,~ (c_1\sigma-2c_1c_2-\frac{Ec_3}{2}-\frac{Ekc_3}{2c_6}),~
\\
&k_3(2k-\frac{c_6}{2})-k_2c_4\mathrm{e}^{\sigma L},~
k_5\sigma\}. 
\end{align*}
Therefore, we get
\begin{align}
V_0(t)\le& V_0(0)\mathrm{e}^{-\vartheta t}.
\end{align}      
It is obvious that   $V_0(t)$ defined by \eqref{stable-v0} is equivalent to  $V_2$  defined by \eqref{equ-V2}, i.e., there exist positive constants $\varrho_{1}$ and $\varrho_2$,  such that $\varrho_{1}V_2\leq V_0\leq \varrho_{2} V_2 $. Hence,  \eqref{target-iequ} is proved. 
\end{proof}

\subsection{The inverse transformation }       
\begin{lemma}\label{inverse-lamma}
\rm{  The transformation \eqref{trans-beta} is invertible, and whose inverse transformation is  
        \begin{subequations}\label{inverse-psi}
                \begin{align}
                        \psi(x,s,t)=
                        &Q_1[\beta(t)](x,s)+R_1[v(t)](x,s),\\ 
                        &~~0\le x < c_1s; \nonumber\\
                        \psi(x,s,t)=
                        &Q_2[\beta(t)](x,s)+R_2[v(t)](x,s)\nonumber\\ 
                        &+\mathcal{B}[z(t)](x,s),\\
                        &~~c_1s<x\le L-c_4s;\nonumber\\ 
                        \psi(x,s,t)=
                        &Q_3[\beta(t)](x,s)+R_3[v(t)](x,s)\nonumber\\ 
                        &+\mathcal{B}[z(t)](x,s),\\ 
                        &~~L-c_4s<x\le L; \nonumber 
                \end{align}
        \end{subequations}
where
\begin{align}
&Q_1[\beta(t)](x,s)=\beta(x,s,t)\nonumber\\
&~~~-\int_{\frac{x}{c_1}}^{s}\int_{c_4(\tau-\frac{x}{c_1})}^{x+c_4\tau} \frac{kc_1c_2}{c_1+c_4}\mathrm{e}^{\frac{k(x-y-c_1\tau)}{c_1+c_4}}\beta(y,s-\tau,t)dyd\tau\nonumber\\
&~~~-\int_{0}^{\frac{x}{c_1}}\int_{x-c_1\tau}^{x+c_4\tau} \frac{kc_1c_2}{c_1+c_4}\mathrm{e}^{\frac{k(x-y-c_1\tau)}{c_1+c_4}}\beta(y,s-\tau,t)dyd\tau\nonumber\\
&~~~-\int_{\frac{x}{c_1}}^{s}c_2c_4\mathrm{e}^{k(\frac{x}{c_1}-\tau)}\beta (c_4(\tau-\frac{x}{c_1}),s-\tau,t)d\tau\nonumber\\
&~~~+\int^{\frac{x}{c_1}}_{0}c_1c_2\beta (x-c_1\tau,s-\tau,t)d\tau\nonumber\\
&~~~-\int_{0}^{s}k\mathrm{e}^{-k\tau}\beta (x+c_4\tau,s-\tau,t)d\tau,\\
&Q_2[\beta(t)](x,s)=\beta(x,s,t)\nonumber\\
&~~~-\int_{0}^{s}\int_{x-c_1\tau}^{x+c_4\tau} \frac{kc_1c_2}{c_1+c_4}\mathrm{e}^{\frac{k(x-y-c_1\tau)}{c_1+c_4}}\beta(y,s-\tau,t)dyd\tau\nonumber\\
&~~~+\int_{0}^{s}c_1c_2\beta(x-c_1\tau,s-\tau,t)d\tau\nonumber\\
&~~~-\int_{0}^{s}k\mathrm{e}^{-k\tau}\beta (x+c_4\tau,s-\tau,t)d\tau,\\
&Q_3[\beta(t)](x,s)=\beta(x,s,t)\nonumber\\
&~~~-\int_{0}^{s}\int_{x-c_1\tau}^{\min\{L,x+c_4\tau\}} \frac{kc_1c_2}{c_1+c_4}\mathrm{e}^{\frac{k(x-y-c_1\tau)}{c_1+c_4}}\nonumber\\
&~~~\times\beta(y,s-\tau,t)dyd\tau\nonumber\\
&~~~+\int_{0}^{s}c_1c_2\beta(x-c_1\tau,s-\tau,t)d\tau\nonumber\\
&~~~-\int_{\frac{L-x}{c_4}}^{s}(c_1c_2\mathrm{e}^{\frac{k(x-L-c_1\tau)}{c_1+c_4}}-c_1c_2\mathrm{e}^{-k\tau}+k\mathrm{e}^{-k\tau})\nonumber\\
&~~~\times\beta (L,s-\tau,t)d\tau\nonumber\\
&~~~-\int_{0}^{\frac{L-x}{c_4}}k\mathrm{e}^{-k\tau}\beta (x+c_4\tau,s-\tau,t)d\tau,\\
&R_1[v(t)](x,s)=\frac{c_2c_4}{c_6}\mathrm{e}^{k(\frac{x}{c_1}-s)}v(c_4(s-\frac{x}{c_1}),t)\nonumber\\
&~~~+\frac{k}{c_6}\mathrm{e}^{-ks}v(x+c_4s,t)\nonumber\\
&~~~+\int_{c_4(s-\frac{x}{c_1})}^{x+c_4s} \frac{kc_1c_2}{c_6(c_1+c_4)}\mathrm{e}^{\frac{k(x-y-c_1s)}{c_1+c_4}}v(y,t)dy,\\
&R_2[v(t)](x,s)=\frac{k}{c_6}\mathrm{e}^{-ks}v(x+c_4s,t)\nonumber\\
&~~~+\int_{x-c_1s}^{x+c_4s} \frac{kc_1c_2}{c_6(c_1+c_4)}\mathrm{e}^{\frac{k(x-y-c_1s)}{c_1+c_4}}v(y,t)dy,\\
&R_3[v(t)](x,s)=\int_{x-c_1s}^{L} \frac{kc_1c_2}{c_6(c_1+c_4)}\mathrm{e}^{\frac{k(x-y-c_1s)}{c_1+c_4}}v(y,t)dy\nonumber\\
&~~~+\left(\frac{c_1c_2}{c_6}\mathrm{(e}^{\frac{k(x-L-c_1s)}{c_1+c_4}}-\mathrm{e}^{-ks})+\frac{k}{c_6}\mathrm{e}^{-ks}\right)v(L,t),\\
&\mathcal{B}[z(t)](x,s)=-\frac{c_5}{c_6}\mathrm{e}^{c_2(c_1s-x)}z(x-c_1s,t).
\end{align} 
The inverse transformation is bounded and  continuous in $x\in [0,L]  $.}   
\end{lemma}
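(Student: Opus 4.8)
The plan is to use the fact that only the actuator state is transformed while the $(z,v)$-states are shared by both systems, so that the forward map \eqref{trans-beta1} reads, for frozen $(z,v)$, as $\beta=(I+\mathcal G)\psi+\mathcal F[z,v]$, where $\mathcal G[\psi](x,s)=\int_0^L\int_0^s G\,\psi\,dr\,dy$ is a Volterra operator in the age variable $s$ and $\mathcal F$ collects the $\gamma,\eta,\mathfrak r$ contributions. Because $\mathcal G$ is strictly lower-triangular in $s$, the operator $(I+\mathcal G)$ is boundedly invertible by a Neumann series, so an inverse of the same affine Volterra type necessarily exists. Concretely I would posit the inverse as $\psi=\beta+\int_0^L\bar\gamma z+\int_0^L\bar\eta v+\bar{\mathfrak r}\,v(L,t)+\int_0^L\int_0^s\bar G\,\beta$ with unknown kernels, and require that it carry the target system \eqref{equ-z1}--\eqref{equ-bnd-beta} back into the original system \eqref{equ-z}--\eqref{equ-bnd-psi}; matching the $z$-, $v$-, and $\beta$-terms yields kernel PDEs with boundary data structurally identical to \eqref{equ-gamma}--\eqref{bnd-eta2}. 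Equivalently, one may simply substitute \eqref{trans-beta} into \eqref{inverse-psi} and check that the composition is the identity.

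I would then solve these inverse kernel equations explicitly by the same characteristic-line plus successive-approximation argument used for the forward kernels in Appendix \ref{kernel-solution}, valid under Assumption \ref{assump-1}. The same two characteristics $x=c_1s$ and $x=L-c_4s$ partition the domain into the three regions of \eqref{inverse-psi}, which is exactly why the inverse splits into $Q_1,Q_2,Q_3$. A convenient structural feature is that the Dirac masses carried by $\gamma_2,\eta_2,\eta_3$ in the forward kernels collapse, under inversion, into the boundary trace terms of $R_1,R_2,R_3$ and $\mathcal B$, so the inverse contains no distributional kernels. Boundedness is then immediate: every surviving kernel is a bounded exponential (such as $\mathrm{e}^{k(x-y-c_1\tau)/(c_1+c_4)}$ and $\mathrm{e}^{-k\tau}$) on the compact domain, and each point term is a bounded trace operator, so $Q_i,R_i,\mathcal B$ are all bounded on $L_2$.

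The delicate heart of the proof is continuity of $\psi$ in $x$ across the two interfaces. The outlet interface $x=L-c_4s$ is essentially geometric: there $\tfrac{L-x}{c_4}\uparrow s$ so the boundary-trace integrals vanish, the upper limit $\min\{L,x+c_4\tau\}$ reduces to $x+c_4\tau$, and the exponent $\tfrac{k(x-L-c_1s)}{c_1+c_4}$ collapses to $-ks$, whereupon $Q_3,R_3$ reduce to $Q_2,R_2$. The inlet interface $x=c_1s$ is the nontrivial one: the term $\mathcal B[z]$ appears in the middle branch but not the first, so continuity is not evident term by term. The resolution uses the plant structure — at $x=c_1s$ one has $\mathcal B[z](x,s)=-\tfrac{c_5}{c_6}z(0,t)$, and invoking the boundary condition \eqref{equ-bnd-z1}, $z(0,t)=-c_7v(0,t)$, together with the coefficient identity $c_2c_4=c_5c_7$, this equals the $R_1$ trace $\tfrac{c_2c_4}{c_6}v(0,t)$; simultaneously the $Q_1$-integrals over $[\tfrac{x}{c_1},s]$ vanish, so $(Q_2,R_2,\mathcal B)$ reduce precisely to $(Q_1,R_1)$. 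Carrying out these two matchings, tracking every degenerating limit and every created trace, is where the real work lies.

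Finally, having established that \eqref{inverse-psi} maps the target system back to the original and is bounded and continuous, I would conclude invertibility by uniqueness: \eqref{trans-beta} (Lemma \ref{trans-beta-th}) sends the original system to the target and \eqref{inverse-psi} sends the target back to the original, and since both systems admit unique solutions for compatible data, the two maps are mutual inverses. This yields precisely the norm equivalence between $V_1$ and $V_2$ that the proof of Theorem \ref{th-hacc} requires.
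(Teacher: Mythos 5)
Your proposal is correct and takes essentially the same approach as the paper, which omits this proof as being ``similar to the proof of Lemma \ref{trans-beta-th}'': derive and verify the explicit inverse kernels with the characteristic-line/successive-approximation machinery of Appendix \ref{kernel-solution}, establish continuity at the interfaces $x=c_1s$ and $x=L-c_4s$ via the boundary condition $z(0,t)=-c_7v(0,t)$ together with the coefficient identity $c_2c_4=c_5c_7$, and read off boundedness from the explicit exponential kernels. Your supplementary Neumann-series existence and uniqueness-of-solutions arguments are consistent additions rather than a genuinely different method.
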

The  proof of Lemma \ref{inverse-lamma} is similar to the proof of Lemma 1, so we will omit the proof due to limited space.

\begin{figure*}[t]
        \centering
        \subfigure[]{
                \begin{minipage}[ht]{0.3\linewidth}
                        \centering
                        \includegraphics[width=1.0\textwidth]{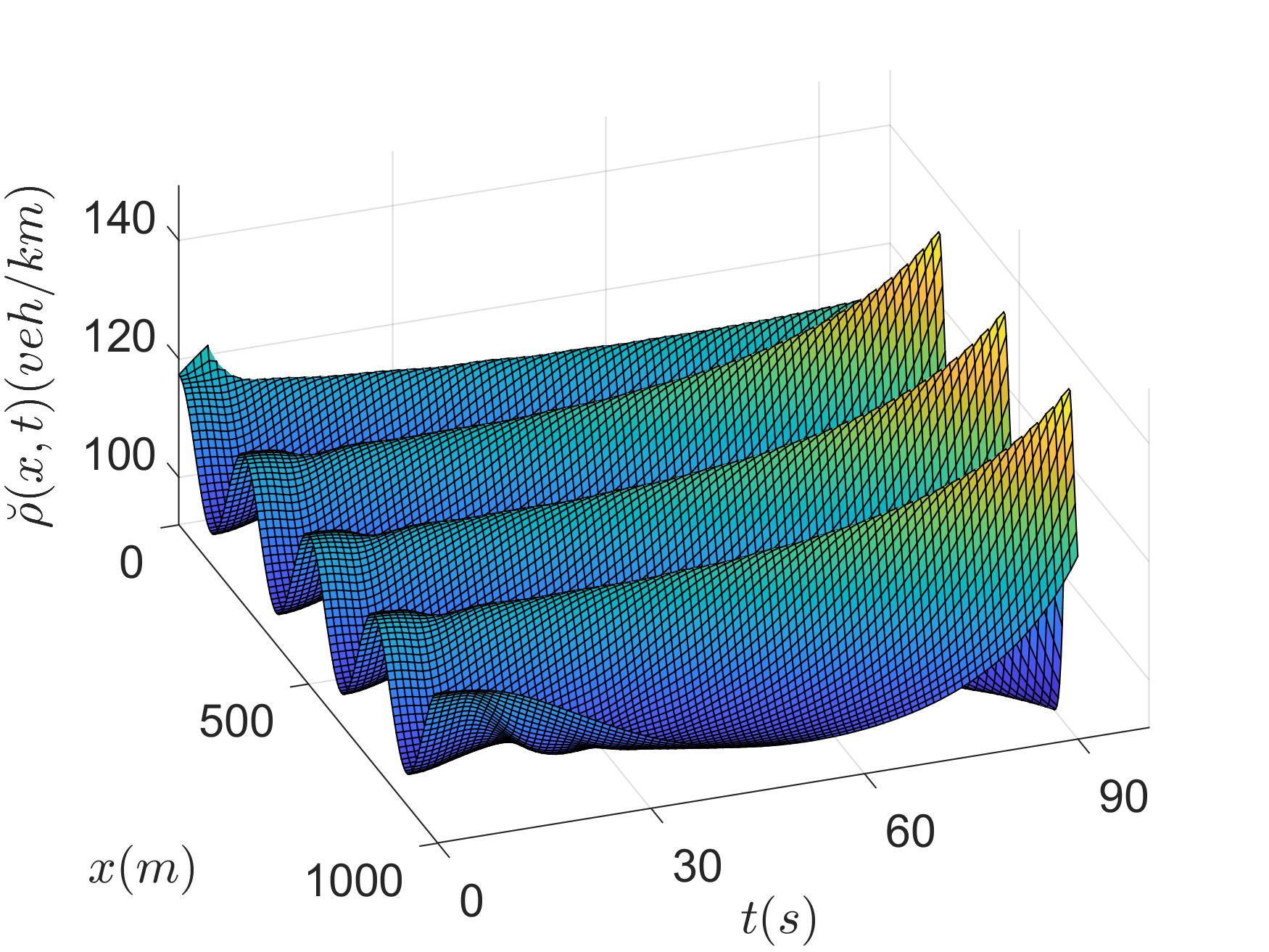}
                \end{minipage}
        }
        \subfigure[]{
                \begin{minipage}[ht]{0.3\linewidth}
                        \centering
                        \includegraphics[width=1.0\linewidth]{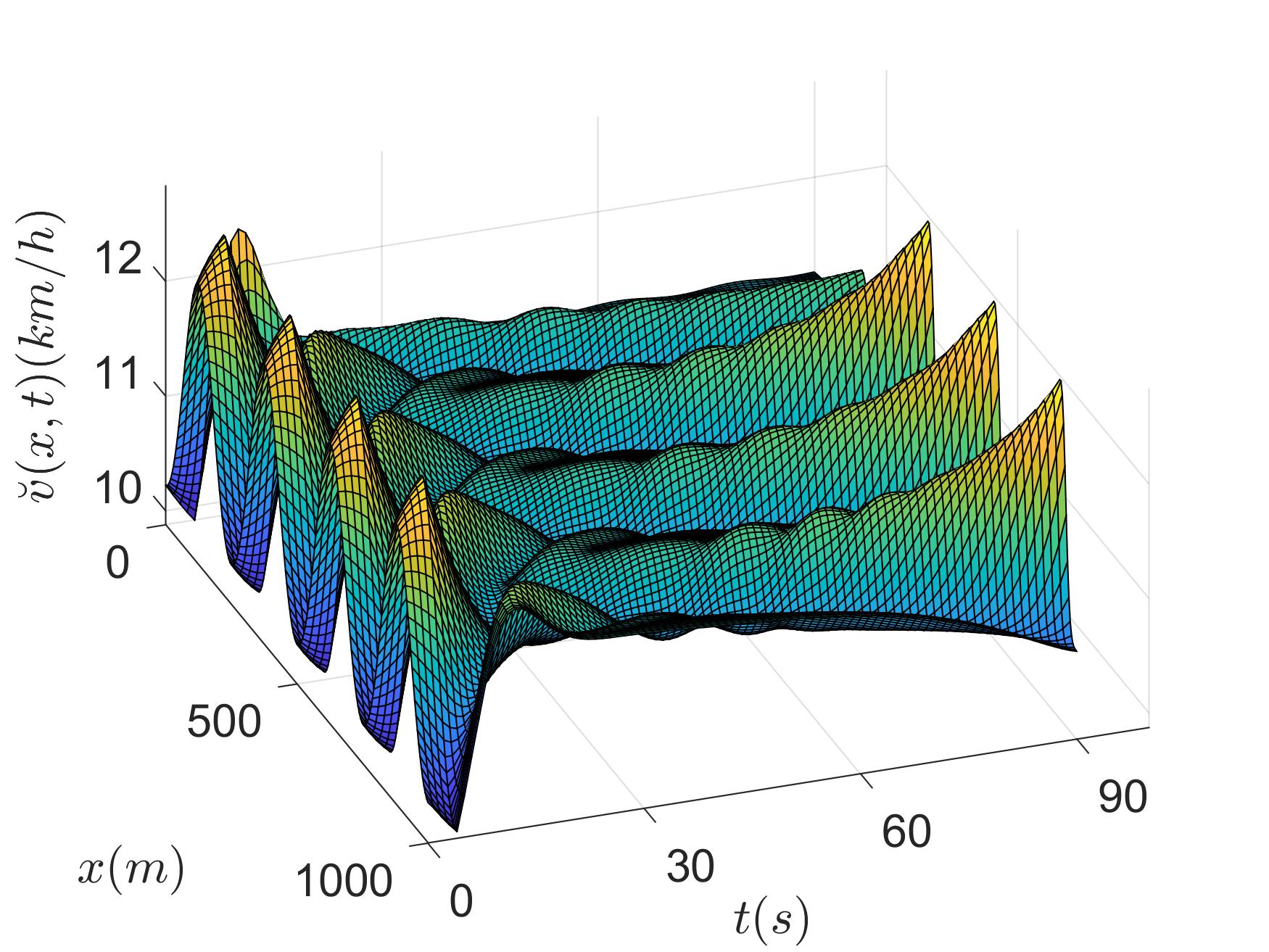}
                \end{minipage}
        }
        \subfigure[]{
                \begin{minipage}[ht]{0.3\linewidth}
                        \centering
                        \includegraphics[width=1.0\linewidth]{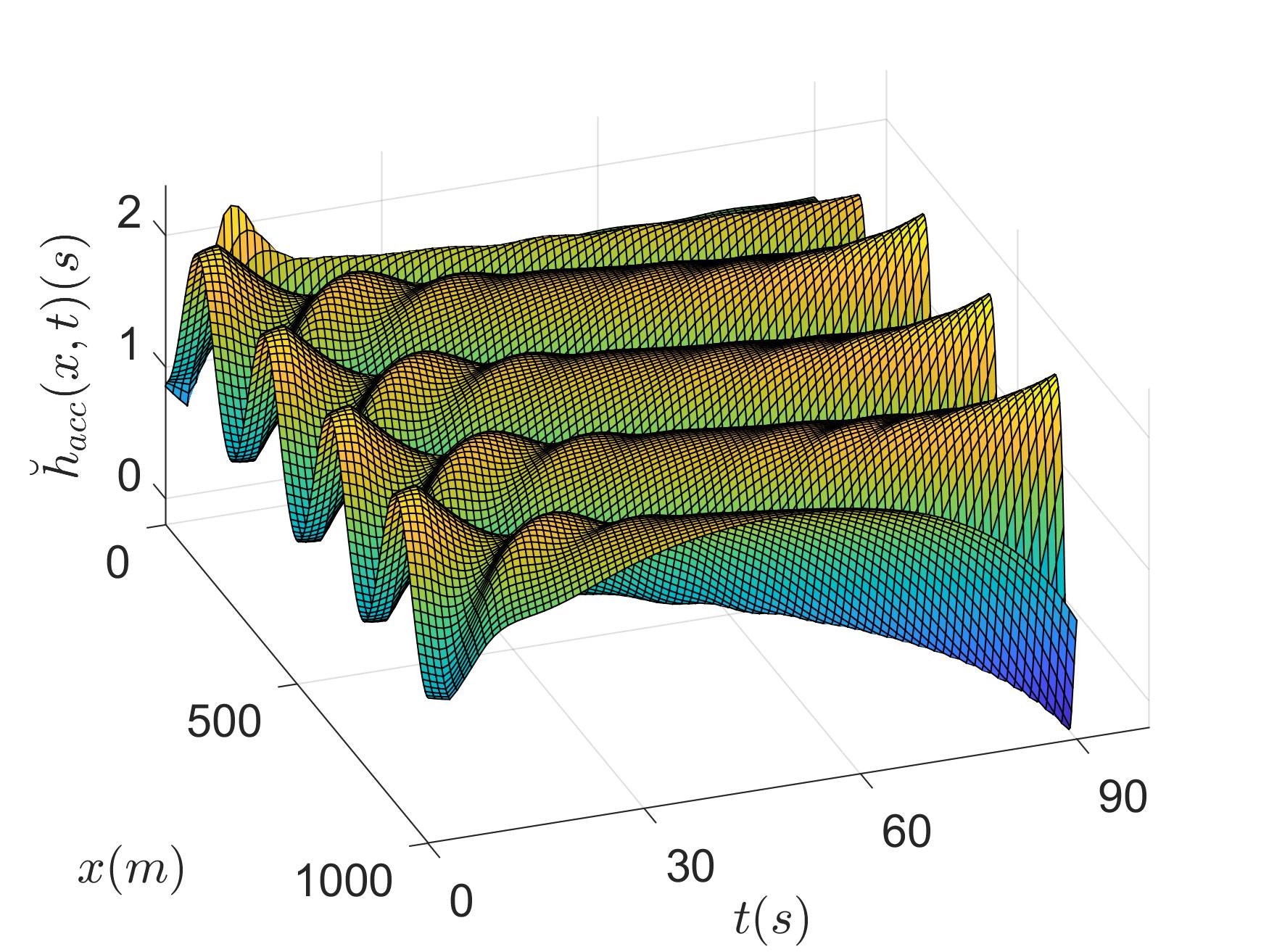}
                \end{minipage}
        }
        
        \subfigure[]{
                \begin{minipage}[ht]{0.3\linewidth}
                        \centering
                        \includegraphics[width=1.0\linewidth]{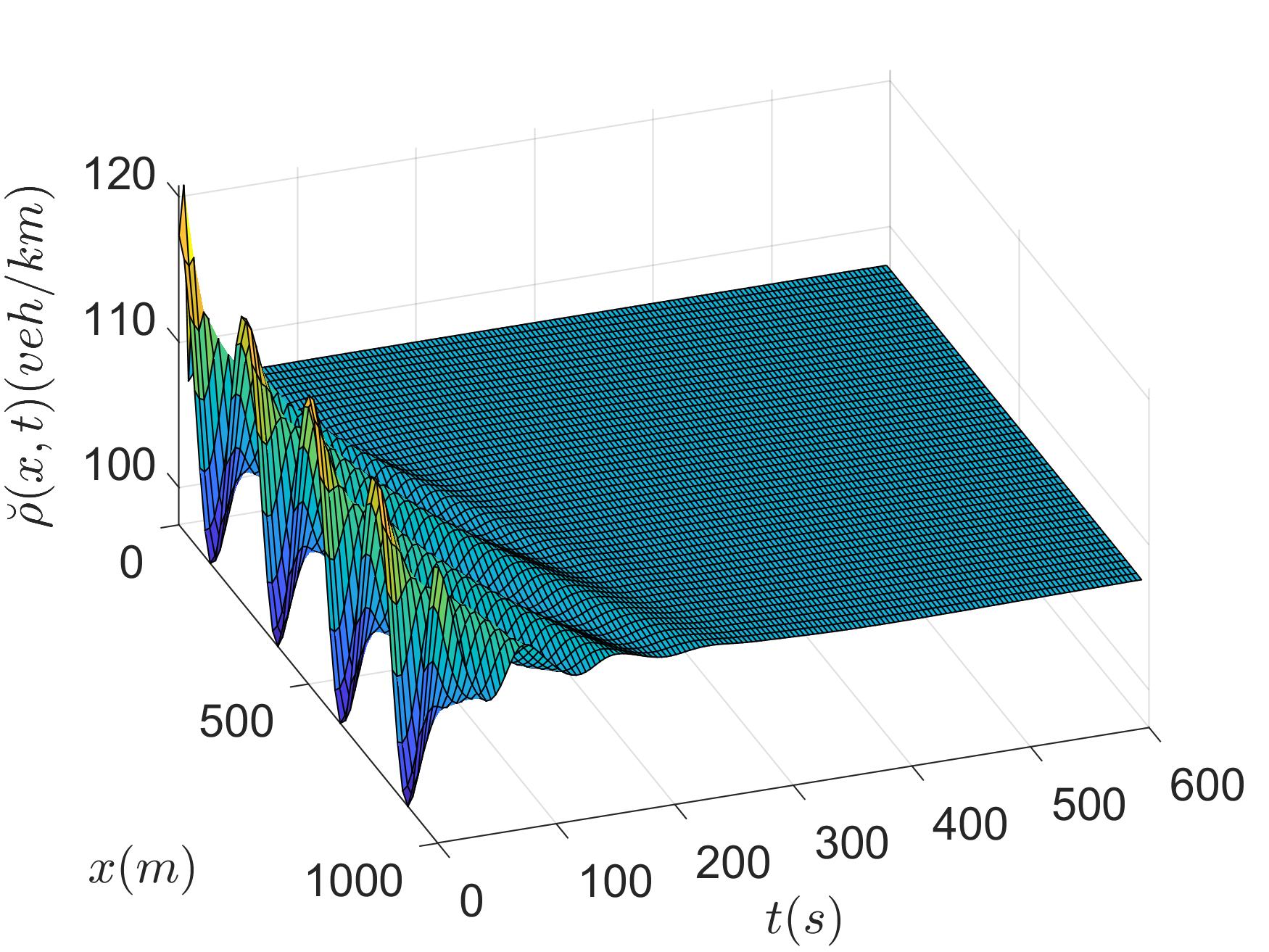}
                \end{minipage}
        }
        \subfigure[]{
                \begin{minipage}[ht]{0.3\linewidth}
                        \centering
                        \includegraphics[width=1.0\linewidth]{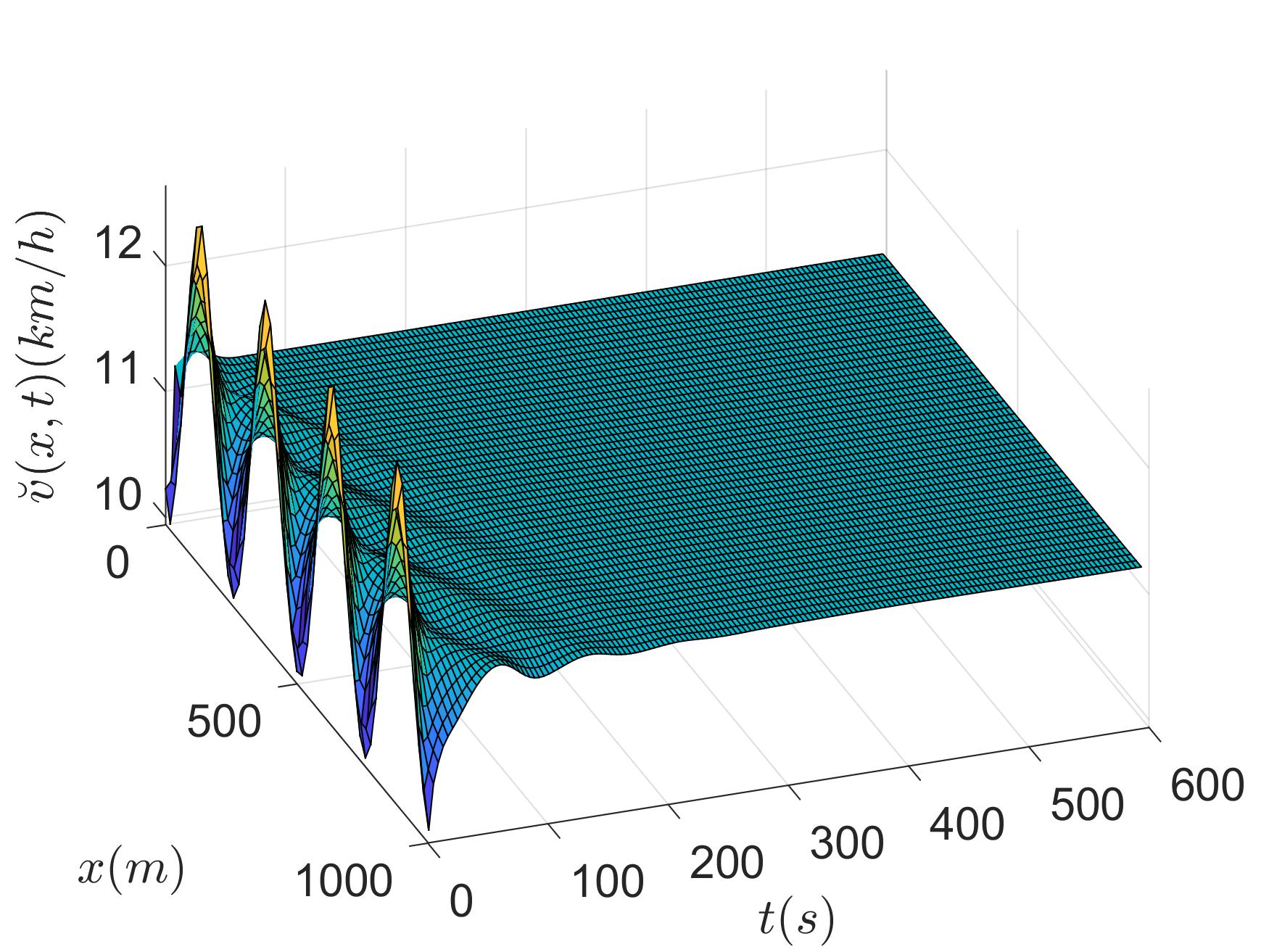}
                \end{minipage}
        }
        \subfigure[]{
                \begin{minipage}[ht]{0.3\linewidth}
                        \centering
                        \includegraphics[width=1.0\linewidth]{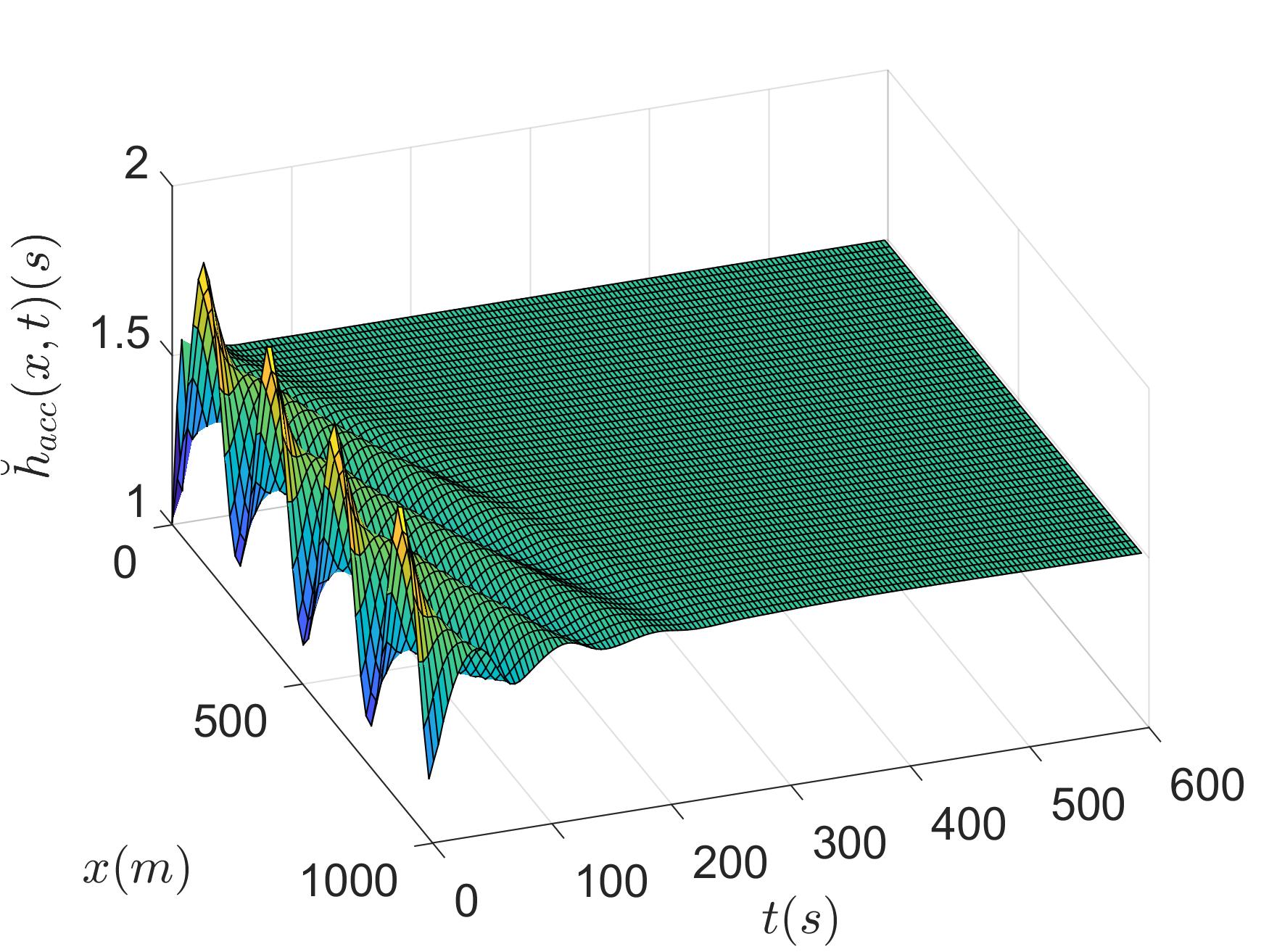}
                \end{minipage}
        }
        
        \subfigure[]{
                \begin{minipage}[ht]{0.3\linewidth}
                        \centering
                        \includegraphics[width=1.0\linewidth]{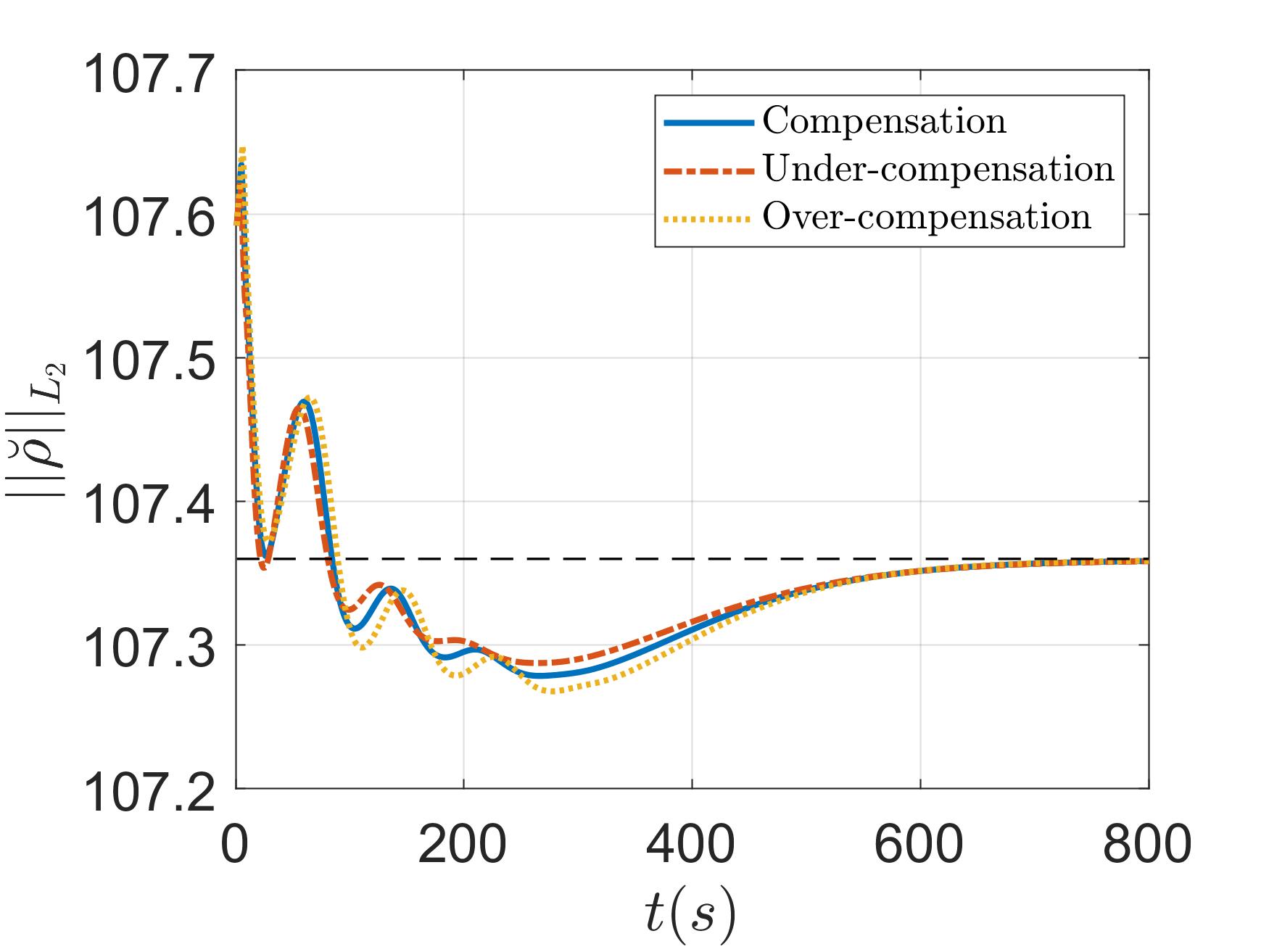}
                \end{minipage}
        }
        \subfigure[]{
                \begin{minipage}[ht]{0.3\linewidth}
                        \centering
                        \includegraphics[width=1.0\linewidth]{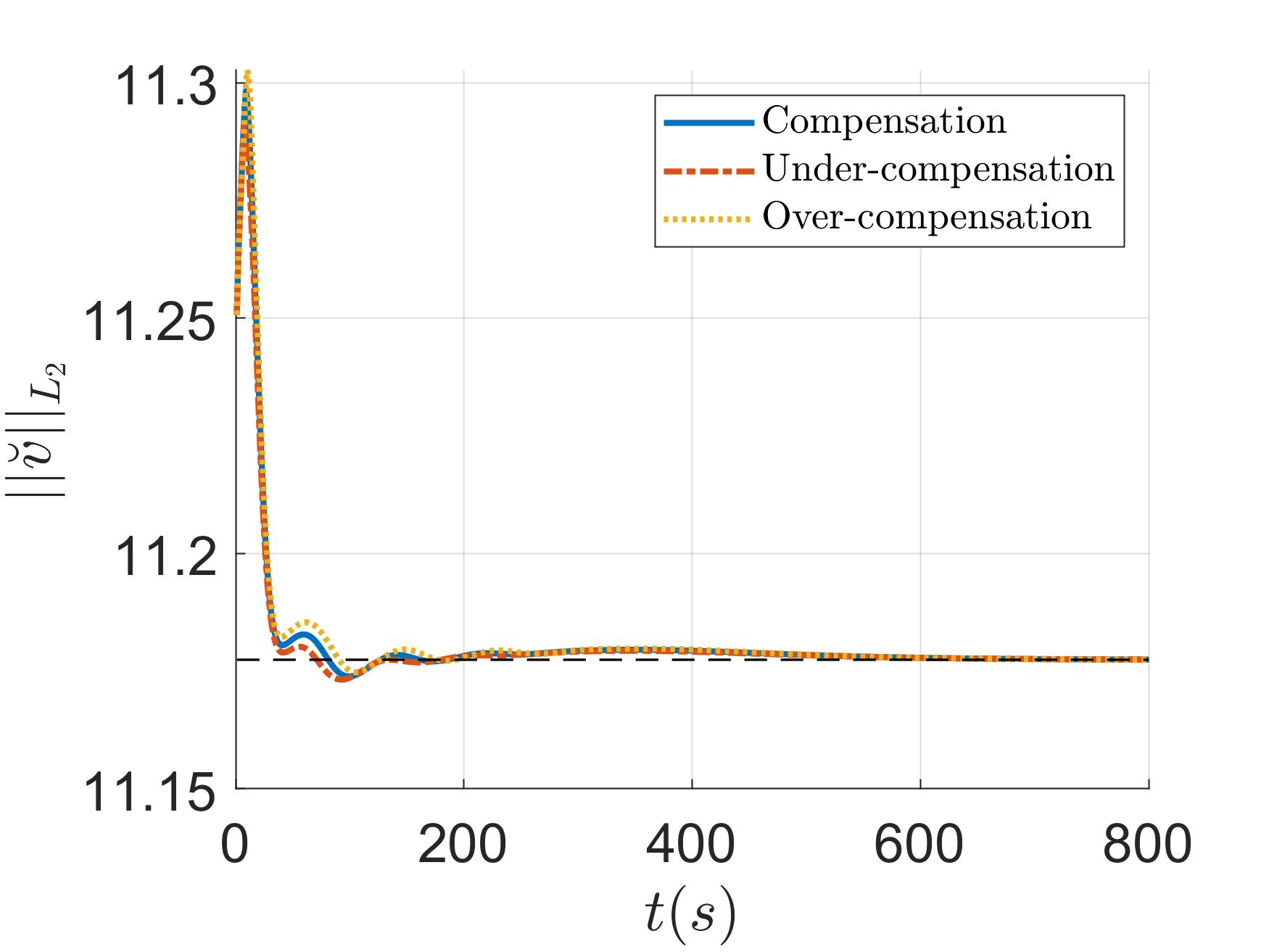}
                \end{minipage}
        }
        \subfigure[]{
                \begin{minipage}[ht]{0.3\linewidth}
                        \centering
                        \includegraphics[width=1.0\linewidth]{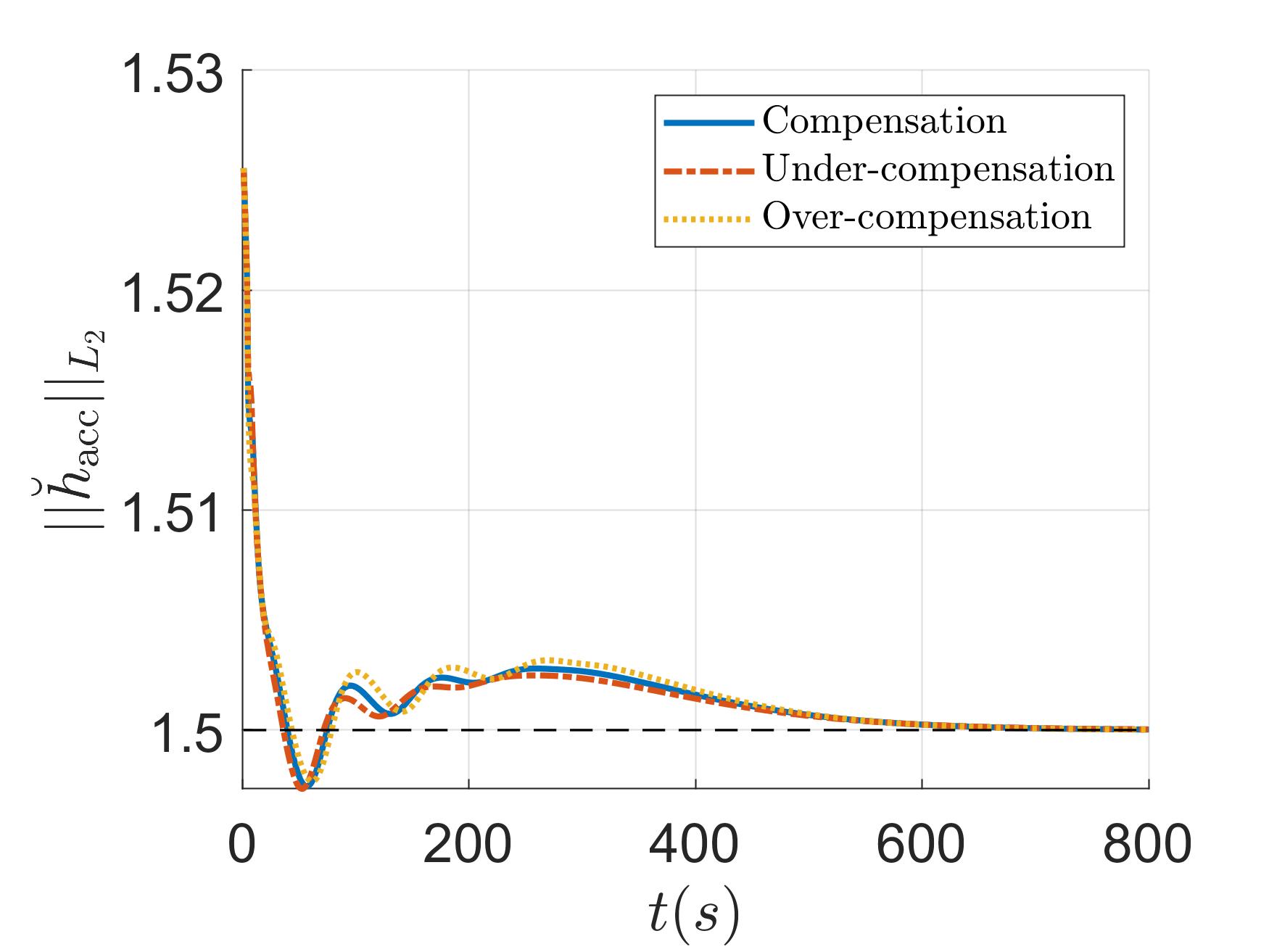}
                \end{minipage}
        }
        
        \caption{The evolution of the traffic states and the actuator: (a) $ \breve \rho(x,t)$ without delay-compensation; (b)  $ \breve v(x,t)$ without delay-compensation; (c) control effort $\breve  h_{\rm{acc}}(x,t)$ with no delay-compensation; (d)  $ \breve \rho(x,t)$ with delay-compensator; (e) state $\breve v(x,t)$ with delay-compensator; (f) control effort $ \breve h_{\rm{acc}}(x,t)$ with delay-compensator; (g) the $L_2$ norm of state $ \breve \rho(x,t)$ for delay matched and delay mismatched  cases; (h) the $L_2$ norm of state $\breve v(x,t)$  for delay matched and delay mismatched  cases; (i) the $L_2$ norm of control $\breve h_{\rm{acc}}(x,t)$  for delay matched and delay mismatched  cases.} 
        \label{fig:1}
\end{figure*}

\begin{lemma}\label{th-norm-V}
\rm{   The Lyapunov functions $V_1$ defined in \eqref{equ-V1} 
and $V_2$ defined in \eqref{equ-V2} are equivalent in the sense of the  $L_2$ norm, i.e., there exist positive constants $\alpha_1$ and $\alpha_2$ , such that       
\begin{align}
\alpha_1V_{2}(t)\leq V_{1}(t)\leq \alpha_2 V_2(t).\label{inequ-V}
\end{align}}            
\end{lemma}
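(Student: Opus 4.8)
The plan is to reduce \eqref{inequ-V} to an estimate on the $\psi$- and $\beta$-dependent parts of the two functionals, since $V_1$ and $V_2$ share the terms $\|z\|^2_{L_2}+\|v\|^2_{L_2}+|v(L,t)|^2$ identically. Thus it suffices to bound $\|\psi\|^2_{L_2}+\|\psi(L,\cdot,t)\|^2_{L_2}$ by a constant multiple of $V_2$ and, conversely, $\|\beta\|^2_{L_2}+\|\beta(L,\cdot,t)\|^2_{L_2}$ by a constant multiple of $V_1$. For the upper bound $V_1\le\alpha_2 V_2$ I would use the inverse transformation \eqref{inverse-psi}, which represents $\psi$ piecewise as the bounded operators $Q_i$ acting on $\beta$ plus $R_i$ acting on $v$ and $\mathcal{B}$ acting on $z$. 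Applying the triangle inequality and Young's inequality to $|\psi(x,s,t)|^2$, then integrating over $(x,s)\in[0,L]\times[0,D]$ and estimating every Volterra term by the Cauchy--Schwarz inequality (the kernels being bounded on their compact triangular domains by Lemma \ref{inverse-lamma}), yields $\|\psi\|^2_{L_2}\le C(\|\beta\|^2_{L_2}+\|v\|^2_{L_2}+\|z\|^2_{L_2}+|v(L,t)|^2)\le C V_2$. The same estimate evaluated at $x=L$ bounds the trace $\|\psi(L,\cdot,t)\|^2_{L_2}$; here $Q_3[\beta](L,\cdot)$ couples both the interior norm $\|\beta\|^2_{L_2}$ and the boundary trace $\|\beta(L,\cdot,t)\|^2_{L_2}$, both of which are present in $V_2$.

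The lower bound $\alpha_1 V_2\le V_1$ follows symmetrically from the direct transformation \eqref{trans-beta}, whose operators $T_i$, $Z_i$, $Y_i$ are bounded and continuous by Lemma \ref{trans-beta-th}. The identical Young/Cauchy--Schwarz argument gives $\|\beta\|^2_{L_2}+\|\beta(L,\cdot,t)\|^2_{L_2}\le C'(\|\psi\|^2_{L_2}+\|\psi(L,\cdot,t)\|^2_{L_2}+\|v\|^2_{L_2}+\|z\|^2_{L_2}+|v(L,t)|^2)\le C' V_1$. Adding the shared terms to both chains of inequalities and setting $\alpha_1=1/C'$ and $\alpha_2=C$ then delivers \eqref{inequ-V}.

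The main obstacle is controlling the Dirac-delta contributions in the kernels $\gamma_2$, $\eta_2$ and their inverse-map counterparts, which after integration collapse into pointwise, speed-shifted traces such as $z(x-c_1 s,t)$, $v(c_4(s-x/c_1),t)$, $v(x+c_4 s,t)$, and the constant-in-$s$ term $v(L,t)$ appearing in \eqref{trans-beta} and \eqref{inverse-psi}. These are not dominated by $L_2$ norms termwise, so the crux is to bound their $L_2((0,L)\times(0,D))$ norms by the $L_2(0,L)$ norms of $z$ and $v$. I would achieve this by affine changes of variables inside the $(x,s)$ double integral: for example, fixing $s$ and substituting $\xi=x-c_1 s$ gives $\int_0^L\int_0^D|z(x-c_1 s,t)|^2\,ds\,dx\le D\|z\|^2_{L_2}$, and the analogous substitutions for the $v$-traces have Jacobians bounded away from zero because $c_1,c_4>0$, so their contributions are likewise dominated by $\|v\|^2_{L_2}$, while the $v(L,t)$ term contributes only $LD\,|v(L,t)|^2$. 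Assumption \ref{assump-1} is invoked to ensure that the three spatial regions in \eqref{trans-beta}--\eqref{inverse-psi} genuinely partition $[0,L]$ and that the images of these substitutions remain inside $[0,L]\times[0,D]$, so that no boundary effects spoil the bounds.
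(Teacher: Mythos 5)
Your proposal is correct, and it is genuinely more self-contained than the paper's argument. The paper disposes of this lemma in one sentence: since the transformation \eqref{trans-beta} and its inverse \eqref{inverse-psi} are explicit and bounded, the equivalence ``is easily established from Theorem 1.2'' of \cite{anfinsen2019adaptive}. You instead carry out the two-sided estimate by hand, and in doing so you address precisely the point that the citation glosses over: the kernels of these transformations are not bounded functions but contain Dirac deltas, which collapse into shifted point traces such as $z(x-c_1s,t)$, $v(x+c_4s,t)$ and $v(L,t)$, so a bounded-kernel Volterra norm-equivalence theorem does not apply verbatim here. Your Fubini/affine change-of-variables argument, with Jacobians bounded away from zero because $c_1,c_4>0$ and with Assumption \ref{assump-1} keeping the images of the substitutions inside the domain, is exactly the ingredient that dominates these trace terms by the one-dimensional norms already present in $V_1$ and $V_2$; this is the real technical content of the lemma, which the paper leaves implicit. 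Two points to tighten. First, in the trace estimate at $x=L$, the characteristic-line integrals in $Q_3$ such as $\int_0^s c_1c_2\,\beta(L-c_1\tau,s-\tau,t)\,d\tau$ are, pointwise in $s$, evaluations of $\beta$ on a set of measure zero and are not controlled by anything in $V_2$; only after squaring and integrating in $s$, using the same substitution idea (the map $(\tau,s)\mapsto(L-c_1\tau,s-\tau)$ has constant Jacobian $c_1$), do they become multiples of $\|\beta\|^2_{L_2}$ --- so they require the identical treatment as the delta traces, not merely the one-dimensional Cauchy--Schwarz bound you invoke for the $\beta(L,\cdot,t)$ terms. Second, since $V_1$ and $V_2$ share the terms $\|z\|^2_{L_2}+\|v\|^2_{L_2}+|v(L,t)|^2$, the final constants should read $\alpha_2=1+C$ and $\alpha_1=1/(1+C')$ rather than $C$ and $1/C'$; this is only bookkeeping.
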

Since the transformation \eqref{trans-beta} and the inverse transformation \eqref{inverse-psi} are presented in explicit form and they are bounded,  the $L_2$ norm equivalence between  the Lyapunov functions $V_1$ and $V_2$ is  easily established from  Theorem 1.2 \cite{anfinsen2019adaptive}. 
Combining Lemma \ref{lemma-V0}-\ref{th-norm-V}, we  reach Theorem  \ref{th-hacc}.

\section{Simulation}\label{simulation}
%
In this section, we  illustrate our results with   an numerical  example.  
We apply   the control law  \eqref{control-hacc} directly on the nonlinear model   \eqref{traffic_1}-\eqref{traffic_4}. The parameters  utilized in the simulation are shown in   Table \ref{table-1}, for which we choose the  same values as those in    \cite{9018188}.   


\begin{table}[ht]
\centering
\caption{PARAMETERS OF SYSTEM}\label{table-1}
\begin{tabular}{|l|l|l|}
\hline
$L=1000~\rm{m}$ & $l=5~\rm{m}$ & $q_{\rm{in}} = 1200~ \rm{veh}/h$ \\
$\tau_{\rm{acc}}=2~\rm{s}$ & $\tau_{\rm{m}} = 60~\rm{s}$ & $\breve h_{\rm{m}} = 1~\rm{s}$\\
$\bar{h}_{\rm{acc}}=1.5~\rm{s}$ & $\breve \rho_{\rm{min}}=37\rm{veh}/km$ & \\
\hline
\end{tabular}
\end{table}
From  \eqref{equ-hmix},  we get the value of the mixed steady-state time gap  $\bar h_{\rm{mix}}=1. 35~\rm{ s}$.      
The steady-state values for density and speed  derived from \eqref{steady-state} are $ \bar{\rho}=107.36~\rm{ veh/km }$, $ \bar{v}=11.18~\rm{km/h} $. 
The system is discretized   with time step $\Delta t = 0.5~\rm{s} $ and spatial step $\Delta x = 5~\rm{m}$. The initial conditions are chosen as  $ \breve \rho(x,0)=10\cos(8\pi x/L) $ and $ \breve v(x,0)=q_{\rm{in}}/\breve \rho(x,0) $, which imitates the stop-and-go wave in congested regime. Considering the time delay  $D=4~{\rm{s}}$ and the coefficient $k=0.1(1/\rm{s})$ of the target system, we first investigate the  numerical solution of the nonlinear system \eqref{traffic_1}-\eqref{traffic_4} with the non-delay-compensation state-feedback control proposed in \cite{9018188}, whose results are shown in Fig. \ref{fig:1} (a)-(c). It is evident that the response of the system without delay-compensation exhibits an unstable and  oscillatory behavior and the stop-and-go wave propagates backward without being attenuated.     
In contrast, as it is shown in Fig. \ref{fig:1} (d) and (e), the traffic system  \eqref{traffic_1}-\eqref{traffic_4}  is stabilized under the delay-compensator and the oscillations in the speed response are suppressed first and then the oscillations in density response converges to the equilibrium as well.  The control effort \eqref{control-hacc} is shown in Fig. \ref{fig:1} (f), which indicates the resulting values for the time gap of ACC vehicles lie within the  interval $[0.8, 2.2]~\rm{s}$ which is typically implemented in ACC vehicles settings, see, e.g., \cite{nowakowski2011cooperative}. In order to illustrate the converging evolution of the states and the actuator more clearly, we plot the  states and the control effort in $L_2$ norm in solid line, respectively, shown in Fig. \ref{fig:1} (g)-(i), where the dashed line represents the steady-state value.  

To examine the robustness of the proposed delay-compensated control law, we conducted some additional simulations with mismatched delays. First, we consider the over-compensation case where the actual delay is less than the delay used in control. The numerical result is shown in Fig.  \ref{fig:1} (g)-(i) in dotted line when the actual delay is $3 \rm{s}$, while the delay used in control is $4 \rm{s}$.  Then, we consider the under-compensation case where the actual delay is larger than the delay used in control. The numerical result is also  shown in Fig.  \ref{fig:1} (g)-(i),  where the  dash-dotted line represents $L_2$ norm of the states $\breve \rho$, $\breve v$ and the control effort, respectively, when the actual delay is $5 \rm{s}$, while the delay used in control is $4 \rm{s}$.  From the simulation results, we find the closed-loop system remains stable in  delay mismatched conditions. 
          The performance
improvement of the closed-loop system under the proposed
controller \eqref{control-hacc} is also  illustrated in simulation by employing three
 metrics, namely,  total travel time (TTT), 
fuel consumption and comfort, where we use the same definition as in  \cite{treiber2014traffic}:
\begin{align}
        J_{\mathrm{TTT}} &  = \int_{0}^{T} \int_{0}^{L} \rho(x, t) d x d t, \\
        J_{\text {fuel }} &  = \int_{0}^{T} \int_{0}^{L} \xi(x, t) \cdot \rho(x, t) d x d t,\\
        J_{\text {comfort }} & = \int_{0}^{T} \int_{0}^{L}\left(a(x, t)^{2}+a_{t}(x, t)^{2}\right) \rho(x, t) d x d t,
\end{align}
where we also select the same functions and parameters as those in  \cite{treiber2014traffic}:  $a(x,t) = v_{t}(x, t)+v(x, t) v_{x}(x, t)$,
$b_0=25\cdot10^{-3}$,
$b_1=24.5\cdot10^{-6}$,
$b_3=32.5\cdot10^{-9}$,
$b_4=125\cdot10^{-6}$,
$\xi(x, t)=\max \left\{0, b_{0}+b_{1} v(x, t)+b_{3} v(x, t)+b_{4} v(x, t) a(x, t)\right\} $
and $T=300~\rm{s}$.

\begin{table}[ht]
        \centering
        \caption{PERFORMANCE INDICES}\label{table-2}
        \setlength{\tabcolsep}{5mm}{
                \begin{tabular}{|c|c|}
                        \hline
                        Performance indices& Percentage with improvement \\
                        \hline
                        $J_{\mathrm{TTT}}$& $3.91\%$\\\hline
                        $J_{\text {fuel }}$& $3.76\%$\\\hline
                        $J_{\text {comfort }}$&$92.1\%$ \\\hline
        \end{tabular}}
\end{table}
 As shown in the Table \ref{table-2}, three indicators are improved compared to the open-loop system. Since the open-loop system is unstable, we only consider  the first 300 seconds in the simulation.  In particular, the driving comfort is significantly improved  due to  the  homogenization of speed which alleviates the phenomenon of stop-and-go oscillation.  

\section{Conclusion}\label{conclusion}
In this paper, we present a  control design methodology for compensation of  unstable traffic flow with input delay. The delay is resulting from the time required for the transmission of the control command from the control to the ACC vehicles.    Applying the PDE backstepping method, we  develop an explicit   feedback delay-compensator, composed of the feedback of the traffic speed, the traffic density and the historical actuator states, which is divided  into three parts upon  the spatial domain along the highway. 
The closed-loop system, under the developed compensator, was shown to be exponentially stable in the $L_2$ sense. Although the control design is based on a linearized system, the numerical simulation shows the effectiveness  of the  proposed controller on the original nonlinear system.    
Further research would  include delay-robustness and the output feedback control based on observer design.

\appendices

\section{Solving kernel equations}\label{kernel-solution}

        Under Assumption \ref{assump-1}, we get the integral equations from \eqref{equ-gamma}--\eqref{bnd-eta2} by using the characteristic line method, 
\begin{align}
        \gamma(x,s,y)=&-\frac{c_5}{c_6}\mathrm{e}^{-{c}_2x}\delta(x-y-c_1s)\nonumber\\
        &-\int_{y}^{y+c_1s}\frac{c_5}{c_1}\mathrm{e}^{-{c}_2\theta}\eta(x,\frac{y-\theta}{c_1}+s,\theta)d\theta,\label{gamma-1}\\
        &~~\mathrm{if}~~y\le L-c_1s,\nonumber\\
        \gamma(x,s,y)=&-\int^{s-\frac{L-y}{c_1}}_{0}\frac{c_4c_5}{c_1}\mathrm{e}^{-{c}_2L}\eta(x,\theta,L)d\theta\nonumber\\
        &-\int_{y}^{L}\frac{c_5}{c_1}\mathrm{e}^{-{c}_2\theta}\eta(x,\frac{y-\theta}{c_1}+s,\theta)d\theta,\label{gamma-2}\\
        &~~\mathrm{if}~~y> L-c_1s,\nonumber\\
        \eta(x,s,y)=&-\frac{c_1c_7}{c_4}\gamma(x,s-\frac{y}{c_4},0),\label{eta-1}\\
        &~~\mathrm{if}~~y\le c_4s,\nonumber\\
        \eta(x,s,y)=&\frac{k}{c_6}\delta(x-(y-c_4s)),\label{eta-2}\\
        &~~\mathrm{if}~~y> c_4s.\nonumber
\end{align}
Substitute \eqref{eta-1} and \eqref{eta-2} into \eqref{gamma-1} and \eqref{gamma-2}, which gives three-branch expression of $\gamma(x,s,y)$
\begin{align}
        \gamma(x,s,y)=
        &-\int_{0}^{y+c_1s}\frac{kc_5}{c_6(c_1+c_4)}\mathrm{e}^{-\frac{c_1c_2}{c_1+c_4}\theta-\frac{c_2c_4}{c_1+c_4}(y+c_1s)}\nonumber\\
        &\times\delta(x-\theta)d\theta\nonumber\\
        &+\int_{0}^{s-\frac{y}{c_4}}\frac{c_1c_5c_7}{c_1+c_4}\mathrm{e}^{\frac{c_2c_4}{c_1+c_4}(c_1\theta-y-c_1s)}\gamma(x,\theta,0)d\theta\nonumber\\
        &-\frac{c_5}{c_6}\mathrm{e}^{-{c}_2x}\delta(x-y-c_1s),\label{gamma-iteration-1}\\
        &~~\mathrm{if}~~0\le y\le c_4s,\nonumber\\
        \gamma(x,s,y)=
        &-\int_{y-c_4s
        }^{y+c_1s
        }\frac{kc_5}{c_6(c_1+c_4)}\mathrm{e}^{-\frac{c_1c_2}{c_1+c_4}\theta-\frac{c_2c_4}{c_1+c_4}(y+c_1s)}\nonumber\\
        &\times\delta(x-\theta)d\theta\nonumber\\
        &-\frac{c_5}{c_6}\mathrm{e}^{-{c}_2x}\delta(x-y-c_1s),\label{gamma-iteration-2}\\
        &~~\mathrm{if}~~c_4s<y\le L-c_1s,\nonumber\\
        \gamma(x,s,y)=
        &-\int_{y-c_4s
        }^{\frac{c_1+c_4}{c_1}L-\frac{c_4}{c_1}y-c_4s
        }\frac{kc_5}{c_6(c_1+c_4)}\nonumber\\
&\times\mathrm{e}^{-\frac{c_1c_2}{c_1+c_4}\theta-\frac{c_2c_4}{c_1+c_4}(y+c_1s)}\delta(x-\theta)d\theta\nonumber\\
        &-\frac{kc_5}{c_1c_6}\mathrm{e}^{-{c}_2L},\label{gamma-iteration-3}\\
        &~~\mathrm{if}~~L-c_1s<y\le L\nonumber.
\end{align}
We use successive approximations method for  \eqref{gamma-iteration-1} and get the following iterations: 
\begin{align}
&\gamma^{n+1}(x,s,y)=
        -\frac{kc_5}{c_6(c_1+c_4)}\mathrm{e}^{-\frac{c_1c_2}{c_1+c_4}x-\frac{c_2c_4}{c_1+c_4}(y+c_1s)}\nonumber\\
        &-\frac{c_5}{c_6}\mathrm{e}^{-{c}_2x}\delta(x-y-c_1s)\nonumber\\
        &+\int_{0}^{s-\frac{y}{c_4}}\frac{c_1c_5c_7}{c_1+c_4}\mathrm{e}^{\frac{c_2c_4}{c_1+c_4}(c_1\theta-y-c_1s)}\gamma^{n}(x,\theta,0)d\theta,\nonumber\\
   & ~~~~~~~\mathrm{if} ~~~0\leq x\leq y+c_1s, ~~\mathrm{for} ~n=0,1,2,...  
\end{align}
 Let 
\begin{align}
         \Delta\gamma^n(x,s,y)=&\gamma^{n+1}(x,s,y)-\gamma^{n}(x,s,y),
      \end{align}
      then,
      \begin{align}
       \Delta\gamma^0(x,s,y)=
        &-\frac{kc_5}{c_6(c_1+c_4)}\mathrm{e}^{-\frac{c_1c_2}{c_1+c_4}x-\frac{c_2c_4}{c_1+c_4}(y+c_1s)}\nonumber\\
        &-\frac{c_5}{c_6}\mathrm{e}^{-{c}_2x}\delta(x-y-c_1s),\\\Delta\gamma^n(x,s,y)=&\int_{\frac{x}{c_1}}^{s-\frac{y}{c_4}}\frac{c_1c_5c_7}{c_1+c_4}\mathrm{e}^{\frac{c_2c_4}{c_1+c_4}(c_1\theta-y-c_1s)}\nonumber\\
        &\times\Delta\gamma^{n-1}(x,\theta,0)d\theta, 
        \end{align}
 After a series of  iterations of $\gamma^n$, we have 
\begin{align}
        &\Delta\gamma^n=
        -\frac{c_5}{c_6} \mathrm{e}^{-\frac{c_1c_2}{c_1+c_4}x-\frac{c_2c_4}{c_1+c_4}(y+c_1s)}\times\nonumber  \\
 &\left[\frac{c_5c_7(c_1c_5c_7(s-\frac{y}{c_4}-\frac{x}{c_1}))^{n-1}}{(n-1)!(c_1+c_4)^n}
        -\frac{k(c_1c_5c_7(s-\frac{y}{c_4}-\frac{x}{c_1}))^n}{n!(c_1+c_4)^{n+1}}\right]\nonumber\\
        &~~~~~~~~~~~~~~~\mathrm{if}~~0\le y\le c_4s-\frac{c_4}{c_1}x,
     \end{align}
 then,
 \begin{align}      
        \gamma&(x,s,y)=\sum_{n=1}^{+\infty}\Delta\gamma^n(x,s,y)+\Delta\gamma^0(x,s,y)\nonumber\\
        =&-\frac{c_5(k+c_5c_7)}{c_6(c_1+c_4)}\mathrm{e}^{-c_2(x+y)} 
        -\frac{c_5}{c_6}\mathrm{e}^{-{c}_2x}\delta(x-y-c_1s), \nonumber\\ 
        &~~~~~\mathrm{if}~~0\le y\le c_4s-\frac{c_4}{c_1}x.
\end{align}
After a direct computing from \eqref{gamma-iteration-2} and \eqref{gamma-iteration-3}, we reach \eqref{kernel-gamma}.   

\section{The Proof of Lemma \ref{trans-beta-th}}\label{tran-proof}        
First, we prove that the transformation is continuous. 
Substitute $x=c_1s$ into the first and the second case of the transformation \eqref{trans-beta}, which gives
\begin{align}\label{equ-1-bnd}
&T_1[\psi(t)](c_1s,s)+Z_1[z(t)](c_1s)+Y_1[v(t)](c_1s)\nonumber
\\=&\psi(c_{1}s,s,t)\nonumber\\
&-\int_{0}^s c_1c_2\mathrm{e}^{-c_1c_2\tau}\psi(c_{1}s-c_{1}\tau,s-\tau,t)d\tau\nonumber\\
&+\int_{0}^s k\psi(c_{1}s+c_4\tau,s-\tau,t)d\tau\nonumber\\
&-\int_{0}^s \int_{c_{1}s-c_1 \tau}^{c_{1}s+c_{4}\tau} \frac{kc_1c_2}{c_1+c_4}\mathrm{e}^{-\frac{c_1c_2}{c_1+c_4}(c_{1}s-y+c_4\tau)}\nonumber\\
&\times\psi(y,s-\tau,t)dyd\tau\nonumber 
\\&+\int_{0}^{c_{1}s+c_4s}\frac{kc_5}{c_6(c_1+c_4)}\mathrm{e}^{-c_{1}c_2s-\frac{c_2c_4}{c_1+c_4}y}z(y,t)dy
\nonumber  \\&-\frac{c_5c_7}{c_6}\mathrm{e}^{-c_1c_2s}v(0,t) -\frac{k}{c_6}v(c_1s+c_4s,t),
\end{align}
and 
\begin{align}\label{equ-2-bnd}
&T_2[\psi(t)](c_1s,s)+Z_2[z(t)](c_1s)+Y_2[v(t)](c_1s)\nonumber
\\=&\psi(c_{1}s,s,t)\nonumber\\
&-\int_{0}^s c_1c_2\mathrm{e}^{-c_1c_2\tau}\psi(c_{1}s-c_{1}\tau,s-\tau,t)d\tau\nonumber\\
&+\int_{0}^s k\psi(c_{1}s+c_4\tau,s-\tau,t)d\tau\nonumber\\
&-\int_{0}^s \int_{c_{1}s-c_1 \tau}^{c_{1}s+c_{4}\tau} \frac{kc_1c_2}{c_1+c_4}\mathrm{e}^{-\frac{c_1c_2}{c_1+c_4}(c_{1}s-y+c_4\tau)}\nonumber\\
&\times\psi(y,s-\tau,t)dyd\tau\nonumber 
\\&+\frac{c_5}{c_6}\mathrm{e}^{-c_1c_2s}z(0,t)\nonumber \\&+\int_{0}^{c_{1}s+c_4s}\frac{kc_5}{c_6(c_1+c_4)}\mathrm{e}^{-c_{1}c_2s-\frac{c_2c_4}{c_1+c_4}y}z(y,t)dy
\nonumber  \\& -\frac{k}{c_6}v(c_1s+c_4s,t),
\end{align}
It shows that \eqref{equ-1-bnd} equals to \eqref{equ-2-bnd} by using the  the boundary condition \eqref{equ-bnd-z}.  In a similar way, one  can get   
\begin{align*}
T_2[\psi(t)](L-c_4s,s)=&T_3[\psi(t)](L-c_4s,s),\\
Z_2[z(t)](L-c_4s)=& Z_3[z(t)](L-c_4s),\\
Y_2[v(t)](L-c_4s,s)=& \frac{k}{c_6}v(L,t), 
\end{align*}
which implies the transformation  \eqref{trans-beta} in the second and third case has a same value at $x=L-c_4s$.

Second, it is obvious that  the transformation is bounded from the explicit form of  each kernel function. 

Third, differentiating the transformation \eqref{trans-beta} with respect to $t$ and $s$, respectively, and then substituting them into \eqref{equ-beta} and \eqref{equ-bnd-beta}, one can prove the  system \eqref{equ-z}-\eqref{equ-bnd-psi} can be transformed into  \eqref{equ-z1}-\eqref{equ-bnd-beta} via \eqref{trans-beta} after a lengthy computation. The details are shown as follows: 
in the case of  $0\le x < c_1s $,
\begin{align}
&\beta_s(x,s,t)=\psi_s(x,s,t)\nonumber\\
&~~~+\frac{c_4c_5(k+c_5c_7)}{c_6(c_1+c_4)}\mathrm{e}^{-c_2(c_4s+\frac{c_1-c_4}{c_1}x)}z(c_4(s-\frac{x}{c_1}),t)\nonumber\\
&~~~+\frac{kc_4c_5}{c_6(c_1+c_4)}\mathrm{e}^{-c_2(x+c_4s)}z(x+c_4s,t)\nonumber\\
&~~~-\frac{kc_4c_5}{c_6(c_1+c_4)}\mathrm{e}^{\frac{c_2(c_4-c_1)}{c_1}x-c_2c_4s}z(c_4(s-\frac{x}{c_1}),t)\nonumber\\ 
&~~~-\int_{c_4(s-\frac{x}{c_1})}^{x+c_4s}\frac{kc_1c_2c_4c_5}{c_6(c_1+c_4)^2}\mathrm{e}^{-\frac{c_1c_2}{c_1+c_4}x-\frac{c_2c_4}{c_1+c_4}(y+c_1s)}\nonumber\\
&~~~\times z(y,t)dy\nonumber\\      
&~~~-\frac{c_4c_5c_7}{c_6}\mathrm{e}^{-c_2x}v_x(c_{4}(s-\frac{x}{c_1}) ,t)\nonumber\\
&~~~-\frac{kc_4}{c_6}v_x(x+c_4s,t)\nonumber\\
&~~~-\frac{c_1c_5c_7(k+c_5c_7)}{c_6(c_1+c_4)}\mathrm{e}^{-c_2x} v(c_4(s-\frac{x}{c_1}),t)\nonumber\\
&~~~-\int_{0}^{\frac{x}{c_1}}c_1c_2\mathrm{e}^{-c_1c_2r}\psi_s(x-c_{1}r,s-r,t)dr\nonumber\\
&~~~+\int_{\frac{x}{c_1}}^sc_5c_7\mathrm{e}^{-c_2x}\psi_s(c_{4}(r-\frac{x}{c_1}),s-r,t)dr\nonumber\\
&~~~+c_5c_7\mathrm{e}^{-c_2x}\psi(c_{4}(s-\frac{x}{c_1}),0,t)\nonumber\\  
&~~~+\int_{0}^s k\psi_s(x+c_4r,s-r,t)dr\nonumber\\  
&~~~+ k\psi(x+c_4s,0,t)\nonumber\\
&~~~-\int_{0}^s \int_{\max\{x-c_1r,c_{4}(r-\frac{x}{c_1})\}}^{x+c_{4}r} \frac{kc_1c_2}{c_1+c_4}\mathrm{e}^{-\frac{c_1c_2}{c_1+c_4}(x-y+c_4r)}\nonumber\\
&~~~\times\psi_s(y,s-r,t)dydr\nonumber\\
&~~~-\int_{c_{4}(s-\frac{x}{c_1})}^{x+c_{4}s} \frac{kc_1c_2}{c_1+c_4}
\mathrm{e}^{-\frac{c_1c_2}{c_1+c_4}(x-y+c_4s)}\psi(y,0,t)dydr;
\end{align}
in the case of $c_1s < x \leq L-c_4s $,
\begin{align}
&\beta_s(x,s,t)=\psi_s(x,s,t)\nonumber\\
&~~~-\frac{c_1c_5}{c_6}\mathrm{e}^{-c_2x}z_x(x-c_1s,t)\nonumber\\
&~~~+\frac{kc_4c_5}{c_6(c_1+c_4)}\mathrm{e}^{-c_2(x+c_4s)}z(x+c_4s,t)\nonumber\\
&~~~+\frac{kc_1c_5}{c_6(c_1+c_4)}\mathrm{e}^{-c_2x}z(x-c_1s,t)\nonumber\\
&~~~-\int_{x-c_1s}^{x+c_4s}\frac{kc_1c_2c_4c_5}{c_6(c_1+c_4)^2}\mathrm{e}^{-\frac{c_1c_2}{c_1+c_4}x-\frac{c_2c_4}{c_1+c_4}(y+c_1s)}z(y,t)dy\nonumber\\
&~~~-\frac{kc_4}{c_6}v_x(x+c_4s,t)\nonumber\\
&~~~-\int_0^s c_1c_2\mathrm{e}^{-c_1c_2r} \psi_s(x-c_1r,s-r,t)dr\nonumber\\
&~~~-c_1c_2\mathrm{e}^{-c_1c_2s} \psi(x-c_1s,0,t)\nonumber\\
&~~~+\int_0^s k \psi_s(x+c_4r,s-r,t)dr\nonumber\\
&~~~+k \psi(x+c_4s,0,t)\nonumber\\
&~~~-\int_0^s \int_{x-c_1r}^{x+c_4r}\frac{kc_1c_2}{c_1+c_4}\mathrm{e}^{-\frac{c_1c_2}{c_1+c_4}(x-y+c_4r)} \nonumber\\
&~~~\times\psi_s(y,s-r,t)dydr\nonumber\\
&~~~-\int_{x-c_1s}^{x+c_4s}\frac{kc_1c_2}{c_1+c_4}\mathrm{e}^{-\frac{c_1c_2}{c_1+c_4}(x-y+c_4s)} \nonumber\\
&~~~\times\psi(y,0,t)dy; 
\end{align}
and in the case of  $L-c_4s < x \leq L$,
\begin{align}
&\beta_s(x,s,t)=\psi_s(x,s,t)\nonumber\\
&~~~-\frac{c_1c_5}{c_6}\mathrm{e}^{-c_2x}z_x(x-c_1s,t)\nonumber\\
&~~~-\frac{kc_1c_5}{c_6(c_1+c_4)}\mathrm{e}^{-c_2L}z(\frac{c_1+c_4}{c_4}L-\frac{c_1}{c_4}x-c_1s,t)\nonumber\\
&~~~+\frac{kc_1c_5}{c_6(c_1+c_4)}\mathrm{e}^{-c_2x}z(x-c_1s,t)\nonumber\\
&~~~-\int_{x-c_1s}^{\frac{c_1+c_4}{c_4}L-\frac{c_1}{c_4}x-c_1s}\frac{kc_1c_2c_4c_5}{c_6(c_1+c_4)^2}\nonumber\\
&~~~\times\mathrm{e}^{-\frac{c_1c_2}{c_1+c_4}x-\frac{c_2c_4}{c_1+c_4}(y+c_1s)}z(y,t)dy\nonumber\\
&~~~+\frac{kc_5}{c_6}\mathrm{e}^{c_2L}z(\frac{c_1+c_4}{c_4}L-\frac{c_1}{c_4}x-c_1s,t)\nonumber\\ 
&~~~-\int_0^s c_1c_2\mathrm{e}^{-c_1c_2r} \psi_s(x-c_1r,s-r,t)dr\nonumber\\
&~~~-c_1c_2\mathrm{e}^{-c_1c_2s} \psi(x-c_1s,0,t)\nonumber\\
&~~~+\int_0^{\frac{L-x}{c_4}} k \psi_s(x+c_4r,s-r,t)dr\nonumber\\
&~~~+\int_{\frac{L-x}{c_4}}^s k \psi_s(L,s-r,t)dr\nonumber\\
&~~~+ k \psi(L,0,t)dr\nonumber\\
&~~~-\int_0^s \int_{x-c_1r}^{\min\{x+c_4r,\frac{c_1+c_4}{c_4}L-\frac{c_1}{c_4}x-c_1r\}}\frac{kc_1c_2}{c_1+c_4}\nonumber\\
&~~~\times\mathrm{e}^{-\frac{c_1c_2}{c_1+c_4}(x-y+c_4r)} \psi_s(y,s-r,t)dydr\nonumber\\
&~~~-\int_{x-c_1s}^{\frac{c_1+c_4}{c_4}L-\frac{c_1}{c_4}x-c_1s}\frac{kc_1c_2}{c_1+c_4}\nonumber\\
&~~~\times\mathrm{e}^{-\frac{c_1c_2}{c_1+c_4}(x-y+c_4s)} \psi(y,0,t)dy\nonumber\\
&~~~-\int_{\frac{L-x}{c_4}}^s \int_{\frac{c_1+c_4}{c_4}L-\frac{c_1}{c_4}x-c_1r}^{L} kc_2\mathrm{e}^{-c_2L} \psi_s(y,s-r,t)dydr\nonumber\\
&~~~-\int_{\frac{c_1+c_4}{c_4}L-\frac{c_1}{c_4}x-c_1s}^{L} kc_2\mathrm{e}^{-c_2L} \psi(y,0,t)dy.\\
&~~~~~~~~\nonumber
\end{align}

Then, differentiating \eqref{trans-beta} with respect of $t$, substituting \eqref{equ-z}, \eqref{equ-v}, \eqref{equ-bnd-v} and \eqref{equ-psi}  to the time derivative of  \eqref{trans-beta}, and   using  integration by parts, we have $\beta_t$ in three cases as follows:
in the case of    $0\le x < c_1s$,
\begin{align}
&\beta_t(x,s,t)=\psi_t(x,s,t)\nonumber\\
&~~~-\frac{c_1c_5(k+c_5c_7)}{c_6(c_1+c_4)}\mathrm{e}^{-c_2(c_4s+\frac{c_1-c_4}{c_1}x)}z(c_4(s-\frac{x}{c_1}),t)\nonumber\\
&~~~+\frac{c_1c_5(k+c_5c_7)}{c_6(c_1+c_4)}\mathrm{e}^{-c_2x}z(0,t)\nonumber\\
&~~~-\int_{0}^{c_4(s-\frac{x}{c_1})}\frac{c_1c_2c_5(k+c_5c_7)}{c_6(c_1+c_4)}\mathrm{e}^{-c_2(x+y)}z(y,t)dy\nonumber\\
&~~~-\int_{0}^{c_4(s-\frac{x}{c_1})}\frac{c_1c_2(k+c_5c_7)}{c_1+c_4}\mathrm{e}^{-c_2x}\psi(y,0,t)dy\nonumber\\
&~~~-\frac{kc_1c_5}{c_6(c_1+c_4)}\mathrm{e}^{-c_2(x+c_4s)}z(x+c_4s,t)\nonumber\\
&~~~+\frac{kc_1c_5}{c_6(c_1+c_4)}\mathrm{e}^{\frac{c_2(c_4-c_1)}{c_1}x-c_2c_4s}z(c_4(s-\frac{x}{c_1}),t)\nonumber\\     
&~~~-\int_{c_4(s-\frac{x}{c_1})}^{x+c_4s}\frac{kc_1c_2c_4c_5}{c_6(c_1+c_4)^2}\mathrm{e}^{-\frac{c_1c_2}{c_1+c_4}x-\frac{c_2c_4}{c_1+c_4}(y+c_1s)}\nonumber\\
&~~~\times z(y,t)dy\nonumber\\  
&~~~-\int_{c_4(s-\frac{x}{c_1})}^{x+c_4s} \frac{kc_1c_2}{c_1+c_4}\mathrm{e}^{-\frac{c_1c_2}{c_1+c_4}(x-y+c_4s)}\nonumber\\
&~~~\times\psi(y,0,t)dy\nonumber\\
&~~~-\frac{c_4c_5c_7}{c_6}\mathrm{e}^{-c_2x}v_x(c_{4}(s-\frac{x}{c_1}) ,t)\nonumber\\
&~~~+\frac{c_5^2c_7}{c_6}\mathrm{e}^{-c_2(c_4s+\frac{c_1-c_4}{c_1}x)}z(c_4(s-\frac{x}{c_1}),t)\nonumber\\
&~~~+c_5c_7\mathrm{e}^{-c_2x}\psi(c_{4}(s-\frac{x}{c_1}),0,t)\nonumber\\
&~~~-\frac{kc_4}{c_6}v_x(x+c_4s,t)\nonumber\\
&~~~+\frac{kc_5}{c_6}\mathrm{e}^{-c_2(x+c_4s)}z(x+c_4s,t)\nonumber\\
&~~~+ k\psi(x+c_4s,0,t)\nonumber\\
&~~~-\frac{c_1c_5c_7(k+c_5c_7)}{c_6(c_1+c_4)}\mathrm{e}^{-c_2x} v(c_4(s-\frac{x}{c_1}),t)\nonumber\\
&~~~+\frac{c_1c_5c_7(k+c_5c_7)}{c_6(c_1+c_4)}\mathrm{e}^{-c_2x} v(0,t)\nonumber\\
&~~~+\int_{0}^{c_4(s-\frac{x}{c_1})}\frac{c_1c_2c_5(k+c_5c_7)}{c_6(c_1+c_4)}\mathrm{e}^{-c_2(x+y)}z(y,t)dy\nonumber\\
&~~~+\int_{0}^{c_4(s-\frac{x}{c_1})}\frac{c_1c_5c_7(k+c_5c_7)}{c_4(c_1+c_4)}\mathrm{e}^{-c_2x}\psi(y,0,t)dy\nonumber\\
&~~~-\int_{0}^{\frac{x}{c_1}}c_1c_2\mathrm{e}^{-c_1c_2r}\psi_t(x-c_{1}r,s-r,t)dr\nonumber\\
&~~~+\int_{\frac{x}{c_1}}^sc_5c_7\mathrm{e}^{-c_2x}\psi_t(c_{4}(r-\frac{x}{c_1}),s-r,t)dr\nonumber\\
&~~~+\int_{0}^s k\psi_t(x+c_4r,s-r,t)dr\nonumber\\  
&~~~-\int_{0}^s \int_{\max\{x-c_1r,c_{4}(r-\frac{x}{c_1})\}}^{x+c_{4}r} \frac{kc_1c_2}{c_1+c_4}\mathrm{e}^{-\frac{c_1c_2}{c_1+c_4}(x-y+c_4r)}\nonumber\\
&~~~\times\psi_t(y,s-r,t)dydr;
\end{align}
in the case of  $c_1s < x \leq L-c_4s $,
\begin{align}
&\beta_t(x,s,t)=\psi_t(x,s,t)\nonumber\\
&~~~-\frac{c_1c_5}{c_6}\mathrm{e}^{-c_2x}z_x(x-c_1s,t)\nonumber\\
&~~~-c_1c_2\mathrm{e}^{-c_1c_2s}\psi(x-c_1s,0,t)\nonumber\\
&~~~-\frac{kc_1c_5}{c_6(c_1+c_4)}\mathrm{e}^{-c_2(x+c_4s)}z(x+c_4s,t)\nonumber\\
&~~~+\frac{kc_1c_5}{c_6(c_1+c_4)}\mathrm{e}^{-c_2x}z(x-c_1s,t)\nonumber\\
&~~~-\int_{x-c_1s}^{x+c_4s}\frac{kc_1c_2c_4c_5}{c_6(c_1+c_4)^2}\mathrm{e}^{-\frac{c_1c_2}{c_1+c_4}x-\frac{c_2c_4}{c_1+c_4}(y+c_1s)}z(y,t)dy\nonumber\\
&~~~-\int_{x-c_1s}^{x+c_4s}\frac{kc_1c_2}{c_1+c_4}\mathrm{e}^{-\frac{c_1c_2}{c_1+c_4}(x-y+c_4s)} \nonumber\\
&~~~\times\psi(y,0,t)dy\nonumber\\
&~~~-\frac{kc_4}{c_6}v_x(x+c_4s,t)\nonumber\\
&~~~+\frac{kc_5}{c_6}\mathrm{e}^{-c_2(x+c_4s)}z(x+c_4s,t)\nonumber\\
&~~~+k \psi(x+c_4s,0,t)\nonumber\\
&~~~-\int_0^s c_1c_2\mathrm{e}^{-c_1c_2r} \psi_t(x-c_1r,s-r,t)dr\nonumber\\
&~~~+\int_0^s k \psi_t(x+c_4r,s-r,t)dr\nonumber\\
&~~~-\int_0^s \int_{x-c_1r}^{x+c_4r}\frac{kc_1c_2}{c_1+c_4}\mathrm{e}^{-\frac{c_1c_2}{c_1+c_4}(x-y+c_4r)} \nonumber\\
&~~~\times\psi_t(y,s-r,t)dydr;
\end{align}
in the case of  $L-c_4s < x \leq L$,
\begin{align} &\beta_t(x,s,t)=\psi_t(x,s,t)\nonumber\\
&~~~-\frac{c_1c_5}{c_6}\mathrm{e}^{-c_2x}z_x(x-c_1s,t)\nonumber\\
&~~~-c_1c_2\mathrm{e}^{-c_1c_2s} \psi(x-c_1s,0,t)\nonumber\\
&~~~-\frac{kc_1c_5}{c_6(c_1+c_4)}\mathrm{e}^{-c_2L}z(\frac{c_1+c_4}{c_4}L-\frac{c_1}{c_4}x-c_1s,t)\nonumber\\
&~~~+\frac{kc_1c_5}{c_6(c_1+c_4)}\mathrm{e}^{-c_2x}z(x-c_1s,t)\nonumber\\
&~~~-\int_{x-c_1s}^{\frac{c_1+c_4}{c_4}L-\frac{c_1}{c_4}x-c_1s}\frac{kc_1c_2c_4c_5}{c_6(c_1+c_4)^2}\nonumber\\
&~~~\times\mathrm{e}^{-\frac{c_1c_2}{c_1+c_4}x-\frac{c_2c_4}{c_1+c_4}(y+c_1s)}z(y,t)dy\nonumber\\
&~~~-\int_{x-c_1s}^{\frac{c_1+c_4}{c_4}L-\frac{c_1}{c_4}x-c_1s}\frac{kc_1c_2}{c_1+c_4}\nonumber\\
&~~~\times\mathrm{e}^{-\frac{c_1c_2}{c_1+c_4}(x-y+c_4s)} \psi(y,0,t)dy\nonumber\\
&~~~-\frac{kc_5}{c_6}\mathrm{e}^{c_2L}z(L,t)\nonumber\\ 
&~~~+\frac{kc_5}{c_6}\mathrm{e}^{c_2L}z(\frac{c_1+c_4}{c_4}L-\frac{c_1}{c_4}x-c_1s,t)\nonumber\\ 
&~~~-\int_{\frac{c_1+c_4}{c_4}L-\frac{c_1}{c_4}x-c_1s}^{L} kc_2\mathrm{e}^{-c_2L} \psi(y,0,t)dy\nonumber\\ 
&~~~+\frac{kc_5}{c_6}\mathrm{e}^{c_2L}z(L,t)\nonumber\\  
&~~~+ k \psi(L,0,t)dr\nonumber\\   
&~~~-\int_0^s c_1c_2\mathrm{e}^{-c_1c_2r} \psi_t(x-c_1r,s-r,t)dr\nonumber\\
&~~~+\int_0^{\frac{L-x}{c_4}} k \psi_t(x+c_4r,s-r,t)dr\nonumber\\
&~~~+\int_{\frac{L-x}{c_4}}^s k \psi_t(L,s-r,t)dr\nonumber\\
&~~~-\int_0^s \int_{x-c_1r}^{\min\{x+c_4r,\frac{c_1+c_4}{c_4}L-\frac{c_1}{c_4}x-c_1r\}}\frac{kc_1c_2}{c_1+c_4}\nonumber\\
&~~~\times\mathrm{e}^{-\frac{c_1c_2}{c_1+c_4}(x-y+c_4r)} \psi_t(y,s-r,t)dydr\nonumber\\
&~~~-\int_{\frac{L-x}{c_4}}^s \int_{\frac{c_1+c_4}{c_4}L-\frac{c_1}{c_4}x-c_1r}^{L} kc_2\mathrm{e}^{-c_2L} \psi_t(y,s-r,t)dydr.
\end{align}
For each case,  one can reach  $\beta_t(x,s,t)-\beta_s(x,s,t)=0$.
In order to get    \eqref{equ-z1}, \eqref{equ-v1} and  \eqref{equ-bnd-v1} of the target system, we get the  relation between $\psi$ and $\beta$ at  $s=0$     via  transformation     \eqref{trans-beta} as follows: 
\begin{align}\label{beta0-psi0}
\psi(x,0,t)=\beta(x,0,t)-\frac{c_5}{c_6}
\mathrm{e}^{-c_2x}z(x,t)+\frac{k}{c_6}v(x,t).
\end{align} 
Substitute \eqref{beta0-psi0} into  
\eqref{equ-z}, \eqref{equ-v} and \eqref{equ-bnd-v}, respectively, which gives   \eqref{equ-z1}, \eqref{equ-v1} and  \eqref{equ-bnd-v1}. Hence, the transformation \eqref{trans-beta}   can transform the original system \eqref{equ-z}-\eqref{equ-bnd-psi} into the  target system (\ref{equ-z1})-(\ref{equ-bnd-beta}).

\section*{Acknowledgment}
The authors thank Dr. Nikolaos Bekiaris-Liberis for his valuable suggestions on the paper. 

\ifCLASSOPTIONcaptionsoff
\newpage
\fi

\bibliographystyle{plain}
\bibliography{bibfile}   

\begin{thebibliography}{10}

\bibitem{anfinsen2019adaptive}
H.~Anfinsen and O.~Aamo.
\newblock {\em Adaptive control of hyperbolic {PDEs}}.
\newblock Springer, 2019.

\bibitem{auriol2018delay}
J.~Auriol, U.~J.~F. Aarsnes, P.~Martin, and F.~Di~Meglio.
\newblock Delay-robust control design for two heterodirectional linear coupled
  hyperbolic {PDEs}.
\newblock {\em IEEE Transactions on Automatic Control}, 63(10):3551--3557,
  2018.

\bibitem{aw2000resurrection}
A.~Aw and M.~Rascle.
\newblock Resurrection of {"second order"} models of traffic flow.
\newblock {\em SIAM journal on applied mathematics}, 60(3):916--938, 2000.

\bibitem{9018188}
N.~Bekiaris-Liberis and A.~I. Delis.
\newblock {PDE-based} feedback control of freeway traffic flow via time-gap
  manipulation of {ACC-equipped} vehicles.
\newblock {\em IEEE Transactions on Control Systems Technology},
  29(1):461--469, 2021.

\bibitem{bekiaris2017highway}
N.~Bekiaris-Liberis, C.~Roncoli, and M.~Papageorgiou.
\newblock Highway traffic state estimation per lane in the presence of
  connected vehicles.
\newblock {\em Transportation research part B: methodological}, 106:1--28,
  2017.

\bibitem{bekiaris2018compensation}
Nikolaos Bekiaris-Liberis and Miroslav Krstic.
\newblock Compensation of actuator dynamics governed by quasilinear hyperbolic
  pdes.
\newblock {\em Automatica}, 92:29--40, 2018.

\bibitem{burger2019derivation}
M.~Burger, S.~G{\"o}ttlich, and T.~Jung.
\newblock Derivation of second order traffic flow models with time delays.
\newblock {\em Networks \& Heterogeneous Media}, 14(2):265, 2019.

\bibitem{chen2001causes}
C.~Chen, Z.~Jia, and P.~Varaiya.
\newblock Causes and cures of highway congestion.
\newblock {\em IEEE Control Systems Magazine}, 21(6):26--32, 2001.

\bibitem{darbha1999intelligent}
S.~Darbha and K.~Rajagopal.
\newblock Intelligent cruise control systems and traffic flow stability.
\newblock {\em Transportation Research Part C: Emerging Technologies},
  7(6):329--352, 1999.

\bibitem{delis2015macroscopic}
A.~I. Delis, I.~K. Nikolos, and M.~Papageorgiou.
\newblock Macroscopic traffic flow modeling with adaptive cruise control:
  Development and numerical solution.
\newblock {\em Computers \& Mathematics with Applications}, 70(8):1921--1947,
  2015.

\bibitem{delle2017traffic}
M.~Laura Delle~Monache, B.~Piccoli, and F.~Rossi.
\newblock Traffic regulation via controlled speed limit.
\newblock {\em SIAM Journal on Control and Optimization}, 55(5):2936--2958,
  2017.

\bibitem{diakaki2015overview}
C.~Diakaki, M.~Papageorgiou, I.~Papamichail, and I.~Nikolos.
\newblock Overview and analysis of vehicle automation and communication systems
  from a motorway traffic management perspective.
\newblock {\em Transportation Research Part A: Policy and Practice},
  75:147--165, 2015.

\bibitem{fountoulakis2017highway}
M.~Fountoulakis, N.~Bekiaris-Liberis, C.~Roncoli, I.~Papamichail, and
  M.~Papageorgiou.
\newblock Highway traffic state estimation with mixed connected and
  conventional vehicles: Microscopic simulation-based testing.
\newblock {\em Transportation Research Part C: Emerging Technologies},
  78:13--33, 2017.

\bibitem{hashimoto2016stabilization}
T.~Hashimoto and M.~Krstic.
\newblock Stabilization of reaction diffusion equations with state delay using
  boundary control input.
\newblock {\em IEEE Transactions on Automatic Control}, 61(12):4041--4047,
  2016.

\bibitem{jin2014dynamics}
I.~G. Jin and G.~Orosz.
\newblock Dynamics of connected vehicle systems with delayed acceleration
  feedback.
\newblock {\em Transportation Research Part C: Emerging Technologies},
  46:46--64, 2014.

\bibitem{karafyllis2018feedback}
I.~Karafyllis, N.~Bekiaris-Liberis, and M.~Papageorgiou.
\newblock Feedback control of nonlinear hyperbolic {PDE} systems inspired by
  traffic flow models.
\newblock {\em IEEE Transactions on Automatic Control}, 64(9):3647--3662, 2019.

\bibitem{kolb2017capacity}
O.~Kolb, S.~G{\"o}ttlich, and P.~Goatin.
\newblock Capacity drop and traffic control for a second order traffic model.
\newblock {\em Networks \& Heterogeneous Media}, 12(4):663, 2017.

\bibitem{krstic2009control}
M.~Krstic.
\newblock Control of an unstable reaction-diffusion {PDE} with long input
  delay.
\newblock {\em Systems \& Control Letters}, 58(10-11):773--782, 2009.

\bibitem{lhachemi2021robustness}
H.~Lhachemi, C.~Prieur, and R.~Shorten.
\newblock Robustness of constant-delay predictor feedback for in-domain
  stabilization of reaction--diffusion {PDE}s with time-and spatially-varying
  input delays.
\newblock {\em Automatica}, 123:109347, 2021.

\bibitem{ngoduy2013instability}
D.~Ngoduy.
\newblock Instability of cooperative adaptive cruise control traffic flow: A
  macroscopic approach.
\newblock {\em Communications in Nonlinear Science and Numerical Simulation},
  18(10):2838--2851, 2013.

\bibitem{nowakowski2011cooperative}
C.~Nowakowski, S.~E. Shladover, D.~Cody, F.~Bu, J.~O'Connell, J.~Spring,
  S.~Dickey, and D.~Nelson.
\newblock Cooperative adaptive cruise control: Testing drivers' choices of
  following distances.
\newblock Technical report, 2011.

\bibitem{prieur2018feedback}
C.~Prieur and E.~Tr{\'e}lat.
\newblock Feedback stabilization of a 1-d linear reaction--diffusion equation
  with delay boundary control.
\newblock {\em IEEE Transactions on Automatic Control}, 64(4):1415--1425, 2019.

\bibitem{qi2021output}
J.~Qi, S.~Dubljevic, and W.~Kong.
\newblock Output feedback compensation to state and measurement delays for a
  first-order hyperbolic {PIDE} with recycle.
\newblock {\em Automatica}, 128:109565, 2021.

\bibitem{qi2019stabilization}
J.~Qi, M.~Krstic, and S.~Wang.
\newblock Stabilization of reaction-diffusions {PDE} with delayed distributed
  actuation.
\newblock {\em Systems \& Control Letters}, 133:104558, 2019.

\bibitem{qi2019control}
J.~Qi, S.~Wang, J.~Fang, and M.~Diagne.
\newblock Control of multi-agent systems with input delay via {PDE-based}
  method.
\newblock {\em Automatica}, 106:91--100, 2019.

\bibitem{sano2019boundary}
H.~Sano and M.~Wakaiki.
\newblock Boundary stabilization of first-order hyperbolic equations with input
  delay.
\newblock {\em Japan Journal of Industrial and Applied Mathematics},
  36(2):325--355, 2019.

\bibitem{selivanov2018delayed}
A.~Selivanov and E.~Fridman.
\newblock Delayed point control of a reaction-diffusion {PDE} under
  discrete-time point measurements.
\newblock {\em Automatica}, 96:224--233, 2018.

\bibitem{selivanov2019delayed}
A.~Selivanov and E.~Fridman.
\newblock Delayed ${H}_{\infty}$ control of {2D} diffusion systems under
  delayed pointlike measurements.
\newblock {\em Automatica}, 109:108541, 2019.

\bibitem{stern2018dissipation}
R.~E. Stern, S.~Cui, Maria~L. Delle~M., R.~Bhadani, M.~Bunting, M.~Churchill,
  N.~Hamilton, H.~Pohlmann, F.~Wu, B.~Piccoli, et~al.
\newblock Dissipation of stop-and-go waves via control of autonomous vehicles:
  Field experiments.
\newblock {\em Transportation Research Part C: Emerging Technologies},
  89:205--221, 2018.

\bibitem{treiber2014traffic}
M.~Treiber and A.~Kesting.
\newblock {\em {Traffic Flow Dynamics}-{Data, Models} and {Simulation}}.
\newblock Berlin, Germany:Springer, 2013.

\bibitem{yi2006macroscopic}
J.~Yi and R.~Horowitz.
\newblock Macroscopic traffic flow propagation stability for adaptive cruise
  controlled vehicles.
\newblock {\em Transportation Research Part C: Emerging Technologies},
  14(2):81--95, 2006.

\bibitem{yu2020bilateral}
H.~Yu, M.~Diagne, L.~Zhang, and M.~Krstic.
\newblock Bilateral boundary control of moving shockwave in lwr model of
  congested traffic.
\newblock {\em IEEE Transactions on Automatic Control}, 66(3):1429--1436, 2021.

\bibitem{yu2018stabilization}
H.~Yu, S.~Koga, and M.~Krstic.
\newblock Stabilization of traffic flow with a leading autonomous vehicle.
\newblock In {\em Dynamic Systems and Control Conference}, volume 51906, page
  V002T22A006. American Society of Mechanical Engineers, 2018.

\bibitem{yu2018traffic}
H.~Yu and M.~Krstic.
\newblock Traffic congestion control on two-lane {Aw-Rascle-Zhang} model.
\newblock In {\em 2018 IEEE Conference on Decision and Control (CDC)}, pages
  2144--2149. IEEE, 2018.

\bibitem{yu2019traffic}
H.~Yu and M.~Krstic.
\newblock Traffic congestion control for {Aw-Rascle-Zhang} model.
\newblock {\em Automatica}, 100:38--51, 2019.

\bibitem{zhang2002non}
H.~M. Zhang.
\newblock A non-equilibrium traffic model devoid of gas-like behavior.
\newblock {\em Transportation Research Part B: Methodological}, 36(3):275--290,
  2002.

\bibitem{zhang2017necessary}
L.~Zhang and C.~Prieur.
\newblock Necessary and sufficient conditions on the exponential stability of
  positive hyperbolic systems.
\newblock {\em IEEE Transactions on Automatic Control}, 62(7):3610--3617, 2017.

\bibitem{zhang2019pi}
L.~Zhang, C.~Prieur, and J.~Qiao.
\newblock {PI} boundary control of linear hyperbolic balance laws with
  stabilization of {ARZ} traffic flow models.
\newblock {\em Systems \& Control Letters}, 123:85--91, 2019.

\end{thebibliography}

	\vspace{-2mm}
\begin{IEEEbiography}
[{\includegraphics
[width=1in,height=1.25in,clip,keepaspectratio]{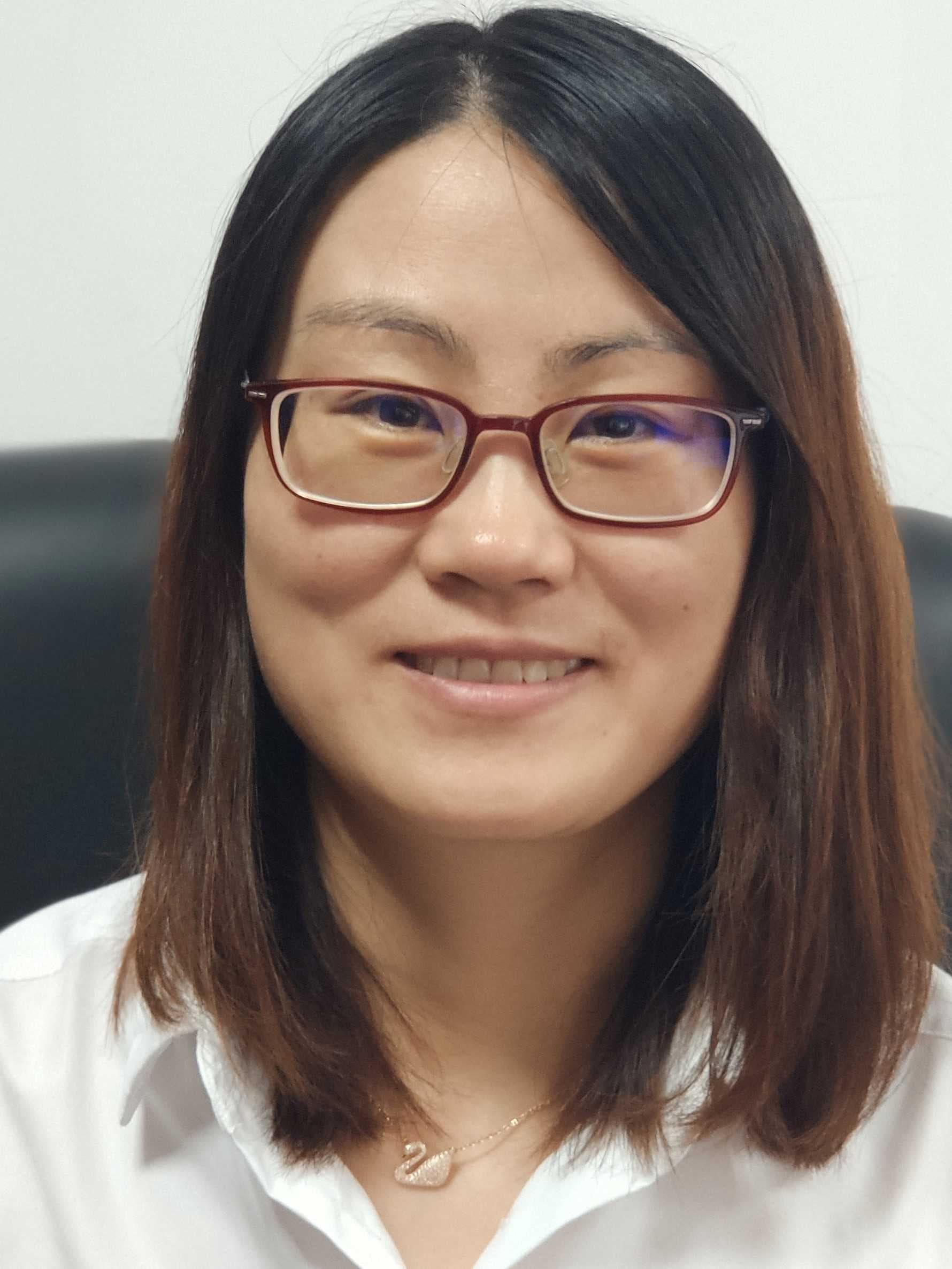}}]
{Jie Qi}
is Professor of Automation Department at the Donghua University, China. She received her Ph.D. degree in Systems Engineering (2005) and the B.S. degree in Automation (2000) from Northeastern University in Shenyang, China.   She  has been a research fellow with the Institute of Textiles \& Clothing, the Hong Kong Polytechnic University, Hong Kong from 2007 to 2008;  a visiting researcher with  the Cymer Center for Control Systems and Dynamics at the University of California, San Diego, from March 2013 to February 2014 and from June to September in 2015; and a visiting  researcher  with  the Chemical and Materials Engineering Department at the University of Alberta, from January  2019 to January  2020.  Her research interests include  control and estimation  of distributed parameters systems, control of delayed systems and its applications on  multi-agent systems and traffic systems.
	\vspace{-8mm}
\end{IEEEbiography}

\begin{IEEEbiography}
	[{\includegraphics
[width=1in,height=1.22in,clip,keepaspectratio]{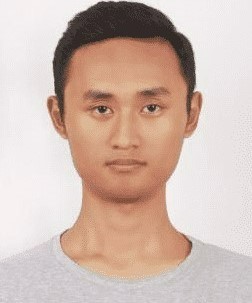}}]
{Shurong Mo}
received his B.S. degree  in Automation from Donghua University in 2018. He participated in the National Undergraduate Electronic Design Contest and won the third prize, in 2017. He is currently pursuing  his M.S. degree in Control Science and Control Engineering at Donghua University. His research interests include control of delayed systems, reinforcement learning-based control of traffic systems.  
	\vspace{-7mm}
\end{IEEEbiography}

\begin{IEEEbiography}[{\includegraphics
[width=1in,height=1.25in,clip,keepaspectratio]{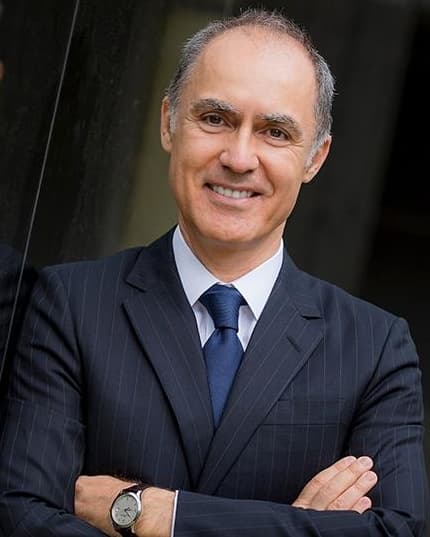}}]
{Miroslav Krstic}
is Distinguished Professor of Mechanical and Aerospace Engineering, holds the Alspach endowed chair, and is the founding director of the Cymer Center for Control Systems and Dynamics at UC San Diego. He also serves as Senior Associate Vice Chancellor for Research at UCSD. As a graduate student, Krstic won the UC Santa Barbara best dissertation award and student best paper awards at CDC and ACC. Krstic has been elected Fellow of seven scientific societies - IEEE, IFAC, ASME, SIAM, AAAS, IET (UK), and AIAA (Assoc. Fellow) - and as a foreign member of the Serbian Academy of Sciences and Arts and of the Academy of Engineering of Serbia. He has received the Richard E. Bellman Control Heritage Award, SIAM Reid Prize, ASME Oldenburger Medal, Nyquist Lecture Prize, Paynter Outstanding Investigator Award, Ragazzini Education Award, IFAC Nonlinear Control Systems Award, Chestnut textbook prize, Control Systems Society Distinguished Member Award, the PECASE, NSF Career, and ONR Young Investigator awards, the Schuck (’96 and ’19) and Axelby paper prizes, and the first UCSD Research Award given to an engineer. Krstic has also been awarded the Springer Visiting Professorship at UC Berkeley, the Distinguished Visiting Fellowship of the Royal Academy of Engineering, the Invitation Fellowship of the Japan Society for the Promotion of Science, and four honorary professorships outside of the United States. He serves as Editor-in-Chief of Systems \& Control Letters and has been serving as Senior Editor in Automatica and IEEE Transactions on Automatic Control, as editor of two Springer book series, and has served as Vice President for Technical Activities of the IEEE Control Systems Society and as chair of the IEEE CSS Fellow Committee. Krstic has coauthored fifteen books on adaptive, nonlinear, and stochastic control, extremum seeking, control of PDE systems including turbulent flows, and control of delay systems.

\end{IEEEbiography}

\end{document}